\newcommand{\cA}{\mathcal{A}}
\newcommand{\cB}{\mathcal{B}}
\newcommand{\cD}{\mathcal{D}}
\newcommand{\cF}{\mathcal{F}}
\newcommand{\cG}{\mathcal{G}}
\newcommand{\cH}{\mathcal{H}}
\newcommand{\cM}{\mathcal{M}}
\newcommand{\cP}{\mathcal{P}}
\newcommand{\cQ}{\mathcal{Q}}
\newcommand{\cS}{\mathcal{S}}
\newcommand{\cU}{\mathcal{U}}
\newcommand{\rC}{\mathrm{C}}
\newcommand{\rD}{\mathrm{D}}
\newcommand{\rF}{\mathrm{F}}
\newcommand{\rH}{\mathrm{H}}
\newcommand{\rK}{\mathrm{K}}
\newcommand{\rL}{\mathrm{L}}
\newcommand{\rN}{\mathrm{N}}
\newcommand{\rT}{\mathrm{T}}
\newcommand{\rd}{\mathrm{d}}
\newcommand{\rg}{\mathrm{g}}
\newcommand{\rh}{\mathrm{h}}
\newcommand{\rr}{\mathrm{r}}
\newcommand{\rt}{\mathrm{t}}
\newcommand{\ru}{\mathrm{u}}
\newcommand{\sA}{\mathscr{A}}
\newcommand{\sP}{\mathscr{P}}
\newcommand{\bC}{\mathbb{C}}
\newcommand{\bM}{\mathbb{M}}
\newcommand{\bN}{\mathbb{N}}
\newcommand{\bZ}{\mathbb{Z}}
\newcommand{\Si}{\Sigma} 
\newcommand{\si}{\sigma}
\newcommand{\io}{\iota}
\title{\textsc{Causal posets, loops and the construction of nets of local algebras for QFT}} 
\author{\textsc{Fabio Ciolli}$^{a}$ \ \   \textsc{Giuseppe Ruzzi}$^{a}$ \ \    \textsc{Ezio Vasselli}$^{b}$\footnote{All the authors are supported in part by the  EU network ``Noncommutative Geometry" MRTN-CT-2006-0031962. 
F.C. is supported by the ERC Advanced Grant 227458 OACFT ``Operator Algebras and Conformal Field Theory".}\\[5pt]
\small{$^{a}$  Dipartimento di Matematica, Universit\`a di Roma ``Tor Vergata'',}\\
\small{Via della Ricerca Scientifica 1, I-00133 Roma,  Italy.}  \\
\small{\texttt{ciolli@mat.uniroma2.it}, \texttt{ruzzi@mat.uniroma2.it}} \\[5pt]
\small{$^{b}$Dipartimento di Matematica, Universit\`a di Roma ``La Sapienza'',}\\
\small{Piazzale Aldo Moro 5, I-00185 Roma, Italy.} \\
\small{\texttt{ezio.vasselli@gmail.com}}}
\date{}
\begin{document}

\maketitle

\begin{center}
\emph{Dedicated to the memory of Claudio D'Antoni}
\end{center}

\vspace{12pt}

\begin{abstract}
We provide a model independent construction of a net 
of $\rC^*$-algebras satisfying the Haag-Kastler axioms 
over any spacetime manifold. Such a net, called \emph{the net of causal loops},  
is constructed by selecting a suitable base $K$ 
encoding causal and symmetry properties of the spacetime. 
Considering $K$ as a partially ordered set 
(poset) with respect to the inclusion order relation,  we  define  groups of closed paths (loops)
formed by the elements of $K$. These groups come equipped  with a  causal disjointness relation and  an action of the symmetry group of the spacetime. In this way the local algebras 
of the net are  the group $\rC^*$-algebras of the groups of loops, quotiented  by the causal disjointness relation. We also provide a geometric interpretation of a class of representations of this  net   in terms of  causal and covariant connections of the poset $K$.  
In the case of the Minkowski spacetime, we prove the existence of Poincar\'e covariant representations  satisfying the spectrum condition. 
This is obtained by virtue of a remarkable feature of our construction: any Hermitian scalar quantum field defines causal and covariant connections of $K$.
Similar results hold for the chiral spacetime $S^1$ with conformal symmetry.
\end{abstract}

\newpage 

\tableofcontents
\markboth{Contents}{Contents}


  \theoremstyle{plain}
  \newtheorem{definition}{Definition}[section]
  \newtheorem{theorem}[definition]{Theorem}
  \newtheorem{proposition}[definition]{Proposition}
  \newtheorem{corollary}[definition]{Corollary}
  \newtheorem{lemma}[definition]{Lemma}

  \theoremstyle{definition}
  \newtheorem{remark}[definition]{Remark}
    \newtheorem{example}[definition]{Example}

\theoremstyle{definition}
  \newtheorem{ass}{\underline{\textit{Assumption}}}[section]


\numberwithin{equation}{section}

\newpage

\section{Introduction}
\label{I}

Quantum field theory is nowadays a  framework  for  understanding  the physics of relativistic quantum systems, notably 
elementary particles,  and the effects of gravitation on quantum systems, at least in the approximation in which  the gravitational  
field can be treated as a background field.  Several mathematical approaches are appropriate for this aim: 
operator valued distributions; probability measures on the space of distributions;  
causal and covariant nets of operator algebras. 
All these approaches, distinguished from others because non-perturbative, 
are strictly related and result to be,  in the Minkowski spacetime  and under suitable assumptions, equivalent 
(see \cite{GJ} and \cite{Buc,Bos}).\\   
\indent The mathematically rigorous construction of quantum fields is object of research since the middle fifties. Many important 
ideas have been developed throughout these years although the main aim, i.e.\  the construction of a non-perturbative interacting quantum field in the 4-dimensional  Minkowski spacetime, is still an open problem.  In the present paper we give a novel construction of quantum fields which applies on any 
meaningful spacetime. We shall work in the context of the algebraic quantum field theory (AQFT), 
the last approach  among those  listed above.\\
\indent The main idea of AQFT \cite{Haa} is that the  physical content of a theory is encoded in the \emph{observable net}: 
the correspondence between  a suitable family $K$ of bounded 
regions of the spacetime  and the  $\rC^*$-algebras  generated by  observables measurable within the elements of $K$. 
This correspondence is assumed to preserve the inclusion of regions, to satisfy Einstein causality 
and to be covariant with respect to the symmetry group of the spacetime  (Haag-Kastler axioms). In the Minkowski spacetime, the just described scenario turns out to be appropriate  for understanding, in term of superselection sectors of the observable net, the charge structure and the statistics of particle physics \cite{DHR, BF} as well as the presence of \emph{a global} gauge group acting  upon  a net of unobservable quantum fields \cite{DR}\footnote{Charges of electromagnetic type do not fit in superselection sectors analyzed \cite{DHR, BR,DR}. Progresses in this direction should  appear soon \cite{DopConf}.}.\\ 
\indent It is worth pointing out the relevance of the set of regions $K$ of the spacetime: this is usually called \emph{the set of indices} of the observable net and is took to be a base of the topology of the spacetime itself. It is also required  that it may codifies the  symmetry and the causal structure of the spacetime. Examples of set of indices $K$ are double cones for the Minkowski spacetime; open intervals for the circle $S^1$; diamonds for an  arbitrary globally hyperbolic spacetime. The relevance of $K$, 
as a partially ordered set (\emph{poset}) with respect to the inclusion order relation,  has been strengthened by the formulation of 
the theory of superselection sectors in terms of the non-Abelian cohomology of $K$ (see \cite{Rob} for a review); this also allowed to extend the analysis of sectors to curved spacetimes \cite{GLRV, Ruz} and to introduce a generalized notion of sector associated to 
invariants of the spacetime described by the poset $K$ \cite{BR}.

Concerning the construction of models in the AQFT framework,  apart from the well known approach of the Weyl algebras,   i.e.\
a version of canonical quantization,  it is worth pointing out the technique of modular localization \cite{BGL} 
(based on the ideas in \cite{Sch}) allowing to construct a causal and covariant net of operator algebras  
from a given representation of the Poincar\'e group.
However,  important progresses, especially towards the construction of interacting theories, have recently been achieved 
 by the introduction of a new approach based on   operator algebras   deformation \cite{GL} (see also  \cite{BLS}).   
Finally, it is worth mentioning  the abstract construction of Borchers-Uhlmann \cite{Bor,Uhl}  
where a causal and covariant net of  $^*$-algebras over the Minkowski spacetime is constructed using merely 
test functions,   and the properties of the localization of their supports.   A remarkable feature of this method is
that  quantum fields satisfying the Wightmann axioms  
are in bijective correspondence with a class of positive linear functionals of this net.  
\smallskip 

Our work is motivated by quantum gauge theories, a scenario where meaningful quantities  are obtained  evaluating 
fields over paths.
The aim of the present paper is to construct
a causal and covariant net of $\rC^*$-algebras over  any spacetime,  using paths as the main ingredient.
This is done in a model independent way, regardless of any specific theory of quantum field over the spacetime itself. 
In this way, we give a new construction of quantum field by using as input merely the spacetime
(to be precise, the base $K$) and  its causal and symmetry structure;
at the same time,  we hope to capture some structural properties
of any reasonable formulation of a quantum gauge theory in the framework of algebraic quantum field theory.\\
\indent The notion of path that we use here is that used in the context of abstract posets \cite{Rob,Rob2}. 
Given a  base $K$ of a spacetime ordered under inclusion, a path in $K$ can be figured out  
as  an approximation of a path $\gamma$, intended in the usual topological way, 
by means of a finite sequence of elements of  $K$ covering it. 
%
This notion of path is good enough to capture nontrivial invariants 
of the spacetime, as the fundamental group \cite{Ruz} and the category of flat bundles \cite{RRV};
moreover,  paths are the key for the definition of connection over a poset \cite{RR}, in essence a group valued map defined on the set of paths of $K$, 
designed to give a version of the basic structures of gauge theory in the setting of AQFT.\smallskip

The first step of our construction is made in Sections  \ref{A} and \ref{B}.  
We define the category \textbf{Causet} of posets endowed with a causal disjointness relation (\emph{causal posets})  as
model of bases of spacetimes. 
%
%
Afterwards we construct a {\em causal functor} from \textbf{Causet} to the category of $\rC^*$-algebras\footnote{A causal functor generalizes  the functor underlying the definition of a locally covariant quantum field theory \cite{BFV}. The only difference is that  the source category of the latter is that of 4-d globally hyperbolic spacetimes. 
Since a causal poset can be associated to any spacetime, a locally covariant quantum field theory can be obtained from any  causal functor.}. This is done by assigning to any causal poset the group $\rC^*$-algebra of 
the free group generated by  closed paths (\emph{loops}) quotiented by an induced  causal disjointness relation.   
Using this functor we show that  a causal net of $\rC^*$-algebras can be constructed over any causal poset and,
consequently, over any spacetime. This net, called \emph{the net of causal loops},  turns out to be covariant 
under any symmetry group of the spacetime. We show  the non-triviality of the net of causal loops 
for  the circle $S^1$, for  the  Minkowski spacetime and, more in general, for an arbitrary 
globally hyperbolic spacetime.

%
%

Concerning the representation theory, the  net of causal loops  is  universal with respect to the notion of connection given in  
\cite{RR}: this means that a connection  defined on a Hilbert space which is causal and covariant, in a suitable sense, 
yields a representation of the net. An interesting point is that it seems that
not all the representations of the net arise from connections. Indeed, using a procedure similar to  the reconstruction of the potential from  the free quantum electromagnetic field, we associate to any representation  a \emph{connection system}: 
a system of connections where causality and covariance arise as   properties of the system 
and not  of a single connection, Section \ref{Cc}.  This also leads to a natural notion of gauge transformations.\\  
\indent The passage from a function of loops, the representation, to a function of paths, the connection,  relies on a sort of
reference frame for the poset $K$: the  \emph{path-frame}, a choice of a collection of  paths joining any element of $K$ 
to a fixed one, the pole.  The resulting connection depends on this choice 
and turns out to be  neither causal nor covariant.  However, when the poset admits a covariant family of path-frames, 
then we obtain  the connection system mentioned above. 
In general, this procedure  is subjected to restrictions due to the topology
of the spacetime and to the symmetry group. In particular,  the Minkowski spacetime with
the Poincar\'e symmetry group admits a covariant family of path-frames whilst $S^1$ 
with  the conformal symmetry group does not.

The construction of the net of causal loops is purely combinatorial and no topological condition 
is imposed, even if the symmetry group  is a topological group.  An important result  in this regards  is that, in the Minkowski spacetime,  the net of causal loops  admits Poincar\'e covariant representations fulfilling
the spectrum condition, Section  \ref{Ce}. This is obtained by a remarkable feature of our construction:  any Hermitian  scalar  quantum field yields a causal and Poincar\'e covariant connection which results  to be manifestly nontrivial in the case of the free scalar field. 
Similar results hold for the net of causal loops over the chiral spacetime $S^1$ with conformal symmetry.


\section{Causal posets, functors and nets of $\rC^*$-algebras} 
\label{A}

The order relation structure enters the theory of quantum fields in the correspondence $o\mapsto\cA_o$, 
known as \emph{the observable net},  
associating to the bounded region $o$ of the spacetime the algebra $\cA_o$ generated 
by the observables measurable within $o$. This correspondence is clearly \emph{isotonous}  with  respect to the  
inclusion of regions of the spacetime.  On the other hand, for the applications to quantum field theory,  it is enough to restrict this correspondence to a family  $K$  of regions, usually called \emph{the set of indices}, encoding  topological,  causal and  symmetry 
properties of the underlying spacetime\footnote{This correspondence underlies the  quantum field theory and is the corner stone 
of AQFT, also known as  local quantum physics 
\cite{ Ara, BrF, BH, GJ, Haa, Rob, SW}.}.
These aspects  lead to the notion of a causal poset with a symmetry: the set of indices $K$
has to be considered as a  a partially ordered set (poset) because of the isotony 
property of the observable net; symmetry and causality of the spacetime lift to corresponding properties of the poset $K$.  
From this point of view, the observable net is nothing but  a functor from the poset $K$, considered as a category, 
to the category of $\rC^*$-algebras. This functor preserves, in a suitable sense, the causal and  the symmetry property 
of $K$.  \smallskip  

In this section, taking cue from the locally covariant quantum field theory \cite{BFV}, we generalize the above ideas  
and introduce the notion of a causal functor: a functor from the category formed by causal posets to the one of $\rC^*$-algebras. 
We show that if a causal functor is given,  then  a causal net of $\rC^*$-algebras is associated to any spacetime. 
The existence and the construction of a causal functor shall be given in Section \ref{B}. 
%
%

%
%

\subsection{Causal posets}
\label{Aa}

A poset is a nonempty  set $K$ endowed with a (binary, reflexive, transitive and antisymmetric) 
order relation $\leq$.  We say that $K$ is {\em upward directed} whenever for any $a,a' \in K$ there is
$o \in K$
such that $a , a' \leq o$.  
A poset $K$ is \emph{pathwise connected} if for any pair $a,a'\in K$ 
there are two finite sequences $a_1,\ldots a_{n+1}$ and  $o_1,\ldots o_n$ of elements of $K$,  
with $a_1=a$ and $a_{n+1}= a'$, 
satisfying the relations 
\begin{equation}
\label{Aa:1}
   a_i,a_{i+1}\leq o_{i} \ , \ i=1,\ldots, n \ .  
\end{equation}
Note that an upward directed poset  is pathwise connected. \\
\indent A {\em causal disjointness relation} on a poset $K$ is an irreflexive, symmetric binary relation $\perp$ on $K$
stable under the order relation, given $o,a,\tilde o \in K$ 
\begin{equation}
\label{Aa:2}
o \perp a \ \  , \ \ \widetilde o \leq o 
  \ \ \Rightarrow \ \ 
\widetilde o \perp a \ .
\end{equation}
Note that irreflexivity ($\nexists a\in K$ such that $a\perp a$)  and the stability under the order relation  imply that  two causally disjoint 
elements $a\perp o$ do not have a common minorant, i.e.\,  
$\nexists$ $x\in K$ such that  $x\leq o,a$. \\
\indent We call any  pathwise connected poset 
equipped with a causal disjointness relation 
a \textbf{causal poset}.

\begin{remark}
Two observations are in order.\smallskip

\noindent 1.  The term {\em causal disjointness relation} is used because usually we deal with a poset 
arising as a family, ordered under inclusion, of subsets of a globally hyperbolic spacetime. 
Two events of such spacetimes are {\em causally disjoint} if they cannot be joined by any non-degenerate causal curve and  
this relation extends naturally to the subsets of the spacetime.  Notice, however, that  in particular cases 
causal disjointness equals the set-disjointness. This happens for instance when one considers subsets laying on a  
Cauchy surface.  Furthermore, this is also the case of the circle $S^1$, the spacetime of chiral theories,  
that is obtained as the compactification of one of the lightline of the 2-d  Minkowski spacetime. \smallskip 

\noindent 2.  It is worth stressing the difference between the notion of a \emph{causal poset} and that of a \emph{causal set}. 
The causal set approach to quantum gravity, see \cite{Sor, Sur}, is based on the idea that the spacetime 
is a discrete set $C$ of events $e$ related by an  order relation $\preceq$ describing the causal relation between 
the events of the spacetime (discrete means that the poset $(C,\preceq)$ is locally finite).  Hence,  the notion of causal set is more primitive than that of a causal poset.  In fact,  let $K$ be  the set of the finite collections $o$ of elements  of $C$, ordered under inclusion.  Define $o_1\perp o_2$ if, and only if, for any element $e_1\in o_1$ there do not exist 
$e_2\in o_2$ such that either $e_2\preceq e_1$ or $e_1\preceq e_2$. 
Then $\perp$ is a causal disjointness relation on $K$.   
\end{remark}

A \emph{morphism}  from a causal poset $K$ to a causal poset $K'$  is a function 
$\psi: K\to K'$ preserving both the order relation and the causal disjointness relation i.e.
\begin{equation}
\label{Aa:3}
a \leq \tilde a \  \Rightarrow \  \psi(a) \leq' \psi(\tilde a  ) 
\ \  ,  \ \   
o\perp \tilde o \ \Leftrightarrow  \ \psi(o)\perp' \psi(\tilde o) \ ,
\end{equation}
where $\leq '$ and $\perp'$ denote, respectively, the order relation and the causal 
disjointness relation of $K'$.  A \emph{symmetry  group} of a causal poset is noting but 
a subgroup  of the automorphism group of a causal poset.  \\
\indent Note that there is an evident categorical structure underlying these definitions: taking 
causal posets $K$ as objects and  the corresponding morphisms $\psi:K\to K'$ as arrows  $(K,K')$  
we get a category \textbf{Causet} that we call 
\emph{the category of causal posets}.  In particular, symmetry groups of $K$ are, by definition, subgroups of $(K,K)$.

\subsubsection{Some causal posets on spacetimes for AQFT}
\label{Aaa}

We give a non-exhaustive list of causal posets associated to different spacetimes 
which are of importance in AQFT.  Our main aim in the following sections
shall be to give  constructions that result to be well behaved on such posets.  \smallskip

The poset $K$ used  as the set of indices of the net of $\rC^*$-algebras in AQFT is a family of subsets of the spacetime
ordered under inclusion. This family is chosen to best fit the topological, causal and the symmetry, if there are any,  properties
of the spacetime.  To this end,  some minimal requests  are the following:  
1) $K$ is a base for the topology of the spacetime whose elements are connected and simply connected subsets of 
the spacetime; 2) the causal complements of elements of $K$ are connected;  3)  if $S$ is a symmetry group 
of the spacetime (a group of global isometries, or conformal diffeomorphisms for instance),  then 
$K$ is  stable under the action of $S$, i.e.\  $s(o)\in K$ 
for any $o \in K$ and $s$ in $S$, where $s(o)$ is the image of $o$ by the map $s$. In this way $S$ is realized as 
a subgroup of $(K,K)$. \smallskip

We introduce some causal posets associated to   spacetimes  which are of relevance in  AQFT. \smallskip 

\noindent 1. \emph{Globally hyperbolic spacetimes.}  Let $\cM$ be an arbitrary globally hyperbolic spacetime,   
the spacetime used in the setting of quantum fields in a gravitational background field \cite{Wal,GLRV}.     
In such a spacetime two regions are causally disjoints if they cannot be joined by any non-degenerate causal curve.  
A good choice for the set of indices is the set of \emph{diamonds} of $\cM$:  roughly,  a  diamond is   
the open causal completion of a suitable   subset  
laying on a Cauchy surface of $\cM$, see \cite{Ruz,BR} for details.  The set of diamonds is stable under any 
global symmetry of the spacetime, if there are any. \smallskip

\noindent 2. \emph{The Minkowski spacetime.}  A particular globally hyperbolic spacetime is the Minkowski spacetime $\bM^4$, the spacetime
used in the setting of relativistic quantum field theory.  
The spacelike separation is the causal disjointness relation and the symmetry group  is  the 
Poincar\'e group $\sP^\uparrow_+$.  The set of indices used in such theories is that of
 \emph{double cones} of $\bM^4$, a family smaller than that of diamonds of $\bM^4$ (see \cite{Haa}).  
These may be  defined as follows. 
Given a reference frame of the Minkowski spacetime,  let $B_R$ be  the open ball  with radius $R>0$ laying on the subspace at time $t=0$ and   centred in the origin of the reference frame. The open causal completion of $B_R$, we shall denote by $o_R$, is called a double cone with base $B_R$.  Any other double cone is obtained by letting act  the Poincar\'e group 
on the double cones $o_R$ for any  $R>0$.  For later applications, we note that the Poincar\'e group does not act freely on the set of double cones. It is easily 
seen that  the stability group  of the double cone $o_R$ defined before is the  subgroup $SO(3)$ of $\sP^\uparrow_+$ 
of spatial rotations;  so the stability group of a generic double cone is isomorphic to $SO(3)$.   \smallskip

\noindent 3. \emph{The circle $S^1$.}  This is the space used in the setting of chiral conformal quantum field theory,  
see \cite{K} for a friendly introduction. 
A symmetry group is that of M\"obius group (or the larger group of the diffeomorphisms of $S^1$)  and the causal disjointness relation is the set-disjointness. 
In this case the set of indices is the set of non-empty open intervals of $S^1$ having  a proper closure.  
For later application,  we observe that the M\"obius group does not act freely 
on intervals of $S^1$. In particular, the stabilizer of an interval $o$,  which is also 
 the stabilizer of its  open complement $S^1\setminus cl(o)$, is isomorphic to the modular group.\smallskip

The causal posets listed above, although defined on different spacetimes, 
share some important properties (see the references quoted above): 
\begin{enumerate}   
\item For any $o$ there exist $a,x$ such that 
          $a\Subset o\Subset x$. 
\item If $a\Subset o$, then exists $\tilde a$ such that $a\Subset \tilde a\Subset o$.
\item For any $o$ there exists  $a$ such that $cl(a)\perp o$. 
\item If  $cl(a)\perp o$ there exists $x$ such that $a\Subset x$ and $cl(x)\perp o$.
\end{enumerate}
Here $cl(a)$ stands for the closure of $a$ in the spacetime topology, and $a\Subset o$ stands for $cl(a)\subset o$. 
Note that any such a poset is infinite non-countable. Furthermore, as can be easily deduced by the above relations, 
any such a poset has neither maximal nor minimal elements.

%
%
%
%
\subsection{Causal functors}
\label{Ab}
Roughly speaking a {\em causal functor} is 
a functor from \textbf{Causet} to a target 
category in which causality is realized in terms of commutation relations. 
For our purposes, natural choices of the target category shall be the ones of groups 
and of $\rC^*$-algebras; in the $\rC^*$-case, the notion of causal functor  generalizes  
the one of  locally covariant quantum field theory (\cite{BFV}).\smallskip

As a preliminary step,  we introduce target categories  of the causal functor, 
generically denoted by $\boldsymbol{\rT}$: 
\begin{itemize}
\item[(i)]  the category \textbf{Grp} whose objects are groups and whose arrows 
            are group monomorphisms;
\item[(ii)] the category $\boldsymbol{\rC^*}$\textbf{Alg} whose objects 
            are unital $\rC^*$-algebras and whose arrows are unital $^*$-monomorphisms.              
\end{itemize}
We now are ready to give the definition of a causal functor. 
\begin{definition}
\label{Ab:2}
A  \textbf{causal functor} is a functor  
$\cA : \boldsymbol{\mathrm{Causet}} \to \boldsymbol{\rT}$
fulfilling the following property:  given $\psi_i\in(K_i,K)$, $i=1,2$, if there are    $o_1,o_2 \in K$ such that 
\begin{equation}
\label{Ab:2a}
o_1\perp o_2 \ \ \  and \ \ \  
\psi_1(K_1)\leq o_1 \ , \ \psi_2(K_2)\leq o_2  \ ,
\end{equation}
then 
%
%
\begin{equation}
\label{Ab:3}
\cA_{\psi_1}(A)\,\cA_{\psi_2}(B) = \cA_{\psi_2}(B)\,\cA_{\psi_1}(A) \ , 
\end{equation} 
for any $A\in \cA(K_1)$ and  $B\in \cA(K_2)$.  
When $\boldsymbol{\rT}$ is $\boldsymbol{\mathrm{Grp}}$ ($\boldsymbol{\rC^*\mathrm{Alg}}$) we shall say that $\cA$ is 
a group- ($\rC^*$-) causal functor.
%
%
%
%
%
%
\end{definition}
\noindent Some comments about the above definition. \emph{First},  the functor  $\cA$ is denoted by 
  $\cA(K)$, $K \in {\bf obj}({\bf Causet})$, at the level of objects and by
  $\cA_\psi$, $\psi \in {\bf arr}({\bf Causet})$ at the level of arrows. \emph{Secondly},  $\psi_i(K_i)\leq o_i$ means that 
any element of $\psi_i(K_i)$ is smaller that $o_i$, for $i=1,2$. So if two posets 
have causally disjoint embeddings in a larger poset, 
then the corresponding images, \emph{via} the causal functor,  commute elementwise. \smallskip

%

The functor assigning the full group  $\rC^*$-algebra to any discrete group allows us 
to assign a $\rC^*$-causal functor to any group causal functor. To this
end a key r\^ole is played by the discreteness of groups involed.  Very briefly, 
given the discrete group $G$ we denote the convolution algebra by $\ell^1(G)$ and 
its enveloping $\rC^*$-algebra by $\rC^* G$. A dense $^*$-algebra of $\ell^1(G)$ is
given by the set $\mathbb{C}G$ of functions
$f:G\to\mathbb{C}$,
which can be expressed as linear combinations 
$f=\sum_g f(g) \, \delta_g$
of delta functions. Given the discrete group morphism 
$\si:G\to H$ 
there are natural induced morphisms
\[
\tilde\si: \mathbb{C}G\to \mathbb{C}H
\ \ , \ \
\si_*:\rC^*G\to\rC^*H
\ ,
\]
defining the functor 
\[
\rC^*:\textbf{Grp}\to \boldsymbol{\mathrm{C}^*}\textbf{Alg}
\ \ , \ \
G \mapsto \rC^*G \ , \ \si \mapsto \si_* \ .
\]
A crucial point is that $\sigma_*$ is injective for any monomorphism $\sigma$;
this is not ensured when $G$ is not discrete (for details see the discussion in \cite[\S 3.2.2]{RV}).


\begin{remark}
\label{Ab:4}
Two observations are in order.\\[5pt]
1. There is a bijective correspondence between unitary representations $\rho$ of $G$ and 
representations $\pi$ of $\rC^*G$. In fact 
$\tilde\rho(f):=\sum f(g) \rho(g)$ extends to a continuous representation 
of the algebra $\ell^1(G)$. This in turn induces a representation $\rho_*$ of $\rC^*G$. 
Conversely, note that there is a group monomorphism $G\in g\to \delta_{*,g} \in \rC^*G$, 
where $\delta_{*,g}$ is the embedding of $\delta_g$ into $\rC^*G$. Therefore if $\pi$ 
is a representation of $\rC^*(G)$, then $\pi^*(g):=\pi(\delta_{*,g})$ is a unitary representation 
of $G$. These two procedures are  each other's inverses. \\[5pt]
2. Given  injective  group morphisms $\si_i:G_i\to H$, $i=1,2$, such that the groups 
$\si_1(G_1)$ and $\si_2(G_2)$ commute elementwise within $H$, then the $\rC^*$-algebras 
$\si_{*,1}(\rC^*G_1)$ and $\si_{*,2}(\rC^*G_2)$ commute elementwise within $\rC^*H$.
This is quite evident in terms of the delta functions: given 
$\delta_{g_i}\in\bC G_i$ for $i=1,2$, then 
$\tilde\si_1(\delta_{g_1})*\tilde\si_2(\delta_{g_2}) =
 \delta_{\si_1(g_1)\si_2(g_2)}= 
 \delta_{\si_2(g_2)\si_1(g_1)}=
 \tilde\si_2(\delta_{g_2})*\tilde\si_1(\delta_{g_1})$ 
and the general proof follows by density. 
\end{remark}
On these grounds, if $\cG$ is a group-causal functor, then the composition 
\begin{equation}
\label{Ab:5}
\cG_*  := \rC^*\circ \cG : \textbf{Causet}\to \rC^*\textbf{Alg} 
\end{equation}
is a $\rC^*$-causal functor. Covariance is obvious while causality follows from 
the previous observation.  We shall call {\em group $\rC^*$-causal functor} such a causal functor.

\begin{remark}
\label{Ab:6}
Any group $\rC^*$-causal functor $\cG_*$ has a nontrivial cyclic representation. Let $\pi_K$ be the left regular representation 
of $\cG(K)$ on the Hilbert space $\ell^2(\cG(K))$ and denote its extension to  the $\rC^*$-algebra $\rC^*\cG(K)$  
by $\pi_{*,K}$.   Given $\psi\in(K,K')$,  let $V_\psi:\ell^2_{K} \to \ell^2_{K'}$ be the isometry defined by 
$V_{\psi}(\delta_{g}):= \delta_{\psi(g)}$,  for any  $g\in \cG_(K)$. Then one can easily see that  
\begin{equation}
\label{Ab:7}
V_\psi\, \pi_{*,K}= \pi_{*, K'}\circ\cG_{*,\psi} \, V_{\psi} \ \ , \ \ 
V_{\psi' \circ \psi} = V_{\psi'} V_\psi \ ,  \qquad  \forall \psi \in (K,K') \ , \ \psi' \in(K',K'')  \ .  
\end{equation}
So, the pair $(\pi_*,V)$ is a representation of $\cG_*$ 
\footnote{This is a slight generalization of the notion of representation used in \cite{BR,RV},
where unitarity of the analogues of the $V_\psi$ is required.}.
Moreover, let $\delta_{1_K}$ be the delta function of the identity $1_K$ of $\cG_K$. 
Since $\psi_{1_{K_1}} = 1_{K_2}$ we have, according to the definition of $V_{\psi}$, 
that $ V_{\psi}\, \delta_{1_{K_1}}=\delta_{1_{K_2}} $, 
hence the collection $\delta_{1_K}\in\ell^2(\cG_K)$, for $K\in\textbf{Causets}$,  
is a cyclic  invariant vector.  
\end{remark}

\subsubsection{A comment on the causality condition (\ref{Ab:2a}) and the relation  with locally covariant QFT} 
\label{Ac.c}

We now show a relation between   $\rC^*$-causal functors and the locally covariant quantum field theory.  
We refer the reader to the original paper \cite{BFV} for a complete description  and 
physical motivation of the theory. We also recall some new developments of the theory: 
\cite{Pi} for  a generalization to the conformally covariant case and  \cite{Few, FV}  for an analysis of 
theories  describing  the same physics in all spacetimes,  the so called SPASs condition. \smallskip

We briefly recall the axioms of a  locally covariant quantum field theory.
To this end, let  $\textbf{Loc}$  denote  the category 
whose objects are 4-d globally hyperbolic spacetimes $\cM$
and whose arrows  $\psi:\cM_1\to \cM_2$ are isometric embeddings, 
preserving the orientation and the time orientation, and such that 
if $x,y\in \psi(\cM_2)$, then $J^+(x)\cap J^-(y)\subseteq \psi(\cM_2)$,  where 
$J^{+/-}(x)$ is the causal future/past of the point $x$.  \\
\indent  A \emph{locally covariant quantum field theory}  
is nothing but a functor  $\sA: \textbf{Loc}\to\boldsymbol{\rC^*}$\textbf{Alg}. We say that $\sA$ 
is \emph{causal} whenever  given   $\psi_i:\cM_i\to\cM$, $i=1,2$, such that  
$\psi_1(M_1)$ and $\psi_2(M_2)$ are causally disjoint in $\cM$,  then 
\[
\sA_{\psi_1}(A)\, \sA_{\psi_2}(B) =  \sA_{\psi_2}(B) \, \sA_{\psi_1}(A) \ ,  
\]
for any $A\in \sA(\cM_1)$ and $B\in \sA(\cM_2)$; $\sA$ is said to satisfy the \emph{time-slice axiom} whenever 
\[
\sA_\psi(\sA(\cM_1)) =\sA(\cM_2)
\] 
for any $\psi:\cM_1\to \cM_2$ such that $\psi(\cM_1)$ contains a Cauchy surface of $\cM_2$. \smallskip


In order to get a connection between causal functors and the locally covariant quantum field theory (which are causal) we must relax the causality condition (\ref{Ab:2a}). Let $\cA':\boldsymbol{\mathrm{Causet}}\to \boldsymbol{\rC^*\mathrm{Alg}}$ be a functor 
such that  given $\psi_i\in(K_i,K)$ for $i=1,2$, if 
\begin{equation}
\label{Ac:1}
\psi_1(K_1)\perp \psi_2(K_2)  \ , 
\end{equation} 
then (\ref{Ab:3}) holds, where $\psi_1(K_1)\perp \psi_2(K_2)$ means that $\psi_1(o_1)$ is causally disjoint from $\psi_2(o_2)$
for any $o_1\in K_1$, $o_2\in K_2$. This clearly ensures that $\cA'$ is actually a causal functor.  
Let $\rK$ denote the functor assigning to any globally hyperbolic spacetime $\cM$ the set of diamonds $K(\cM)$  of $\cM$ ordered 
under inclusion, and to any  arrow $\psi:\cM_1\to \cM_2$  the induced map  $K(\cM_1)\ni o \to \psi(o)\in K(\cM_2)$
(diamonds are stable under the isometric embeddings). Then, using (\ref{Ac:1}), it is easily seen that 
\[
\sA:\cA'\circ \rK : \boldsymbol{\mathrm{Loc}} \to \boldsymbol{\rC^*\mathrm{Alg}} \ , 
\]
is a causal locally covariant quantum field theory. 
Obviously, because of the generality of the notion of causal functor, nothing can be said about the time-slice axiom.  \smallskip

The causality relation  (\ref{Ab:2a}),  which actually  characterizes the notion of a causal functor  and results to be 
too restrictive for locally covariant quantum field theory, 
is mainly motivated  by  
gauge theories and  in particular by a remark due to D.Buchholz  and J.E. Roberts 
in the setting of quantum electrodynamics (see \cite{Rob0} for details).
In fact, assume that the vector potential $A_\mu$  is realized as a quantum field in the Minkowski spacetime. 
It can be easily seen that  a suitable smoothing of $A_\mu$ around a closed curve is an \emph{observable} 
since it is related to the electromagnetic field \emph{via} the Stokes theorem, similarly to the classical case.  Clearly, this observable is not localized around the curve but it can be localized around any surface having that curve as a boundary. Consequently,  if we take two causally disjoint curves $\gamma_1$ and $\gamma_2$ \emph{forming a link}, the smoothings of $A_\mu$ around these two curves do not commute. The causality relation 
 (\ref{Ab:2a}) forbids this situation. Actually, as observed in Section \ref{Aaa}, the posets  that we use as indices for nets over spacetimes have connected and \emph{simply connected} regions as elements, so two curves localized within 
two causally disjoint regions cannot form a link.  \smallskip

\subsection{Nets of $\rC^*$-algebras} 
\label{Ac}

Mimicking the locally covariant quantum field
theory, we now show that once a $\rC^*$-causal functor $\cA$ is given,  
it is possible to construct a causal net of $\rC^*$-algebras over any causal poset $K$ which is, 
in particular, covariant with respect to any symmetry group of $K$.
\smallskip

Let us first recall some basic definitions.  
A net of $\rC^*$-algebras  over $K$ is a pair $(\cA,\jmath)_K$ where $\cA$ is a correspondence 
$o \mapsto \cA_o$, $o\in K$,  of unital $\rC^*$-algebras, the \emph{fibres} of the net,  
and $\jmath$ is a correspondence $o\leq a \mapsto \jmath_{oa}$ 
of unital $^*$-monomorphisms $\jmath_{ao}:\cA_o\to\cA_a$, the \emph{inclusion maps}, satisfying 
the \emph{net relations}
\begin{equation}
\jmath_{o''o'}\circ\jmath_{o'o}=\jmath_{o''o} \ , \qquad o''\geq o'\geq  o \ . 
\end{equation}
The net is said to be \emph{causal} whenever 
\begin{equation}
 [\jmath_{ao_1}(\cA_{o_1}),\jmath_{ao_2}(\cA_{o_2})]=0 
  \ , \qquad   
 a\geq  o_1 , o_2   
 \ , \
 o_1\perp o_2 \ .  
\end{equation}
If $G$ is a symmetry group of $K$ the net  is said to be {\em $G$-covariant} whenever 
for any $o \in K$ and $g\in G$ there is an isomorphism $\alpha^o_g:\cA_o\to \cA_{go}$  
subjected to the following conditions: 
\begin{equation}
 \alpha^{ho}_g\circ\alpha^o_h= \alpha^o_{gh}
\  \ , \ \ 
 \alpha^a_g\circ\jmath_{ao}= \jmath_{ga\,go}\circ \alpha^o_g
\  , \qquad  
  a\geq o   \ , \ h,g\in G \ .   
\end{equation}
We now show how a net of $\rC^*$-algebras is defined from a $\rC^*$-causal functor 
$\cA$ after fixing a causal poset $K$.
As a first step, for any $o\in K$  we define the set 
\[
(K|o):= \{ a\in K \,|\, a\leq o\} \ ;
\]
this, equipped with the ambient order relation $\leq$ and the ambient causal disjointness 
relation $\perp$, is a causal poset too, so the inclusion 
$\io_{K,(K|o)}:(K|o)\to K$ is an arrow of $((K|o),K)$. 
We define
\begin{equation}
\label{Ac:2}
 \cA_o := \cA_{\io_{K,(K|o)}}(\cA(K|o)) \ , \qquad  o\in K \ .
\end{equation}
Any $\cA_o$  is a $\rC^*$-subalgebra of $\cA(K)$ and 
\begin{equation}
\label{Ac:3}
\cA_o \subseteq \cA_a
\ , \qquad   o \leq a \ .
\end{equation}
To prove (\ref{Ac:3}),  observe that according to the definition  (\ref{Ac:2}) the inclusion   of 
$\cA_o$ into  $\cA_a$ is given by 
$\cA_{\io_{K,(K|a)}}\circ \cA_{\io_{(K|a),(K|o)}}\circ \cA^{-1}_{\io_{K,(K|o)}}$ where 
$\io_{(K|a),(K|o)}\in ((K|o),(K|a))$. 
Using functoriality, we have that
\begin{align*}
\cA_{\io_{K,(K|a)}}\circ \cA_{\io_{(K|a),(K|o)}}\circ \cA^{-1}_{\io_{K,(K|o)}} & 
= \cA_{\io_{K,(K|a)}\circ\io_{(K|a),(K|o)}}\circ \cA^{-1}_{\io_{K,(K|o)}} \\
&  = \cA_{\io_{K,(K|o)}}\circ \cA^{-1}_{\io_{K,(K|o)}} = \cA_{\io_{K,(K|o)}}\circ \cA^{-1}_{\io_{K,(K|o)}} \\ 
& = \mathrm{id}_{\cA(K)} \ , 
\end{align*}
which gives the  desired inclusion (\ref{Ac:3}), that we denote by 
$\mathrm{id}_{ao} : \cA_o \to \cA_a$.
This yields the net of $\rC^*$-algebras $(\cA,\mathrm{id})_K$.
To prove causality we note that, since the image of $\io_{K,(K|o)}$ is $(K|o) \leq o$, and since $\cA$ is causal, 
we have, as desired,
\begin{equation}
\label{Ac:4}
 [\cA_o,\cA_a]=0 \ , \qquad  o\perp a \ .
\end{equation}
Finally, let $S\subseteq (K,K)$ be a symmetry group of $K$. Define 
\begin{equation}
\alpha_s:= \cA_s \ , \qquad s\in S \ . 
\end{equation} 
Then $\alpha_s:\cA(K) \to \cA(K)$ is a $^*$-automorphism for any $s\in S$ and
$\io_{K,(K|go)}^{-1} \circ s\circ\io_{K,(K|o)} : (K|o) \to (K|so)$ 
is an isomorphism of causal posets, therefore $\alpha_s:\cA_o\to\cA_{so}$ is a $^*$-isomorphism.
Since the inclusion maps are constant, it easily follows that $(\cA,\mathrm{id},\alpha)_K$
is a $S$-covariant causal net of $\rC^*$-algebras over $K$. Summing up:
\begin{theorem}
\label{Ac:5}
Given a $\rC^*$-causal functor $\cA$, for any causal poset $K$ we have:
\begin{itemize}
\item[(i)]  $(\cA,\mathrm{id})_K$ is a causal net of $\rC^*$-subalgebras of $\cA(K)$; 
\item[(ii)] If $S$ is a symmetry group of $K$, then $(\cA,\mathrm{id},\alpha)_K$ is $S$-covariant.  
\end{itemize}
\end{theorem}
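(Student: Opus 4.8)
The plan is to verify both items by reducing everything to the functoriality of $\cA$ together with the single defining property (\ref{Ab:2a}) of a causal functor, exploiting that the fibres $\cA_o$ and the inclusion maps $\mathrm{id}_{ao}$ have already been realized as honest subalgebras and subalgebra-inclusions inside the fixed ambient algebra $\cA(K)$. For item (i), I would first confirm that the construction is well posed, i.e.\ that each $(K|o)$ is genuinely an object of $\textbf{Causet}$ so that $\io_{K,(K|o)}$ is a bona fide arrow: the causal disjointness inherited from $K$ stays irreflexive, symmetric and order-stable, and $(K|o)$ is upward directed with greatest element $o$ (every $a\le o$ lies in it), hence pathwise connected. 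Thus $\cA_{\io_{K,(K|o)}}$ is a unital $^*$-monomorphism and $\cA_o$ is a unital $\rC^*$-subalgebra of $\cA(K)$. Isotony $\cA_o\subseteq\cA_a$ for $o\le a$ is exactly the functoriality computation establishing (\ref{Ac:3}), based on $\io_{K,(K|o)}=\io_{K,(K|a)}\circ\io_{(K|a),(K|o)}$, and the net relations $\jmath_{o''o'}\circ\jmath_{o'o}=\jmath_{o''o}$ are then automatic, since every $\mathrm{id}_{ao}$ is the literal set-theoretic inclusion of subalgebras of the common algebra $\cA(K)$.

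The one step that genuinely uses the causal structure is causality. Here I would note that $\io_{K,(K|o_i)}\big((K|o_i)\big)=(K|o_i)$ consists entirely of elements $\le o_i$, so $\io_{K,(K|o_i)}\big((K|o_i)\big)\le o_i$ in the sense of (\ref{Ab:2a}). Hence whenever $o_1\perp o_2$ the hypothesis of Definition \ref{Ab:2} is met with $\psi_i=\io_{K,(K|o_i)}$, and (\ref{Ab:3}) gives elementwise commutation of $\cA_{o_1}=\cA_{\io_{K,(K|o_1)}}(\cA(K|o_1))$ with $\cA_{o_2}$. This is precisely (\ref{Ac:4}), i.e.\ causality of the net.

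For item (ii) I would put $\alpha_s:=\cA_s$, a $^*$-automorphism of $\cA(K)$ since each $s\in S$ is an isomorphism in $\textbf{Causet}$. The geometric heart is that an order automorphism $s$ carries $(K|o)$ bijectively onto $(K|so)$, because $a\le o\Leftrightarrow s(a)\le so$, and it respects $\perp$; so $s$ restricts to an isomorphism of causal posets $\bar s:(K|o)\to(K|so)$ satisfying $s\circ\io_{K,(K|o)}=\io_{K,(K|so)}\circ\bar s$. Applying $\cA$ and functoriality (and surjectivity of $\cA_{\bar s}$ onto $\cA(K|so)$) yields $\alpha_s(\cA_o)=\cA_{so}$, so $\alpha_s$ restricts to a $^*$-isomorphism $\cA_o\to\cA_{so}$. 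The cocycle identity $\alpha^{ho}_g\circ\alpha^o_h=\alpha^o_{gh}$ is $\cA_g\circ\cA_h=\cA_{gh}$, functoriality applied to composition in $S$, while compatibility $\alpha^a_g\circ\jmath_{ao}=\jmath_{ga\,go}\circ\alpha^o_g$ is immediate because the $\jmath$'s are inclusions of subalgebras of $\cA(K)$ and $\alpha_g$ is a single automorphism of the ambient algebra.

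I expect no serious obstacle; the argument is essentially bookkeeping with functoriality once all fibres live inside the common algebra $\cA(K)$. The only point requiring care is definitional: checking that each $(K|o)$ satisfies the pathwise-connectedness clause demanded of a causal poset — handled by upward directedness — so that the arrows $\io_{K,(K|o)}$ exist and the causal-functor hypothesis (\ref{Ab:2a}) is literally applicable in the causality step.
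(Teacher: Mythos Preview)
Your proposal is correct and follows essentially the same approach as the paper: the paper's proof is the discussion immediately preceding the theorem (ending with ``Summing up:''), which establishes isotony via the functoriality computation (\ref{Ac:3}), causality via the causal-functor hypothesis applied to the inclusions $\io_{K,(K|o_i)}$ (\ref{Ac:4}), and covariance by observing that $\io_{K,(K|so)}^{-1}\circ s\circ\io_{K,(K|o)}$ is an isomorphism of causal posets. You supply one detail the paper leaves implicit, namely the verification that $(K|o)$ is genuinely an object of $\textbf{Causet}$ (pathwise connected because it has greatest element $o$), which is a useful addition.
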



\section{Causal functors from groups of loops}
\label{B}
In this section we construct a causal functor using the structure of causal posets only. 
We first associate a (free) group  with any poset and show that this association 
yields a functor from the category of causal posets to groups. Afterwards we introduce causal commutators. 
These are normal subgroups of the free group.  We then perform the quotient and show that this
leads to a group-causal functor hence, according to the results of the previous section, to a $\rC^*$-causal functor.

\subsection{Simplices}
\label{Ba}
A simplicial set $\Si_*(K)$ can be defined over any poset $K$,
in the following way  \cite{Rob,RR}. As a first step we set $\Si_0(K) := K$ and then,
inductively, we define elements of $\Si_n(K)$, the $n$-{\em simplices}, 
as follows,
\[
x := (|x|; \partial_0x , \ldots , \partial_nx )
\ \ : \ \
\left\{
\begin{array}{ll}
\partial_ix \in \Si_{n-1}(K) \ , \ |x| \in K  \ , \ \forall i = 0 , \ldots , n \ , \ n \geq 1 \ ,
\\
| \partial_ix | \leq |x| \ , \ \forall i = 0 , \ldots , n \ , \ n \geq 1 \ ,
\\
\partial_{ij}x := \partial_i(\partial_jx) = \partial_{j,i+1}x 
\ , \ 
\forall i \geq j 
\ , \ 
n \geq 2 \ .
\end{array}
\right.
\]
The $(n-1)$-simplices $\partial_ix$ are called the {\em faces}, whilst
$|x|$ is called the {\em support} of $x$. Note that the previous definition 
also yields the face maps 
$\partial_i : \Si_n(K) \to \Si_{n-1}(K)$,
$i = 0, \ldots , n$,
$n \geq 1$.
There are also \emph{degeneracies operators} $\si_i: \Si_n(K) \to \Si_{n+1}(K)$,
$i = 0, \ldots , n$,
$n \geq 0$ (see the cited references for the definition), making $\Si_*(K)$ a simplicial set.\\ 
\indent The simplicial set $\Si_*(K)$ is symmetric since it is closed under any permutation 
of the vertices of any of its simplex. Anyway since in what follows we shall mainly use 
$1$-simplices of $\Si_*(K)$, we focus on the properties of $\Si_1(K)$. Any 1-simplex $b$ is formed by a triple 
of elements of $K$,
\[
b = ( |b|;\partial_0b , \partial_1b)
\ \ : \ \
\partial_0b , \partial_1b \leq |b|
\ .
\]
The intuitive content of the 1--simplex $b$ is that it is a segment starting at the ``point" $\partial_1b$
and ending at $\partial_0b$. A \emph{degenerate} 1-simplex of the form $\si_0a$ for $a\in K$, is 
the 1-simplex having the support and faces equal to $a$.

Since $\Si_*(K)$ is symmetric we can associate to any 1-simplex $b$ 
its opposite $\overline{b}$: this is the 1-simplex having 
the same support as $b$ and inverted faces: 
\[
|\overline{b}|:=|b| \ , \ \ \partial_1\overline{b}:=\partial_0b \ , \ \ 
 \partial_0\overline{b}:=\partial_1b \ . 
\]
Another simplicial sets which shall play a r\^ole in what follows is the \emph{nerve} 
of $K$. This is the subsimplicial set $\rN_*(K)$ of those  simplices of $\Si_*(K)$ 
whose vertices are totally ordered and whose support coincide with the greatest vertex. 
In particular for 1-simplices we have that 
$b\in\rN_1(K)$ if, and only if, $\partial_1b\leq \partial_0b=|b|$.


\subsection{The free group of loops on a poset} 
\label{Bb}

A first step toward the construction of a causal functor is to associate a free group and its subgroup of loops
to a suited subset of the set of 1-simplices of a poset. The subgroup of loops will play 
a more relevant r\^ole  concerning the construction of the target category of the functor.
As a reference  for free group theory  see \cite{Joh}.  \smallskip

Given a causal poset $K$ we define
\footnote{We are following a different idea from that used 
in \cite{RV} to define causal net of $\rC^*$-algebras. This because 
we want to remove any topological effect 
from the definition of the group associated with a causal poset. In this regard, we  recall that the  
first homotopy group of a poset is encoded 
in the nerve $\rN_1(K)$, see \cite{RRV}.}\smallskip  
\begin{equation}
\label{Bb:1}
 \rT_1(K):=\Si_1(K)\setminus\rN_1(K) \  .  
\end{equation}
So $\rT_1(K)$ is the set of 1-simplices of $K$ which do not belong to the nerve of $K$, or, 
equivalently,  $b\in\rT_1(K)$ if, and only if, $|b|>\partial_0b,\partial_1b$.  
This set is closed under the operation of making the opposite and, in general is not empty.
Moreover, it is sufficient to describe path connecteness of the poset (this will clear soon).  \\ 
\indent Now, let us consider the group generated by $\rT_1(K)$ with relation $b^{-1} = \overline{b}$, i.e., 
\begin{equation}
\label{Bb:2}
\rF(K):=\{ b\in\rT_1(K) \,|\, b^{-1}=\overline{b}\} \ .   
\end{equation}
$\rF(K)$ is (non-canonically) isomorphic to a free group. To this end let $\sim_{opp}$ denote the equivalence relation
$b \sim_{opp} b' \Leftrightarrow \ b' = \overline{b}$ for any $b \in \rT_1(K)$. 
Then by the axiom of choice there is a section 
$s : \rT_1(K) / \sim_{opp} \ \longrightarrow \ \rT_1(K)$
and it is easily seen that $\rF(K)$ is isomorphic to the free group
generated by $s(\rT_1(K) / \sim_{opp})$. \\
\indent Thus, elements of $\rF(K)$ are words  
\[
 w:= b_n\,b_{n-1}\,\ldots \,b_1 \ , \qquad b_i\in \rT_1(K) \ . 
\]
The inverse of a word is given by $\overline{w}:=\overline{b}_1,\ldots,\overline{b}_n$,  
where $\overline{b}$ is the opposite of the 1-simplex $b$ of $\rT_1(K)$. The empty word $1$ is the identity of the group. 
The \emph{support}  $|w|$ of a word $w$ is the collection of the supports of the generators of $w$, as a reduced word $w^\rr$.
For instance if $w=b_2 b\, \overline{b}\, b_1$ and is $b_1 \neq \overline{b}_2$ then 
$|w|=\{|b_2|,|b_1|\}$ since the  reduced word $w^{\rr}$  equals  $b_2 b_1$.
A word $w=b_n\,\ldots\,b_1$ is said to be a \emph{path} whenever its generators satisfy the relation 
\begin{equation}
\label{Bb:3}
 \partial_{0}b_{i+1}=\partial_1b_{i} \ , \qquad i= 1,\ldots, n-1 \ . 
\end{equation}
We set $\partial_1w:=\partial_1b_1$ and  $\partial_0w:=\partial_0b_n$ and call 
these 0-simplices, respectively, the \emph{starting} and the \emph{ending point} of the path $w$. 
We shall also use the notation $w:a\to o$ to denote a path from $a$ to $o$. 
Finally, a path $w:o\to o$ is said to be a \emph{loop over $o$}.  Note that if $w:a\to o$ is a path, then the  reduced word $w^{\rr}$
is still a path from  $a$ to $o$.\\ 
\indent  Causality is extended  from $K$ to $\rF(K)$ as follows: two words 
$w_1$ and $w_2$ are \emph{causally disjoint}, 
\begin{equation}
\label{Bb:4}
w_1\perp w_2 \ \ \iff \ \ \exists o_1,o_2\in K \ , \  |w_1|\leq o_1 \ , \ |w_2|\leq o_2 \ , \ o_1\perp o_2 \ .  
\end{equation}
We stress that  the above relation is given on the reduced words   $w^\rr_1$ and $w^\rr_2$   according to the definition 
of the support of a word. \smallskip 

We now show that the correspondence associating $K\mapsto\rF(K)$ is a functor.
Let $K_1,K_2\in\textbf{Causets}$ and $\psi \in (K_1,K_2)$. Extend the action 
of $\psi$ from the poset $K_1$  to $\rT_1(K_1)$ by setting $\psi(b)$, by a little abuse of notation, as 
\begin{equation}
\label{Bb:5}
|\psi(b)|:= \psi(|b|) \ \ , \ \  \partial_1\psi(b):= \psi(\partial_1b) \ \ , \ \ \partial_0\psi(b):= 
\psi(\partial_0b) \ .  
\end{equation}
Since $\psi$ is order preserving an injective, $\psi(b)$ is a 1-simplex of 
$\rT_1(K_2)$. Moreover,  $\overline{\psi(b)}=\psi(\overline{b})$. Finally if 
$w=b_n\,\cdots \,b_1$ is any word of $\rF_{K_1}$ we have that 
\begin{equation}
\label{Bb:6}
\psi(w):=\psi(b_n)\,\cdots\,\psi(b_1) 
\end{equation}
is a word of $\rF(K_2)$. Clearly, if $w$ is a reduced word of $\rF(K_1)$, 
then $\psi(w)$ is a reduced word of $\rF(K_2)$; if $w:o\to o $ is a loop of $\rF(K_1)$, 
then $\psi(w):\psi(o)\to\psi(o)$ 
is a loop of $\rF(K_2)$; if $w\perp_1 v$ are causally disjoint words in $\rF(K_1)$, 
then $\psi(w)\perp_2\psi(v)$ in $\rF(K_2)$.\\
\indent According to the above definition for any morphism $\psi:K_1\to K_2$, we have that 
$\psi: \rF(K_1)\to \rF(K_2)$ is an injective group morphism as well. So
\[
 \rF: \textbf{Causet} \to \textbf{Grp}
 \ \ , \ \
 K\mapsto \rF(K) \ , \ \ 
 \psi \mapsto \psi  \ , 
\]
is a functor. We now give the following key 
\begin{definition}
\label{Bb:7}
We call the \textbf{group of loops} of $K$ the subgroup  $\rL(K)$  of $\rF(K)$ generated by loops. 
\end{definition}
\noindent In general $\rL(K)$ is not a normal subgroup of $\rF(K)$. 
An element of $\rL(K)$ is, by definition, a word of the form $w=p_n\,p_{n-1}\,\cdots\, p_1$ 
where any $p_i$ is a loop over a 0-simplex $o_i$. 
The reduced word $w^\rr$  of $w$ may have the following forms: 
if  $o_i\ne o_{i+1}$ for any $i$, since no cancellation is possible between loops defined 
over different base point, then $w^{\rr}=p^{\rr}_n\, p^{\rr}_{n-1}\,\cdots\, p^{\rr}_1$ 
where $p^{\rr}_i$ is the reduced word of $p_i$. 
If, for instance $o_1=o_2$,  $o_i\ne o_{i+1}$ for $i\geq 2$,  since $p_2p_1$ 
is a loop over $o_1$, then 
$w^{\rr}=p^{\rr}_n\, p^{\rr}_{n-1}\,\cdots\, p^{\rr}_3 \, p^\rr_{21}$ 
where $p^\rr_{21}$ is a loop obtained by reducing the loop $p_2p_1$. Hence  $\rL(K)$ is stable under reduction of words.\\
\indent Since, as observed before,   any $\psi\in(K_1,K_2)$  preserves loops,  
we have that 
$\psi : \rL(K_1)\to\rL(K_2)$, 
implying, as before,   that  the assignment
\[
\rL: \textbf{Causet} \to \textbf{Grp}  \ \ , \ \
 K\mapsto \rL(K) \ , \ \ 
 \psi \mapsto \psi  \ , 
\]
is a functor.\\ 
%



\subsection{Causality} 
\label{Bc}

The final step toward the construction of a causal functor is to introduce causality. This will be done 
in terms of commutators of the groups introduced in the previous section. 
We shall see that the quotient of the group of loops by  commutators gives rise to a causal functor. \smallskip
%
%
To start, we denote  as usual the commutator of two words $w_1,w_2$ of $\rF(K)$ by   $[w_1,w_2]=w_1w_2\overline{w}_1\overline{w}_2$.  
\begin{definition}
\label{Bc:1}
A \textbf{causal commutator of $\rF(K)$} is a word of the form 
\[
 w\,[p_1,p_2]\,\overline{w} \ \ \ \mbox{where}  \ \ \  p_i:o_i\to o_i\ , i=1,2\  \ , \ \ p_1\perp p_2\ \  , \ \ w\in\rF(K) \ ,
\] 
and it is a \textbf{causal commutator of $\rL(K)$}  when $w\in \rL(K)$.  
We  denote the group  generated by 
causal commutators of   $\rF(K)$ and $\rL(K)$ respectively by   $\rC\rF(K)$ and $\rC\rL(K)$. 
\end{definition}
\noindent Note that, by definition,  $\rC\rF(K)$ is  a normal  subgroup  of $\rF(K)$  and   $\rC\rL(K)$ is  a normal  subgroup  of $\rL(K)$. 
Moreover,  $\rC\rL(K)$ is a subgroup, in general not normal, of $\rC\rF(K)$.  An important  relation between these two groups 
is given by the following 
 %
%
%
%
%
%
%
%
%
\begin{lemma}
\label{Bc:2}
$\rC \rL(K)= \rC \rF(K)\cap \rL(K)$ for any causal poset $K$.    
\end{lemma}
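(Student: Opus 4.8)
The inclusion $\rC\rL(K)\subseteq \rC\rF(K)\cap \rL(K)$ is immediate: a causal commutator of $\rL(K)$ is in particular a causal commutator of $\rF(K)$, so $\rC\rL(K)\subseteq \rC\rF(K)$, while $\rC\rL(K)\subseteq\rL(K)$ by definition. All the content is in the reverse inclusion, and I would first recast it intrinsically. Put $S:=\{[p_1,p_2]\ :\ p_i \text{ a loop over } o_i,\ p_1\perp p_2\}$. Then $S\subseteq \rL(K)$, since a causal commutator is a product of four loops, and $S=S^{-1}$, because $[p_1,p_2]^{-1}=[p_2,p_1]$ and $\perp$ is symmetric. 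Hence $\rC\rF(K)$ is exactly the normal closure of $S$ in $\rF(K)$ and $\rC\rL(K)$ the normal closure of $S$ in $\rL(K)$. The claim is therefore equivalent to the injectivity of the natural map $\rL(K)/\rC\rL(K)\to \rF(K)/\rC\rF(K)$, so I would fix $g\in \rC\rF(K)\cap\rL(K)$ and try to show $g\in\rC\rL(K)$.

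By the previous paragraph I may write $g=\prod_{k=1}^{n} w_k\,c_k\,\overline{w}_k$ with $w_k\in\rF(K)$ and $c_k\in S$ (the exponents $\pm 1$ are absorbed using $S=S^{-1}$). The base case $w_1=\cdots=w_n=1$ is trivial: then $g$ is a product of elements of $S$, hence lies in $\rC\rL(K)$. The plan is to induct on the total conjugator length $\sum_k \ell(w_k)$ (reduced word lengths), reducing every $w_k$ to a loop, i.e.\ to an element of $\rL(K)$; once all conjugators lie in $\rL(K)$ each factor is a causal commutator of $\rL(K)$ and we are done.

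The engine of the induction is a cancellation analysis of the reduced word of $g$, using two facts from Section \ref{Bb}: $\rL(K)$ is stable under reduction and consists of products of loops, so $g\in\rL(K)$ constrains the reduced form of $g$; and conjugation by a single $1$-simplex transports a loop over one face of $b$ to a loop over the other face. The decisive observation is that a nontrivial conjugator letter cannot simply be absorbed into its causal commutator: if $b$ is the outermost $1$-simplex of a conjugator, then $b\,c_k\,\overline{b}=[\,b p_1\overline{b},\,b p_2\overline{b}\,]$ has both entries with $|b|$ in their support, so by irreflexivity and the stability of $\perp$ under $\le$ (\ref{Aa:2}) the two entries can no longer be causally disjoint — absorbing destroys the defining property of a causal commutator. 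Thus the non-loop part of each conjugator must disappear through genuine free cancellation against neighbouring factors, and it is precisely $g\in\rL(K)$ that forces these outer letters to pair off. I would track, in the reduced form of the product, which letters originate from conjugators and which from the $c_k$, and show that any surviving conjugator segment is itself a loop, hence re-bracketable as an $\rL(K)$-conjugator, strictly decreasing $\sum_k\ell(w_k)$.

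I expect this bookkeeping to be the main obstacle, and the reason is structural: $\rL(K)$ is \emph{not} a free factor of $\rF(K)$ — two $1$-simplices with the same faces but distinct supports already exhibit this — so there is no retraction of $\rF(K)$ onto $\rL(K)$ fixing $\rL(K)$, and routing all loops through a common pole \emph{via} a path-frame would destroy causal disjointness, exactly the phenomenon flagged for connections in the Introduction. A convenient organising lens is covering-space theory: present $\rF(K)$ as the fundamental group of a wedge of circles indexed by the generators and $\rL(K)$ as $\pi_1$ of the covering $Y$ it determines; then $g\in\rC\rF(K)\cap\rL(K)$ is a loop of $Y$ killed by the relators $S$ upstairs, and one must show it is already killed by $S$ inside $Y$. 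Concretely this means excluding two phenomena — that a conjugate of a relator lying in $\rL(K)$ is $\rF(K)$-conjugate but not $\rL(K)$-conjugate into $S$, and that a product of such conjugates re-enters $\rL(K)$ nontrivially. Controlling both is where the support function, the stability of $\perp$ under $\le$ (\ref{Aa:2}), and the preservation of loops and of $\perp$ under the structure maps of Section \ref{Bb} must be brought to bear.
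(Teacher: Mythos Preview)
Your proposal is a plan, not a proof. You set up the right framework (normal closures of $S$, induction on $\sum_k \ell(w_k)$) and correctly isolate one obstruction --- absorbing a conjugator letter $b$ into $[p_1,p_2]$ puts $|b|$ into both supports and kills $\perp$ by irreflexivity --- but you stop exactly at the hard step. The sentence ``I expect this bookkeeping to be the main obstacle'' is an admission that the cancellation analysis has not been carried out: you never show that $g\in\rL(K)$ actually forces the outer conjugator letters to pair off, nor that such pairing can be re-bracketed so as to strictly decrease $\sum_k\ell(w_k)$ while keeping every factor of the form $w\,c\,\overline w$ with $c\in S$. Without that, the induction does not run.

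The paper's argument is structurally different and avoids your global bookkeeping entirely. It does not induct on total conjugator length; it peels off the \emph{rightmost} factor. If $w_n\in\rL(K)$ then $w_n[q_n,p_n]\overline{w}_n\in\rC\rL(K)$ and one recurses on the shorter product. If $w_n\notin\rL(K)$ one derives a contradiction by looking only at the right end of $C$. The lever is not your absorption observation but a cleaner one: since $q_n\perp p_n$, irreflexivity of $\perp$ together with (\ref{Aa:2}) forbids any $1$-simplex of $q_n$ from coinciding with any $1$-simplex of $p_n$, so in the reduced word of $q_n p_n \overline{q}_n \overline{p}_n$ there is \emph{no} free cancellation across the $p/q$ boundaries. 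Hence the block $\overline{q}_n$ survives intact, sandwiched between $p$-material, and the tail $\overline{p}_n\overline{w}_n$ cannot reduce to a loop unless $w_n$ already is one. The argument is local to the last factor; there is no need to track interactions among different $w_k$.

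If you want to rescue your approach you will in any case need this no-cancellation-across-$p/q$ fact to make any conjugator segment ``survive'' in the reduced form --- and once you have it, the paper's right-to-left peeling is shorter than your induction.
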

\begin{proof}
Clearly $\rC\rL(K)\subseteq\rC \rF(K)\cap \rL(K)$.  So, consider a generic element $C$ of $\rC \rF(K)$. By Definition \ref{Bc:1}, 
$C$ has the form
\[
C=w_1\,[q_1,p_1]\,\overline{w}_1 \,  w_2\,[q_2,p_2]\,\overline{w}_2 \cdots  w_n\,[q_n,p_n]\,\overline{w}_n \ ,      
\]
where $q_i:o_i\to o_i$ and $p_i:a_i\to a_i$ are causally disjoint loops for 
$i=1,\ldots,n$ and $w_i\in\rF(K)$. Assume that $C\in\rL(K)$. 
Then we must show that $w_i\in\rL(K)$ for any $i$. Assume that it is not the case. 
Then it is enough to consider the case 
where $w_n\not\in\rL(K)$. In fact if $w_n\in\rL(K)$, then $w_n\,[q_n,p_n]\,\overline{w}_n\in\rL(K)$. So 
\[
w_1\,[q_1,p_1]\,\overline{w}_1 \,  \cdots  \, w_{n-1}\,[q_{n-1},p_{n-1}]\,\overline{w}_{n-1}\in\rL(K)
\]
and we may repeat the same reasoning with respect to $w_{n-1}$. \\
\indent So assume that $w_{n}\not\in\rL(K)$. It is enough to consider the case where
$w_n$ is a path from $x_n$ to $y_n$ with $x_n\ne y_n$. Clearly   
$y_n\ne a_n$, where $a_n$ is the basepoint of $p_n$, leads to a contradiction. But also the equality
$y_n=a_n$ leads to a contradiction. In fact in this case $\overline{p}_n\overline{w}_n$
is a path from $x_n$ to $y_n$ and no reduction is possible in order to get a loop
since $\overline{q}_n$ is causally disjoint from $p_n$ (recall that causal disjointness is irreflexive, Section \ref{Aa}).   So the only possibility is that 
$w_n$ is a loop. This, and the previous observation,  complete the proof. 
\end{proof}
Again, since any $\psi\in(K_1,K_2)$  preserves inclusions, loops, and causal disjointness, 
we have that $\psi:\rC {\sharp}(K_1) \to \rC {\sharp}(K_2)$ is an injective group morphism for 
$\sharp=\rL,\rF$. So, as before,  the assignment 
\[
\rC {\sharp} : \textbf{Causet} \to \textbf{Grp}
\ \ , \ \
K\mapsto\rC {\sharp}(K) \ , \ \ \psi\mapsto \psi
\ ,
\]
with $\sharp=\rL,\rF$, is a   functor. \smallskip 

We now are in a position to define the desired causal functor. As observed $\rC \rL(K)$
and $\rC \rF(K)$ are, respectively, a normal subgroup of $\rL(K)$ and 
a normal subgroup of $\rF(K)$. So consider the quotient groups 
\begin{equation}
\label{Bc:3}
\widehat{\rF}(K) := \rF(K) / \rC \rF(K)  \ \ and  \ \ \widehat{\rL}(K):= \rL(K) / \rC \rL(K) \ ,  
\end{equation}
and denote the quotient maps, respectively, by the symbols $Q_K^{\rF}$ and $Q_K^{\rL}$.
Furthermore let 
\begin{equation}
\label{Bc:4}
 \imath_K \circ Q_K^{\rL}(w):= Q_K^{\rF}(w)   \ , \qquad w\in\rL(K) \ . 
\end{equation}
Since $\rC \rL(K) \subseteq \rC \rF(K)$ this definition is well posed, i.e. 
if $Q_K^{\rL}(w)=1$ then $Q_K^{\rF}(w)=1$, and yields   
a group monomorphism $\imath_K:\widehat{\rL}(K)\to \widehat{\rF}(K)$.  
Injectivity follows by observing that   if $Q_K^{\rF}(w)=1$, then $w\in\rC \rF(K)$.
Since $w\in\rL(K)$, by Lemma \ref{Bc:2}, we have that $w\in\rC\rL(K)$, so   $Q_K^{\rL}(w)=1$. 
Similarly, by setting for any $\psi\in (K_1,K_2)$
\begin{equation}
\label{Bc:5}
\widehat{\rF}_{\psi}\circ Q_{K_1}^\rF := Q_{K_2}^\rF\circ\psi \  \ , \ \ 
\widehat{\rL}_{\psi}\circ Q_{K_1}^\rL := Q_{K_2}^\rL\circ\psi \ , 
\end{equation}
we get that  $\widehat{\rF}_\psi:\widehat{\rF}(K_1)\to \widehat{\rF}(K_2)$ and 
$\widehat{\rL}_\psi  :\widehat{\rL}(K_1)\to \widehat{\rL}(K_2)$
are injective group morphisms. This leads to the commuting diagram  
\begin{equation}
\label{Bc:6}
\xymatrix{
\rF(K_1)\ar[rd]^{Q_{K_1}^\rF}\ar[rrr]^{\psi} & & & \rF(K_2)\ar[ld]_{{Q}_{K_2}^\rF}\\
& \widehat{\rF}(K_1)\ar[r]^{\widehat{\rF}_{\psi}} & \widehat{\rF}(K_2) & \\
& \widehat{\rL}(K_1)\ar[u]_{\imath_{K_1}}\ar[r]_{\widehat{\rL}_{\psi}} & \widehat{\rL}(K_2)\ar[u]^{\imath_{K_2}} & \\
\rL(K_1)\ar[ru]_{Q_{K_1}^\rL}\ar[rrr]^{\psi}\ar[uuu]^{\subseteq} & & & \rL(K_2)\ar[lu]^{Q_{K_2}^\rL}\ar[uuu]_{\subseteq}}\\
\end{equation}
and to the existence of a group-causal functor. 
%
%
%
%
%
%
\begin{theorem}
\label{Bc:7}
The following assertions hold. 
\begin{itemize}
\item[(i)] The assignment $\widehat\rL: \boldsymbol{\mathrm{Causet}}\to\boldsymbol{\mathrm{Grp}}$ 
is a group-causal functor endowed with the natural transformation 
$Q^\rL:\rL\to\widehat{\rL}$ defined by $K\mapsto Q^\rL_K$. 
\item[(ii)]  The assignment 
$\widehat\rF : \boldsymbol{\mathrm{Causet}}\to\boldsymbol{\mathrm{Grp}}$
is a functor endowed with a natural transformation 
$Q^\rF:\rF\to\widehat{\rF}$ defined by  $K\mapsto Q^\rF_K$.
\end{itemize}
These two functors are related by  a natural transformation $\imath : \widehat\rL \to \widehat\rF$ 
defined by $K\to\imath_K$ and  satisfying the relation $\imath\circ Q^\rL= Q^\rF$, where on the r.h.s.\ of this equation,  
$\rL$ is identified as a subfunctor of $\rF$. 
\end{theorem}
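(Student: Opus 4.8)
The plan is to dispatch the three assertions separately, exploiting the fact that the construction preceding the statement has already produced the quotient groups $\widehat{\rF}(K), \widehat{\rL}(K)$, the quotient maps $Q^{\rF}_K, Q^{\rL}_K$, the injective homomorphisms $\widehat{\rF}_\psi, \widehat{\rL}_\psi, \imath_K$, and the commuting diagram (\ref{Bc:6}). Against this backdrop, functoriality of $\widehat{\rF}, \widehat{\rL}$ and naturality of $Q^{\rF}, Q^{\rL}, \imath$ reduce to diagram chases, so the only substantial point is the \emph{causality} clause in (i).

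For functoriality I would argue as follows. Each $\widehat{\sharp}_\psi$, with $\sharp \in \{\rL, \rF\}$, is defined by the relation (\ref{Bc:5}) and was already shown to be an injective group morphism, hence an arrow of $\boldsymbol{\mathrm{Grp}}$. Since $Q^{\sharp}_{K_1}$ is surjective, the equalities $\widehat{\sharp}_{\psi' \circ \psi} = \widehat{\sharp}_{\psi'} \circ \widehat{\sharp}_\psi$ and $\widehat{\sharp}_{\mathrm{id}} = \mathrm{id}$ follow by precomposing with $Q^{\sharp}_{K_1}$ and using (\ref{Bc:5}) together with the functoriality of $\sharp$ established in the previous subsection. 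Naturality of $Q^{\sharp}: \sharp \to \widehat{\sharp}$ is exactly the square $\widehat{\sharp}_\psi \circ Q^{\sharp}_{K_1} = Q^{\sharp}_{K_2} \circ \psi$ recorded in (\ref{Bc:5}); naturality of $\imath$ is the middle square of (\ref{Bc:6}), which one verifies by precomposing with the surjection $Q^{\rL}_{K_1}$ and invoking (\ref{Bc:4}) and (\ref{Bc:5}); and the relation $\imath \circ Q^{\rL} = Q^{\rF}$, with $\rL$ read as a subfunctor of $\rF$, is (\ref{Bc:4}) componentwise.

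The heart of the argument is the causal functor property of $\widehat{\rL}$ in the sense of Definition \ref{Ab:2}. Given $\psi_i \in (K_i, K)$ and $o_1 \perp o_2$ with $\psi_i(K_i) \leq o_i$, I would pick lifts $A = Q^{\rL}_{K_1}(a)$ and $B = Q^{\rL}_{K_2}(b)$ with $a \in \rL(K_1)$, $b \in \rL(K_2)$, and rewrite $\widehat{\rL}_{\psi_1}(A) = Q^{\rL}_K(\psi_1(a))$, $\widehat{\rL}_{\psi_2}(B) = Q^{\rL}_K(\psi_2(b))$ using (\ref{Bc:5}). Their commuting in $\widehat{\rL}(K)$ is then equivalent to $[\psi_1(a), \psi_2(b)] \in \rC\rL(K)$. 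Writing $\psi_1(a) = p_1 \cdots p_m$ and $\psi_2(b) = q_1 \cdots q_l$ as products of loops, the generators of the $p_i$ are supported in $\psi_1(K_1) \leq o_1$ and those of the $q_j$ in $\psi_2(K_2) \leq o_2$, so by (\ref{Bb:4}) every pair satisfies $p_i \perp q_j$.

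The remaining step is pure commutator calculus. Using the free-group identities $[xy, z] = x\,[y,z]\,\overline{x}\,[x,z]$ and $[x, yz] = [x,y]\,y\,[x,z]\,\overline{y}$ and iterating, I would expand $[\,p_1 \cdots p_m,\, q_1 \cdots q_l\,]$ as a product of conjugates $w\,[p_i, q_j]\,\overline{w}$ in which each conjugating word $w$ is itself a product of some of the $p$'s and $q$'s and hence lies in $\rL(K)$. Since $p_i \perp q_j$, each such factor is by Definition \ref{Bc:1} a causal commutator of $\rL(K)$, so the product, namely $[\psi_1(a), \psi_2(b)]$, lies in $\rC\rL(K)$, establishing causality. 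I expect the bookkeeping in this last step to be the only delicate point: one must check that every conjugating factor stays inside $\rL(K)$, so that the element lands in $\rC\rL(K)$ rather than merely in $\rC\rF(K)$, where Lemma \ref{Bc:2} is the structural fact reconciling the two. Notably, the same argument breaks down for $\widehat{\rF}$, because the image under $\psi_i$ of a general word of $\rF(K_i)$ is a path rather than a product of loops and need not be expressible through commutators of causally disjoint \emph{loops}; this is exactly why (ii) asserts only a functor and not a group-causal functor.
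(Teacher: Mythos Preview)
Your proof is correct and follows the same overall route as the paper: functoriality and the naturality of $Q^{\rF}$, $Q^{\rL}$, $\imath$ are read off from (\ref{Bc:5}), (\ref{Bc:4}) and the diagram (\ref{Bc:6}), and the only point requiring work is the causality of $\widehat{\rL}$.

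The one difference is in how the causality step is finished. You lift to $\rL(K)$ and expand $[\psi_1(a),\psi_2(b)]$ via the identities $[xy,z]=x[y,z]\overline{x}\,[x,z]$ and $[x,yz]=[x,y]\,y[x,z]\overline{y}$ into a product of conjugates $w[p_i,q_j]\overline{w}$ with $w\in\rL(K)$, landing in $\rC\rL(K)$. The paper instead works directly in the quotient: writing $\psi_1(l_1)=p_m\cdots p_1$ and $\psi_2(l_2)=q_l\cdots q_1$ with each $p_i\perp q_j$, one has $[p_i,q_j]\in\rC\rL(K)$ by Definition~\ref{Bc:1} (with $w=1$), so $Q^{\rL}_K(p_i)$ and $Q^{\rL}_K(q_j)$ commute in $\widehat{\rL}(K)$ for all $i,j$, and hence the products $Q^{\rL}_K(\psi_1(l_1))$ and $Q^{\rL}_K(\psi_2(l_2))$ commute. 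This bypasses the commutator calculus entirely and makes the ``delicate bookkeeping'' you flag unnecessary; your argument is a valid unquotiented version of the same fact. Your closing remark on why $\widehat{\rF}$ is only a functor, not a causal functor, is apt and goes beyond what the paper spells out.
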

\begin{proof}
We only have to show the causal properties of $\widehat{\rL}$, that is, equation (\ref{Ab:3}).  
Consider $K_i,K\in \textbf{Causet}$ and $\psi_i\in (K_i,K)$ for $i=1,2$ and 
assume that $\psi_i(K_i)\leq o_i\in K$, for $i=1,2$, and $o_1\perp o_2$. 
Given $l_i\in \rL(K_i)$, by equation (\ref{Bc:5}) we have that 
\[
\widehat{\rL}_{\psi_i}\circ Q^{\rL}_{K_i}(l_i) = 
 Q^L_K\circ \psi_i(l_i) \ , \qquad i=1,2 \ ,
\]
and the proof follows since 
$\psi_1(l_1)$ and $\psi_2(l_2)$ are causally disjoint loops  of $\rL(K)$.  
\end{proof}
Using the functor $\rC^*$ introduced in Section \ref{Ab} and, in particular, the equation (\ref{Ab:5}),  
we derive from Theorem \ref{Bc:7},  the $\rC^*$-causal functor. 
\begin{corollary}
\label{Bc:8}
The following assertions hold.
\begin{itemize}
\item[(i)] The assignment $\widehat{\rL} _*:=\rC^*\circ \widehat{\rL}: \boldsymbol{\mathrm{Causet}}\to\boldsymbol{\rC^*\mathrm{Alg}}$   
is  a $\rC^*$-causal functor.
\item[(ii)] The assignment  $ \widehat{\rF}_*:=\rC^*\circ \widehat{\rF}: \boldsymbol{\mathrm{Causet}}\to\boldsymbol{\rC^*\mathrm{Alg}}$  is a  $\rC^*$-functor. 
\end{itemize}
There is   a natural transformation $\imath_*:=\rC^*\circ \imath: \widehat{\rL} _* \to \widehat{\rF} _*$.
\end{corollary}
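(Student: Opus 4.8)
The plan is to read off Corollary~\ref{Bc:8} as a formal consequence of Theorem~\ref{Bc:7} together with the construction~(\ref{Ab:5}) of a $\rC^*$-causal functor out of a group-causal functor. All the genuine content — the verification of the causality condition~(\ref{Ab:3}) at the group level and the behaviour of full group $\rC^*$-algebras under the relevant group morphisms — has already been established, so the argument reduces to composing functors and natural transformations and checking that they land in the correct target categories.

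For part~(i), Theorem~\ref{Bc:7}(i) gives that $\widehat{\rL}$ is a group-causal functor, and applying~(\ref{Ab:5}) with $\cG=\widehat{\rL}$ immediately yields that $\widehat{\rL}_*=\rC^*\circ\widehat{\rL}$ is a $\rC^*$-causal functor: functoriality is inherited from the two composed functors, and the causality~(\ref{Ab:3}) transports from the group level to the $\rC^*$-level by Remark~\ref{Ab:4}(2), since commuting images of groups of loops produce commuting images of their full group $\rC^*$-algebras. For part~(ii) the argument is purely formal: Theorem~\ref{Bc:7}(ii) provides the functor $\widehat{\rF}:\textbf{Causet}\to\textbf{Grp}$, and postcomposing with $\rC^*:\textbf{Grp}\to\rC^*\textbf{Alg}$ gives the functor $\widehat{\rF}_*=\rC^*\circ\widehat{\rF}$. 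No causality is asserted here, in keeping with the fact that $\widehat{\rF}$ was only shown to be a functor and not a causal one.

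For the natural transformation $\imath_*$, I would apply $\rC^*$ to the natural transformation $\imath:\widehat{\rL}\to\widehat{\rF}$ of Theorem~\ref{Bc:7}: the components are $\imath_{*,K}:=(\imath_K)_*:\rC^*\widehat{\rL}(K)\to\rC^*\widehat{\rF}(K)$, and naturality is obtained by applying the functor $\rC^*$ to the naturality squares of $\imath$, which are preserved precisely because $\rC^*$ is functorial. The only point requiring care — and the essentially unique non-formal ingredient throughout — is that the arrows of $\rC^*\textbf{Alg}$ must be unital $^*$-monomorphisms, so each $\imath_{*,K}$ must be shown injective. This is exactly the crucial property recorded in Section~\ref{Ab}: for a monomorphism of \emph{discrete} groups the induced map on full group $\rC^*$-algebras is injective. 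Since every $\imath_K$ is a group monomorphism and all the groups of loops are discrete, the injectivity of $\imath_{*,K}$ follows, and with it the whole corollary.
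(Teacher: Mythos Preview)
Your proposal is correct and follows exactly the approach the paper intends: the corollary is presented without proof, merely as the result of composing Theorem~\ref{Bc:7} with the functor $\rC^*$ via (\ref{Ab:5}), and your write-up simply unpacks the details (functoriality by composition, causality via Remark~\ref{Ab:4}(2), injectivity of $\imath_{*,K}$ from the discreteness remark in Section~\ref{Ab}) that the paper leaves implicit.
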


\subsection{The net of causal loops}
\label{Bd}

Finally, using the $\rC^*$-causal functor  $\widehat{\rL} _*$ and the $\rC^*$-functor $\widehat{\rF} _*$  of Corollary \ref{Bc:8}, 
we arrive to the main result of this  section, i.e.\ the definition of the causal and covariant net of $\rC^*$-algebras 
associated to a causal poset.
\smallskip

\indent Let $K$ be a causal poset with a symmetry group $S$. We consider the causal and covariant net of $\rC^*$-algebras
associated with  the  $\rC^*$-causal functor 
$\widehat{\rL}_*$. Namely, remembering the construction of Section \ref{Ac},  the fibres of the net are defined  
from equation (\ref{Ac:2}) as
\begin{equation}
\label{Bd:1}
 \cA_o := \widehat{\rL}_{*,\io_{K,(K|o)}}(\widehat{\rL}_*(K|o)) \  \ , \qquad   o\in K \  .   
\end{equation} 
Correspondingly, the $\rC^*$-algebra associated with $K$ is defined as $\cA(K):= \widehat{\rL}_*(K)$;  this yields 
a causal net $(\cA,\mathrm{id})_K$  of $\rC^*$-subalgebras of $\cA(K)$. Finally,  the action of the symmetry group $S$ of $K$ 
on this net is defined by 
\begin{equation}
\label{Bd:2}
\alpha_s:= \widehat{\rL}_{*,s} \ , \qquad s\in S \ . 
\end{equation} 
Then, according to Theorem \ref{Ac:5}, $(\cA,\mathrm{id},\alpha)_K$  
 is a causal and $S$-covariant net of $\rC^*$-algebras over $K$, that we call 
\textbf{the $\rC^*$-net of causal loops over $K$}. \\
\indent In a similar way,  the $\rC^*$-functor $\widehat{\rF}_*$ yields 
the $\rC^*$-dynamical system $(\cF(K),S,\varphi)$, where $\cF(K):= \widehat{\rF}_*(K)$ and 
$\varphi_s:=\widehat{\rF}_{*,s}$ for any $s\in S$.
Then,  as the fibres   of the net of causal loops  are subalgebras  
of $\cA(K)$ and using the  natural transformation $\rho:=\imath_*$ given in Corollary \ref{Bc:8},  we have that 
\begin{equation}
\label{Bc:9}
(\cA,\mathrm{id},\alpha)_K\stackrel{\subseteq}{\longrightarrow} (\cA(K),S,\alpha) \stackrel{\rho_K} {\longrightarrow}(\cF(K),S,\varphi) 
\end{equation}
is a sequence of $S$-covariant monomorphisms.  These observations lead to the following
\begin{corollary}
\label{Bc:10}
Let $K$ be a causal poset with a symmetry group $S$. Then, 
\begin{itemize}
\item[(i)]  The net  of causal loops  $(\cA,\mathrm{id},\alpha)_K$ is a causal 
and $S$-covariant net of $\rC^*$-subalgebras of $\cA(K)$. 
\item[(ii)] $\rho_K: (\cA(K),S,\alpha)\to (\cF(K),S,\varphi)$ is an $S$-covariant monomorphism of $\rC^*$-dynamical systems.  
\end{itemize}
\end{corollary}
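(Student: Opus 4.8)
The plan is to read off both assertions from the functorial machinery already assembled, since the genuine analytic input — injectivity of $\rC^*$ on monomorphisms, and the causal commutation relation (\ref{Ab:3}) — has been isolated in Section \ref{Ab} and in Theorem \ref{Bc:7}. For (i) I would simply invoke Theorem \ref{Ac:5} for the particular $\rC^*$-causal functor $\widehat{\rL}_*$ supplied by Corollary \ref{Bc:8}(i). Indeed, by the defining equation (\ref{Bd:1}) the fibres are $\cA_o=\widehat{\rL}_{*,\io_{K,(K|o)}}(\widehat{\rL}_*(K|o))$, which are exactly the subalgebras of $\cA(K)=\widehat{\rL}_*(K)$ produced by the general construction of Section \ref{Ac}, and (\ref{Bd:2}) sets $\alpha_s=\widehat{\rL}_{*,s}$. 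Hence Theorem \ref{Ac:5}(i)--(ii) applies verbatim and gives that $(\cA,\mathrm{id})_K$ is a causal net of $\rC^*$-subalgebras of $\cA(K)$ which is $S$-covariant; no further work is required.

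For (ii) there are two points to verify. First, that $\rho_K=(\imath_K)_*$ is a $\rC^*$-monomorphism: this follows because $\imath_K:\widehat{\rL}(K)\to\widehat{\rF}(K)$ is a group monomorphism (constructed just before Theorem \ref{Bc:7}, using Lemma \ref{Bc:2}) and the enveloping-$\rC^*$-algebra functor $\rC^*$ carries group monomorphisms to unital $^*$-monomorphisms — precisely the discreteness point stressed in Section \ref{Ab}. Thus $\rho_K$ is injective as an arrow of the category of $\rC^*$-algebras. Second, the covariance relation $\rho_K\circ\alpha_s=\varphi_s\circ\rho_K$ for all $s\in S$ is nothing but the naturality of the transformation $\imath_*$ of Corollary \ref{Bc:8}, evaluated on the symmetry arrow $s\in S\subseteq(K,K)$. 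Concretely, naturality of $\imath_*$ on $\psi\in(K_1,K_2)$ reads $\rho_{K_2}\circ\widehat{\rL}_{*,\psi}=\widehat{\rF}_{*,\psi}\circ\rho_{K_1}$; specialising to $K_1=K_2=K$ and $\psi=s$, and recalling $\alpha_s=\widehat{\rL}_{*,s}$ and $\varphi_s=\widehat{\rF}_{*,s}$, yields exactly the asserted intertwining, so $\rho_K$ is a covariant morphism of the $\rC^*$-dynamical systems $(\cA(K),S,\alpha)$ and $(\cF(K),S,\varphi)$.

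I do not expect a real obstacle here: the statement is an assembling corollary and each half is a direct specialisation of an earlier result. The only point deserving a word of care is the claim implicit in (\ref{Bc:9}) that the embedding $(\cA,\mathrm{id},\alpha)_K\hookrightarrow(\cA(K),S,\alpha)$ is itself $S$-covariant; but this is immediate, since the fibre isomorphisms $\alpha_s:\cA_o\to\cA_{so}$ are by definition the restrictions of the single global automorphism $\widehat{\rL}_{*,s}$ of $\cA(K)$, so the inclusion manifestly intertwines the $S$-actions. Composing this equivariant inclusion with the covariant monomorphism $\rho_K$ of (ii) then exhibits (\ref{Bc:9}) as a sequence of $S$-covariant monomorphisms, completing the proof.
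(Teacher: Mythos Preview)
Your proposal is correct and follows essentially the same approach as the paper: the corollary is stated without a formal proof, being presented as a direct consequence of the discussion immediately preceding it in Section \ref{Bd}, which invokes Theorem \ref{Ac:5} for part (i) and the natural transformation $\rho=\imath_*$ of Corollary \ref{Bc:8} for part (ii). Your explicit unpacking of the naturality square for $\imath_*$ evaluated at $s\in S\subseteq(K,K)$ and your remark on the $S$-equivariance of the inclusion $(\cA,\mathrm{id},\alpha)_K\hookrightarrow(\cA(K),S,\alpha)$ spell out precisely what the paper leaves implicit in the phrase ``These observations lead to the following''.
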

\noindent This corollary represents  one of the main results of the present paper.  In fact, 
for the class of spacetimes discussed in 
Section \ref{Aaa},   
the corresponding net of causal loops  is a causal net of non-Abelian $\rC^*$-algebras which is covariant under the symmetry group of the spacetime. 
\begin{theorem}
\label{Bc:11}
The following assertions hold: 
\begin{itemize}
\item[(i)]  the net of causal loops over the set of \textbf{double cones} of the Minkowski spacetime $\bM^4$
is covariant under the Poincar\'e group; 
\item[(ii)] the causal net of loops over the set of \textbf{open intervals} 
of $S^1$ is covariant under the M\"obius group (or the diffeomorphism group of $S^1$); 
\item[(iii)]  the causal net of loops over the set of \textbf{diamonds} of a globally hyperbolic spacetime $\cM$ 
        is covariant under any global isometry or under any conformal diffeomorphism. 
\end{itemize}
In all these cases the fibres of the net of causal loops are \textbf{non-Abelian} $\rC^*$-algebras. 
\end{theorem}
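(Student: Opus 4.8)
The plan is to separate the two assertions. The covariance statements (i)--(iii) are essentially formal and follow from the machinery already built, whereas the non-Abelianness of the fibres is the only part that carries geometric and group-theoretic content.

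\emph{Covariance.} For each of the three posets the relevant symmetry group is, by the discussion in Section~\ref{Aaa}, a group of bijections of $K$ preserving both the inclusion order and the causal disjointness relation, hence a subgroup of $(K,K)$ in the sense of Section~\ref{Aa}: the Poincar\'e group $\sP^\uparrow_+$ permutes the double cones of $\bM^4$ preserving inclusion and spacelike separation; the M\"obius group (and more generally the diffeomorphism group of $S^1$) permutes the proper intervals of $S^1$ preserving inclusion and set-disjointness; global isometries and conformal diffeomorphisms permute the diamonds of $\cM$ preserving inclusion and causal disjointness. Thus in each case one has a symmetry group $S\subseteq(K,K)$, and Corollary~\ref{Bc:10}(i) --- equivalently Theorem~\ref{Ac:5}(ii) applied to the $\rC^*$-causal functor $\widehat{\rL}_*$ of Corollary~\ref{Bc:8} --- immediately produces an $S$-covariant causal net. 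So (i)--(iii) reduce to the stability of the posets under these groups, which is recalled above.

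\emph{Reduction of non-Abelianness.} By the definition (\ref{Bd:1}) the fibre $\cA_o=\widehat{\rL}_{*,\io_{K,(K|o)}}\big(\widehat{\rL}_*(K|o)\big)$ is a copy of the full group $\rC^*$-algebra $\rC^*\widehat{\rL}(K|o)$, and since $g\mapsto\delta_{*,g}$ embeds any discrete group into its group $\rC^*$-algebra (Remark~\ref{Ab:4}), the fibre $\cA_o$ is non-Abelian as soon as the group $\widehat{\rL}(K|o)=\rL(K|o)/\rC\rL(K|o)$ is non-Abelian. Hence it suffices to exhibit, inside $(K|o)$ for one (equivalently any) $o$, two loops $p_1,p_2$ whose commutator $[p_1,p_2]$ does \emph{not} lie in the causal-commutator subgroup $\rC\rL(K|o)$ of Definition~\ref{Bc:1}. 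I would produce such loops from the causal geometry common to the three examples --- two spacelike double cones inside a given double cone, two intervals with disjoint closures inside a given interval, and likewise for diamonds, supplemented by properties 1--4 of Section~\ref{Aaa}, which guarantee enough elements above a common base to form nontrivial (reduced, non-cancelling) loops that are \emph{not} themselves causally disjoint.

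\emph{The witness, and the main obstacle.} To certify $[p_1,p_2]\notin\rC\rL(K|o)$ the plan is to construct a \emph{witness} homomorphism $\phi\colon\rL(K|o)\to H$ into an explicit non-Abelian group $H$ (say a group of $2\times2$ unitaries, or a suitable right-angled Artin / permutation group) with the property that $\phi(q_1)$ and $\phi(q_2)$ commute whenever $q_1\perp q_2$ are causally disjoint loops; by Definition~\ref{Bc:1} such a $\phi$ kills every generator $w\,[q_1,q_2]\,\overline{w}$ of $\rC\rL(K|o)$, hence factors through $\widehat{\rL}(K|o)$, and if moreover $\phi(p_1)\phi(p_2)\neq\phi(p_2)\phi(p_1)$ then the commutator survives and $\widehat{\rL}(K|o)$ is non-Abelian --- this is exactly the input later supplied by a causal connection, a bespoke finite-dimensional version sufficing here. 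The hard part is precisely this witness: $\rC\rL(K|o)$ is an infinitely generated normal subgroup (the normal closure of all commutators of causally disjoint loops), so the single map $\phi$ must trivialise this entire subgroup while still detecting $[p_1,p_2]$. Designing $\phi$ so that causal disjointness \emph{alone} forces commutativity of images, without inadvertently forcing $\phi(p_1)$ and $\phi(p_2)$ to commute, is the crux, and is where the specific causal structure of each spacetime must be invoked case by case for (i)--(iii).
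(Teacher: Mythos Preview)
Your covariance argument and the reduction to non-Abelianness of $\widehat{\rL}(K|o)$ match the paper exactly. The divergence is in how non-Abelianness is established.

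The paper does \emph{not} construct a witness homomorphism. Instead it makes the following observation, which you have overlooked: every generator of $\rC\rL(K|o)$ is a conjugate of a commutator, so $\rC\rL(K|o)\subseteq[\rL(K|o),\rL(K|o)]$ automatically. Hence $\widehat{\rL}(K|o)$ is non-Abelian if and only if this containment is \emph{proper}. The paper then invokes properties~1 and~2 of Section~\ref{Aaa} to conclude that $\rL(K|o)$ is a free group on infinitely many generators (Nielsen--Schreier applied to the subgroup of $\rF(K|o)$), and asserts that the normal closure of the causally-disjoint commutators is a proper subgroup of the full commutator subgroup --- essentially because only a restricted class of commutators (those of loops whose supports can be separated by a causally disjoint pair in $K$) is being killed, whereas loops sharing a basepoint, for instance, are never causally disjoint.

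So the paper's route is purely group-theoretic and bypasses your ``hard part'' entirely: no $\phi$, no target group $H$, no case-by-case analysis of the causal geometry beyond properties~1--2. What your approach buys is rigour at the one point where the paper is terse --- the properness claim is stated rather than proved --- and indeed a witness homomorphism is precisely what would certify it. You are right that causal connections (Section~\ref{Cb}) supply exactly such witnesses later, but for the theorem at hand the paper regards the combinatorial properness as evident and does not detour through representations. If you want to follow the paper, drop the witness and argue directly that $\rC\rL(K|o)\subsetneq[\rL(K|o),\rL(K|o)]$; if you want to strengthen the paper, your witness (even a finite one into, say, a two-generator free group via a suitable labelling of $\rT_1(K|o)$) would close that gap.
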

\begin{proof}
We have to prove only that the fibres are non-Abelian $\rC^*$-algebras. According to the Definition (\ref{Bd:1}) 
and to the definition  of the functor $\widehat{\rL}_*$ given in Corollary \ref{Bc:8},   
it is enough to prove that \smash{$\widehat{\rL}(K|o)$} is a non-Abelian group. To this end recall 
that   \smash{$\widehat{\rL}(K|o)=\rL(K|o)/ \rC\rL(K|o)$}. We now make use of the properties of 
the posets  associated to the above spacetimes, outlined at the end of the Section  \ref{Aaa}. 
By the properties 1 and  2 follows that $\rL(K|o)$ is a free group with infinite generators. Since   
the subgroup $\rC\rL(K|o)$ of causal commutators $\rL(K|o)$
is a proper subgroup of the commutator subgroup of $\rL(K|o)$,  the quotient  $\widehat{\rL}(K|o)$
is a non-Abelian group. 
\end{proof}

\section{Representations of the net of causal loops}  
\label{C}

We have seen that a causal  net 
of $\rC^*$-algebras, the net of loops, is associated to any space(time) manifold;  this is   covariant with respect to any symmetry group of the space(time).  In this section we shall  deal with  the representations of the  net of loops. On the one hand, 
we want to understand  
how the combinatorial structure which underlies  the net of loops reflects on its representations. 
On the other hand,  we want to  understand, in the case of a topological  symmetry group,  
whether non-trivial continuously covariant representations of the covariant net exist. \\
\indent  We shall focus  on two classes  of representations, both of them   
admitting  an interpretation in terms of cohomology of posets. \\
\indent The first class is the one of representations induced by the dynamical system $(\cF(K),S,\varphi)$, equation (\ref{Bc:9}). 
These correspond to  connection 1-cochains of the poset $K$, according to the definition given in \cite{RR},   which in addition satisfy some  causal and covariant properties. By means of this
correspondence we shall point out, at the end of this section,  a sort 
of ``physical universal property" of the net of loops: any Bosonic free quantum field on a suited class of space(time)s, 
including the Minkowski spacetime  and the circle $S^1$,  provides a representation of this net. 
Moreover, in these cases,  the  construction  implies  
the existence of non-trivial, continuously covariant representations of the net.\\   
\indent The second class is the one of representations induced by the dynamical system $(\cA(K),S,\alpha)$, equation (\ref{Bc:9}).
Any such a representation results to be associated to a causal and covariant \emph{connection system}, i.e.\  
a collection of connections where causality and covariance are properties of the system  and not of  the  single connection.
Actually, the system associated to a representation is not unique; two systems associated to the  same representation    
are related by a suitable notion of  \emph{gauge} transformation.  In this  sense,  we may establish a link with 
quantum gauge theories, interpreting  such a  representation  as a gauge field  and  any 
connection system  associated to the representation  as a gauge potential. \smallskip

From now on, being interested to the applications  to quantum field theory,  
the causal posets we consider are those associated to physical spacetimes, i.e.\  the ones listed in Section \ref{Aaa}.

\subsection{Preliminaries}
\label{Ca}

Fix a causal poset $K$ with a  symmetry group $S$. 
Then $S$ acts on $K$ by causal automorphisms, i.e. 
$s \in (K,K)$,  for any $s\in S$,
and extending by functoriality the above maps we get $S$-actions by group automorphisms
on $\rF(K)$ and $\rL(K)$ which, since causality and order are preserved, factorize 
through the $S$-actions $\widehat\rF_s$, $\widehat\rL_s$, $s \in S$, 
on $\widehat{\rF}(K)$, $\widehat{\rL}(K)$ respectively.
So we get actions on the associated $\rC^*$-algebras
\[
\alpha_s = \widehat{\rL}_{*,s} \in {\bf aut}\cA(K)
\ , \
\varphi_s = \widehat{\rF}_{*,s} \in {\bf aut}\cF(K)
\ \ , \ \
 s \in S
\ .
\]
In this way the results of the previous section are summarized as follows,
\begin{equation}
\label{Ca:1}
\xymatrix{ 
(\rF(K),S)\ar[r]^{Q^F_K} & (\widehat{\rF}(K),S) \ar@{.>}[r]^{\rC^*} & (\cF(K),S,\varphi)  \\ 
(\rL(K),S)\ar[u]_{\subseteq}\ar[r]_{Q^L_K} & (\widehat{\rL}(K),S)\ar[u]^{\imath_K} \ar@{.>}[r]_{\rC^*} & (\cA(K),S,\alpha) \ar[u]_{\rho_K}
\\
& & (\cA,\mathrm{id},\alpha)_K \ar[u]_{\subseteq} }
\end{equation}
Here, on the l.h.s.\ we have a commuting diagram of groups carrying 
an $S$-action, whilst on the r.h.s.\ we have an $S$-covariant inclusion 
of the net $(\cA,\mathrm{id},\alpha)_K$ into the $\rC^*$-dynamical system $(\cA(K),S,\alpha)$ which, 
in turn, embeds in the $\rC^*$-dynamical system $(\cF(K),S,\varphi)$ defined by
$(\widehat{\rF}(K),S)$.\smallskip 

Now, a \emph{covariant representation} of the net $(\cA,\mathrm{id},\alpha)_K$ is a pair 
$(\pi,\Gamma)$ where  $\pi=\{\pi_o:\cA_o\to\cB\cH\,|\, o\in K\}$ is a family 
of representations on a fixed Hilbert space $\cH$ satisfying 
\begin{equation}
\label{Ca:2a}
\pi_a\restriction \cA_o=\pi_o \ , \qquad o\leq a \ ,
\end{equation}
and $\Gamma$ is a  unitary representation of the symmetry group $S$ on $\cH$ such that 
\begin{equation}
\label{Ca:2b}
\mathrm{Ad}_{\Gamma_L}\circ \pi_o=\pi_{Lo} \ , \qquad o\in K \ , L\in S \ . 
\end{equation}
If $\Gamma$ is strongly continuous, then we say that $(\pi,\Gamma)$ is \emph{continuously covariant}.\smallskip

The net  of causal loops $(\cA,\mathrm{id},\alpha)_K$ has non-trivial covariant 
representations  since it is embedded in $(\cA(K),S,\alpha)$, and any 
covariant representation of this dynamical system  induces a covariant 
representation of the net. Notice however that 
$(\cF(K),S,\varphi)$ induces, \emph{via} the  morphism  $\rho_K$,  
representations of the net as well.  In what follows, we shall analyze
these two classes of representations in terms of the cohomology of posets and, as a by-product,
we shall show  at the end the existence of continuously covariant representations. 

\begin{remark}
\label{Ca:3}
A net of $\rC^*$-algebras
admits more general representations when  defined over a nonsimply-connected poset \cite{RV}.  
The form of the representations we are considering in the present paper 
are called \emph{Hilbert space representations} in \cite{RV}. We restrict to this set of representations 
because any representation of the net of causal loops induced by the above dynamical systems 
has this form. 
\end{remark}

\subsubsection{A cohomological description of connections over posets} 
\label{Caa}

The notion of a connection over 
a poset was introduced in the context of 
%
%
(non-Abelian) cohomology of a poset, 
with the aim of finding a suitable framework 
for studying gauge theories in the setting of algebraic quantum field theory \cite{RR}. \\
\indent A \emph{connection} 1-cochain of a poset $K$ 
is a field $u:\Si_1(K)\ni b\to u(b)\in \cU(\cH)$ of unitary operators of a Hilbert space $\cH$ satisfying  
$u(b)^{-1}= u(\overline{b})$, $\forall b\in\Si_1(K)$, 
and the 1-cocyle equation on the nerve of $K$:
\begin{equation}
\label{Caa:1}
 u(\partial_0c) \, u(\partial_2c)= u(\partial_1c) \ , \qquad c\in\rN_2(K) \ .
\end{equation}
A connection  is \emph{flat} if it is a 1-cocycle, that is,  if the 1-cocycle equation is verified 
on all $\Si_2(K)$. This definition finds its motivation in several properties that such a notion 
share with the corresponding one  on fibred bundles over manifolds. The curvature of a
connection,  its 2-coboundary, satisfies an equation analogous of the Bianchi identity; 
a connection is flat if, and only if, its curvature is trivial; any flat connection 
defines a representation of the fundamental group of the poset $K$; there is 
an anologue of the Ambrose-Singer theorem.  
Furthermore, any connection identifies 
the fibred bundle where it lives: precisely, a (unique) 1-cocycle is associated with 
any connection, and the former is  nothing but that 
a fibred bundle over the poset $K$. 
So, a gauge transformation of a connection 
turns out to be an automorphism of the associated flat connection. 
Finally, any fibred bundle over $K$ can be interpreted as a flat  bundle 
in the sense of differential geometry when $K$ is a base of a manifold, see \cite{RRV}.

\subsection{Connections}
\label{Cb}

We now show that representations of the net of causal loops induced by the dynamical system $(\cF(K),S,\varphi)$ 
correspond to a class of connections of $K$.
To start, we generalize the notion of connection 1-cochain of a poset, recalled above, 
taking  into account the properties of causality and covariance. 
\begin{definition}
\label{Cb:1}
Given a causal poset $K$ with a symmetry group $S$. A \textbf{causal and covariant connection over $K$} 
is a pair $(u,\Gamma)$  where $u$ is  a field 
$\rT_1(K)\ni b\to u(b)\in \cU(\cH)$ of unitary operators of a Hilbert space $\cH$ and  
$\Gamma$ is a unitary representation of $S$ in $\cH$ satisfying the relations:
\begin{itemize}
\item[(1)] $ u(\overline{b})= u(b)^{-1}$ for any $b\in\rT_1(K)$; 
\item[(2)] $\Gamma_s  u(b)=  u(s(b))\, \Gamma_s$ for any $s\in S$ and $b\in\rT_1(K)$;
\item[(3)] $ u(w_1)  u(w_2)=  u(w_2) u(w_1)$  for any  $w_1,w_2\in \rL(K)$ such that $w_1\perp w_2$.
\end{itemize}
If $\Gamma$ is strongly continuous, then we say 
that   $( u,\Gamma)$ is \textbf{continuously covariant}. 
\end{definition}
\noindent Note that although $ u$ is defined only on $\rT_1(K)$ it admits an obvious extension 
$ u'$ on all $\Si_1(K)$,
\[
  u' (b):=\left\{
 \begin{array}{ll}
   u(b) \ , & b\in\rT_1(K) \ , \\
  1 \ , & b\in\rN_1(K) \ ,
 \end{array}\right.
\] 
since $\rT_1(K)$ and $\rN_1(K)$ are disjoint subsets of $\Si_1(K)$;
so $ u'$ is a connection in the sense of \cite{RR} (see Subsection \ref{Caa} above) and the results 
of that paper applies. \smallskip
%
%

\indent The relation between causal and covariant connections of $K$ and covariant representations 
of the net $(\cA,\mathrm{id},\alpha)_K$ induced by $(\cF(K),S,\varphi)$  is stated in the next  
\begin{lemma}
\label{Cb:2}
Let $K$ be a causal poset with a symmetry group $S$. Then 
there exists a bijective correspondence between covariant representations of the $\rC^*$-dynamical 
system $(\cF(K),S,\varphi)$ and causal and covariant connections of $K$.  
\end{lemma}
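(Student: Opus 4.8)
The plan is to assemble the bijection from the functorial identifications already in place. By the first part of Remark~\ref{Ab:4}, representations $\pi$ of $\cF(K)=\rC^*\widehat{\rF}(K)$ are in bijection with unitary representations of the group $\widehat{\rF}(K)=\rF(K)/\rC\rF(K)$, and these are the same as unitary representations $\rho$ of the free group $\rF(K)$ that are trivial on the normal subgroup $\rC\rF(K)$. Since $\rF(K)$ is (non-canonically) free on a set of representatives of $\rT_1(K)/\!\sim_{opp}$, giving $\rho$ amounts to giving a field $b\mapsto u(b):=\rho(b)$ of unitaries on $\rT_1(K)$ subject only to the relation $u(\overline b)=u(b)^{-1}$ enforced by $b^{-1}=\overline b$, which is precisely condition (1) of Definition~\ref{Cb:1}; conversely such a field extends uniquely to a homomorphism $\rho:\rF(K)\to\cU(\cH)$. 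So I would first fix these identifications and then check that, under them, the intertwiner $\Gamma$ and the remaining conditions (2) and (3) match on the two sides.

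For covariance I would simply unwind the definitions: by (\ref{Bc:5}) and (\ref{Bb:5}) the automorphism $\widehat{\rF}_s$ acts on generators by $\widehat{\rF}_s[b]=[s(b)]$, so the covariance relation $\mathrm{Ad}_{\Gamma_s}\circ\pi=\pi\circ\varphi_s$ of the dynamical system $(\cF(K),S,\varphi)$ translates, in the group picture, into $\Gamma_s\,u(b)=u(s(b))\,\Gamma_s$ for all $s\in S$ and $b\in\rT_1(K)$, which is exactly condition (2). This is a direct transport of structure and requires no real work.

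The substantive point is the equivalence between triviality of $\rho$ on $\rC\rF(K)$ and the causal commutation condition (3). One direction is immediate: a single loop $p$ lies in $\rL(K)$, so condition (3) applied to a causally disjoint pair $p_1\perp p_2$ yields $\rho([p_1,p_2])=1$, whence $\rho$ annihilates every generator $w\,[p_1,p_2]\,\overline w$ of $\rC\rF(K)$. The converse is the main obstacle, because condition (3) ranges over arbitrary causally disjoint $w_1,w_2\in\rL(K)$, not just single loops. Here I would invoke the description of reduced loop-words given after Definition~\ref{Bb:7}: writing $w_1$ in reduced form $w_1^{\rr}=P_1\cdots P_r$ as a product of reduced loops over pairwise distinct consecutive base points, the absence of cancellation between loops over different base points gives $|w_1^{\rr}|=\bigcup_i |P_i|$, and likewise $|w_2^{\rr}|=\bigcup_j|Q_j|$. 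Consequently, if $w_1\perp w_2$ is witnessed by $o_1\perp o_2$ with $|w_1^{\rr}|\leq o_1$ and $|w_2^{\rr}|\leq o_2$, then $|P_i|\leq o_1$ and $|Q_j|\leq o_2$ for every $i,j$, so each pair $P_i\perp Q_j$ is causally disjoint in the sense of (\ref{Bb:4}). Triviality of $\rho$ on $\rC\rF(K)$ then gives $\rho([P_i,Q_j])=1$, and since $\rho(w_1)=\prod_i\rho(P_i)$ commutes factorwise with $\rho(w_2)=\prod_j\rho(Q_j)$, condition (3) follows.

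Finally I would note that the two assignments $(u,\Gamma)\mapsto(\pi,\Gamma)$ and $(\pi,\Gamma)\mapsto(u,\Gamma)$ are mutually inverse, which is automatic since both reduce on generators to the single identity $u(b)=\rho(b)$; this establishes the asserted bijection. I expect the loop-decomposition step of the previous paragraph to be the only place requiring genuine care, the remainder being a routine passage along the identifications $\cF(K)\leftrightarrow\widehat{\rF}(K)\leftrightarrow\rF(K)$ already available.
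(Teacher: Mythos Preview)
Your proof is correct and follows essentially the same route as the paper's: both pass along the chain $\cF(K)\leftrightarrow\widehat{\rF}(K)\leftrightarrow\rF(K)$ via Remark~\ref{Ab:4} and the quotient $Q^\rF_K$, identify condition~(1) with the defining relation of $\rF(K)$, read off~(2) from covariance, and match~(3) with triviality on $\rC\rF(K)$. The only difference is that the paper dismisses the direction ``$\rho|_{\rC\rF(K)}=1 \Rightarrow$ (3)'' with ``one can easily see'', whereas you supply the loop-decomposition argument explicitly; your reasoning there (using the reduced-word description after Definition~\ref{Bb:7} to factor $w_1^{\rr},w_2^{\rr}$ into single loops with supports bounded by the witnessing $o_1,o_2$) is exactly the verification the paper suppresses.
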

\begin{proof}
$(\Rightarrow)$ Let $(\pi,\Gamma)$ be a covariant representation of $(\cF(K),S,\varphi)$ and 
$(\pi^*,\Gamma)$ the induced representation of $(\widehat{\rF}(K),S)$ (see Remark \ref{Ab:4}). 
Finally let $(\check{\pi}^*,\Gamma)$ be the covariant representation 
of $(\rF(K),S)$ defined by $\check{\pi}^*:=\pi^*\circ Q^F$. By restricting 
$\check{\pi}^*$ to $\rT_1(K)$ one can easily see that the pair $(\check{\pi}^*,\Gamma)$ is 
a causal and covariant connection of $K$.\\
\indent $(\Leftarrow)$ Let $( u,\Gamma)$ be a causal and covariant connection of $K$.
Since $\rT_1(K)$ generates $\rF(K)$ we can extend $ u$ over all $\rF(K)$,
\[
 u(w) :=  u(b_n) \cdots  u(b_1)
\ \ , \ \
w = b_n \cdots b_1 \in \rF(K)
\ , \
b_k \in \rT_1(K)
\ , \
k = 1 , \ldots , n
\ .
\]
This yields a covariant representation $( u,\Gamma)$ of $(\rF(K),S)$. 
By property (3) of  Definition \ref{Cb:1}, $ u$ annihilates on the subgroup of causal commutators of $\rF(K)$,
so it admits an extension to a representation $(\hat u,\Gamma)$ 
of $(\widehat{\rF}(K),S)$. Finally, this induces 
a covariant representation $(\hat u_*,\Gamma)$ 
of $(\cF(K),S,\varphi)$, see Remark \ref{Ab:4}. \\
\indent It is easily seen that these two constructions are each other's inverses.
\end{proof}
\noindent Some observations are in order. First, this lemma  enhances the results obtained in \cite{RR}, 
since it proves the existence of non-trivial causal and covariant connections. 
Secondly, the above correspondence preserves the strongly  continuity 
of the representation of the symmetry group. So we have a 1-1 correspondence between continuously covariant 
representations of the net of causal loops induced by $(\cF(K),S,\varphi)$ and causal 
and continuously covariant connections of $K$.

%
%
%
%
\subsection{Connection systems} 
\label{Cc}
The representations of the net of causal loops induced by $(\cA(K),S,\alpha)$ generalize 
those induced by   $(\cF(K),S,\varphi)$.   But, as  in general there does not exist a 
canonical extension of covariant representations between an inclusion of $\rC^*$-dynamical systems, 
we cannot follow the route of the  previous section 
to get a correspondence with the cohomology of posets.  So, following  the scheme (\ref{Ca:1}),   
we first associate a representation of 
the group of loops $\rL(K)$ to any representation of $(\cA(K), S,\alpha)$ and, afterwards, 
associate 1-cochains of $K$ to  representations of $\rL(K)$.  \smallskip

To begin, we give the following 
\begin{definition}
\label{Cc:1}
Let $K$ be a causal poset with a symmetry group $S$. A 
\textbf{causal and  covariant  representation} of $\rL(K)$
is  a pair  $(w,\Gamma)$, where $w$ is a unitary representation
of $\rL(K)$ on a Hilbert space $\cH$ and  $\Gamma$ is a
unitary representation of $S$ on $\cH$ satisfying the relations   
\begin{enumerate}
\item  $\Gamma_s \, w(p)=w(s(p))\, \Gamma_s$ for any $s\in S$ and loop $p$;
\item  $[w(p),w(q)]=0$ for any pair $p,q$  of causally disjoint loops.
\end{enumerate}
We say 
that  $(w,\Gamma)$ is \textbf{equivalent} to another such a representation  $(w',\Gamma')$ on a Hilbert space $\cH'$ 
if there is a field $t:\Si_0(K)\ni o\to t(o)\in\cU(\cH,\cH')$ of unitary operators such that 
\[
t(o)\, w(p)=w'(p)\, t(o) \ \ , \ \  t(s(o))  \, \Gamma_s= \Gamma'_s\, t(o)  \ , \qquad   p:o\to o  \ , \ s\in S \ .   
\] 
If $\Gamma$ is strongly continuous, then we say that  $(w,\Gamma)$ is \textbf{continuously covariant}.
\end{definition}

\noindent Now, following the same reasoning of the proof of Lemma \ref{Cb:2}, it follows that: \\
\emph{(continuously) covariant representations of the 
dynamical system $(\cA(K), S,\alpha)$ are in a 1-1 correspondence  with causal and (continuously) covariant
representations of $\rL(K)$}. 
\smallskip

\indent What remains to be understood is  how to relate  such 
representations of $\rL(K)$  to  connection 1-cochains of $K$.   To this end, we recall   
a standard  procedure  to get 1-cochains from functions defined over loops.
This procedure, borrowed from algebraic topology, has been used to establish a correspondence 
between representations of the fundamental group of $K$ and 1-cocycles in   
\cite{Ruz}. Basically it makes use of  the notion of a   path-frame introduced in \cite{BR}.
A \emph{path-frame $P_o$ over a pole $o$} is a choice of a path 
$p_{(o,a)}:o\to a$ for any 
element $a$ of the poset $K$,  
subjected to the condition that $p_{(o,o)}=1$. 
Clearly, since any  causal  poset 
is pathwise connected, path frames formed by elements of $\rT_1(K)$ exist  
\footnote{This holds true for the causal posets we are considering, namely those of Section \ref{Aaa}. 
According to the definition of $\rT_1(K)$, it is enough to observe that if $b=(a;a,o)\in\rN_1(K)$, i.e.\ is  a 1-simplex  of the nerve,  
we can always find $\tilde a\Supset a$ (see Section \ref{Aaa}). 
Then the 1-simplex $(\tilde a;a,o)\in\rT_1(K)$ joins the faces 
$a$ and $o$.}.  \\
\indent So,  following \cite{BR}, given a path frame $P_o$ 
and a causal and covariant representation $(w,\Gamma)$ of $\rL(K)$ define  
\[
u(b) :=  w(p_{(o,\partial_0b)}\,b\, \overline{p_{(o,\partial_1b)}}) \ , \qquad b\in\rT_1(K) \ . 
\]
This is a connection 1-cochain since   $u(\overline b)=u(b)^*$
but, in general, $u$  is neither covariant nor causal in the sense of the Definition \ref{Cb:1}. \smallskip 

To get causality and covariance we need a \emph{covariant path-frame system}, that is  
a collection of path frames $\cP:=\{P_o\, |\, o\in K\}$ satisfying 
\begin{equation}
\label{Cc:2}
 s(P_a)= P_{s(a)} \ , \qquad a\in K \ , \ s\in S \ ,   
\end{equation}
which amounts to saying that $s(p_{(a,x)})=p_{(s(a),s(x))}$, where $p_{(a,x)}\in P_a$ whilst 
$p_{(s(a),s(x))}\in  P_{s(a)} $. Note that, 
according to this definition, if $s$ is in the stabilizer $S_a$ of $a$, then 
$s(P_a)=P_a$.\smallskip 

We now give a necessary and sufficient condition for the existence of covariant path-frame systems, 
afterwards we apply this result to two remarkable examples. 
\begin{proposition}
\label{Cc:3}
Let $K$ be a causal poset with a symmetry group $S$. 
$K$ has a covariant path-frame system if, and only if, 
for any pair $o,a\in K$  there is a path joining $a$ to $o$ 
whose generators  are  invariant under   $S_a\cap S_o$. 
\end{proposition}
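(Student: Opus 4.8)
The plan is to prove both implications by exploiting the diagonal $S$-action on ordered pairs in $K\times K$ together with the uniqueness of reduced words in the free group $\rF(K)$. Throughout I use that each $s\in S$, being an automorphism of $\rF(K)$, preserves the starting and ending points of paths, carries reduced words to reduced words, and fixes a word if and only if it fixes each generator of its reduced form.

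For the forward implication ($\Rightarrow$), I would start from a covariant path-frame system $\cP=\{P_o\}$ and, given a pair $o,a$, examine the path $p_{(a,o)}:a\to o$ belonging to $P_a$. Specialising the covariance relation $s(p_{(a,x)})=p_{(s(a),s(x))}$ to $s\in S_a\cap S_o$ gives $s(p_{(a,o)})=p_{(s(a),s(o))}=p_{(a,o)}$, so this path is fixed by $S_a\cap S_o$. Passing to the reduced word $p_{(a,o)}^\rr=b_n\cdots b_1$ (still a path $a\to o$, and still fixed by each such $s$ since $s$ commutes with reduction), the equality $s(b_n)\cdots s(b_1)=b_n\cdots b_1$ between two reduced words forces $s(b_i)=b_i$ for every $i$. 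Hence the generators of $p_{(a,o)}^\rr$ are invariant under $S_a\cap S_o$, which is exactly the required path joining $a$ to $o$ (equivalently its opposite $\overline{p_{(a,o)}^\rr}$, whose generators are the $\overline{b}_i$ and are fixed as well).

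For the reverse implication ($\Leftarrow$), I would build the system equivariantly. Regard $S$ as acting diagonally on $K\times K$, so that the stabiliser of $(o,a)$ is $S_o\cap S_a$; note that the diagonal $\{(o,o)\}$ is an $S$-invariant subset, since $s$ is a bijection. Using the axiom of choice, pick one representative in each orbit, treating diagonal pairs separately. To each off-diagonal representative $(o_0,a_0)$ I assign, by hypothesis, a path $p_{(o_0,a_0)}:o_0\to a_0$ whose generators are fixed by $S_{o_0}\cap S_{a_0}$; to each diagonal representative $(o_0,o_0)$ I assign the empty word $1$. I then propagate these choices across each orbit by setting $p_{(s(o_0),s(a_0))}:=s(p_{(o_0,a_0)})$.

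The crux, and the main obstacle, is the well-definedness of this assignment, and here the hypothesis is precisely what is needed: if $s(o_0,a_0)=s'(o_0,a_0)$ then $t:=s^{-1}s'\in S_{o_0}\cap S_{a_0}$, and since each generator $b_i$ of $p_{(o_0,a_0)}$ satisfies $t(b_i)=b_i$, the whole word is fixed, $t(p_{(o_0,a_0)})=p_{(o_0,a_0)}$, whence $s'(p_{(o_0,a_0)})=s(p_{(o_0,a_0)})$ (the diagonal case is trivial, as $s(1)=1$). Once this is secured, the remaining verifications are routine: $p_{(o,a)}$ is a path $o\to a$ because automorphisms preserve endpoints; $p_{(o,o)}=1$ by the diagonal choice; and the covariance identity $s(p_{(a,x)})=p_{(s(a),s(x))}$ is immediate, writing $(a,x)=r(o_0,a_0)$ and using $s(r(p_{(o_0,a_0)}))=(sr)(p_{(o_0,a_0)})$. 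This produces the desired covariant path-frame system $\cP=\{P_o\}$.
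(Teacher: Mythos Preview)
Your proof is correct. The forward implication matches the paper's argument; your explicit passage to the reduced word to conclude that each generator is fixed is a welcome bit of care that the paper leaves implicit.

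The reverse implication is where you diverge. The paper proceeds in two nested stages: it first chooses a representative pole $a$ in each $S$-orbit of $K$ and builds $P_a$ by choosing representatives $y$ in $K/S_a$, using the hypothesis to pick $S_a\cap S_y$-invariant paths and propagating by $S_a$; only afterwards does it transport the whole frame $P_a$ to the other poles $a'=s(a)$ in the $S$-orbit. You instead act diagonally on $K\times K$ and make a single choice of representative per orbit of pairs, with the stabiliser $S_{o_0}\cap S_{a_0}$ appearing directly as the obstruction to well-definedness. Your route is more symmetric and avoids the two-layer bookkeeping, at the small cost of obscuring the ``frame indexed by pole'' structure that the paper's construction makes visible; the paper's route, conversely, keeps closer to the definition of a path-frame system but requires two separate well-definedness checks (one when changing the $S_a$-representative of the target, one when changing the $S$-representative of the pole). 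Both rest on exactly the same mechanism: invariance of the chosen path under the joint stabiliser is precisely what kills the ambiguity in propagating by the group action.
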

\begin{proof}
($\Rightarrow$) Let $\cP$ be a covariant path-frame system. 
According to (\ref{Cc:2}) if $s$ is an element of the  stabilizer $S_a$ of $a$, then 
$s(P_a)=P_a$. Let $y$ be such that $S_y\cap S_a\ne 1$. Then for any 
$s\in S_y\cap S_a$ we have that $s(p_{(a,y)})= p_{(a,y)}$. This, according to (\ref{Bb:5}) and (\ref{Bb:6}), amounts to saying that 
all the 1-simplices generating the path $p_{(a,y)}$ are  invariant under  $S_a\cap S_y$.\\
\indent ($\Leftarrow$) For any element  in the orbit space $K/S$ choose a representative  $a$. 
We first construct  a suited path frame over $a$. To this end consider  the orbit space $K/S_a$.   
For any orbit in this space choose a representative $y$  and define $p_{(a,y)}$ as  
a path whose generating 1-simplices are invariant under $S_a\cap S_y$.  Clearly if $S_a\cap S_y=1$ no  restriction 
is imposed on the choice of the path.  Afterwards if $z= sy$ for some $s\in S_a$ define 
\[
 p_{(a,z)}:= s(p_{(a,y)})
\]
This defines a path frame $P_a$ over $a$. Note that if  $r\in S_a$, then  
\begin{equation}
\label{eq:pathframe}
r(p_{(a,z)})= rs(p_{(a,y)}) = p_{(a,r(z))} \ .
\end{equation}
So $P_a$ agrees with the above observation.  We now construct a path frame $P_{a'}$ for any element $a'$ in the orbit of $a$. 
If $a'=s(a)$  for $s\in S$, then  we set 
\[
 p_{(a',y)}:= s(p_{(a,s^{-1}(y))}) \ , \qquad y\in K \ . 
\]
This definition is well posed. In fact if $r(a)=a'$ then $s^{-1}r$ belongs to the stabilizer of $a$. Hence by 
equation (\ref{eq:pathframe}) we have 
\[
 r(p_{(a,r^{-1}(y))}) = s (s^{-1}r)(p_{(a,r^{-1}y)}) = s (p_{(a,(s^{-1}rr^{-1})(y))})=  s (p_{(a,s^{-1}y)}) \ . 
\]
The collection $P_a$, with $a\in K$, is manifestly covariant.  
\end{proof}
\noindent On this  ground, the cases of Minkowski spacetime and $S^1$ can be completely understood. 
\begin{corollary}
\label{Cc:4}
The following assertions hold.
\begin{itemize}
\item[(i)] The set of double cones of the  Minkowski spacetime  
has a covariant path-frame system with respect to the Poincar\'e group. 
\item[(ii)] The set of nonempty, open, connected intervals of $S^1$, having a proper closure, 
does not have a covariant path-frame system with respect to the M\"obius group. 
\end{itemize}
\end{corollary}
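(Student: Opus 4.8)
The plan is to verify in each case the criterion of Proposition \ref{Cc:3}: a covariant path-frame system exists precisely when every pair $a,o$ admits a path joining them whose generating $1$-simplices are fixed by $S_a\cap S_o$. For (i) I will produce such a path explicitly — in fact a \emph{single} generator will suffice — whereas for (ii) I will exhibit one pair $a,o$ for which no admissible path can exist, so that the negation of the criterion holds.

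For (i), fix two double cones $a,o$. Recall from Section \ref{Aaa} that $S_a$ is the copy of $SO(3)$ of spatial rotations about the centre $x_a$ of $a$, taken in the rest frame of $a$; in particular every element of $S_a$ leaves invariant any double cone that is spherically symmetric about $x_a$ in that frame. I would then let $e$ be the double cone centred at $x_a$, in the rest frame of $a$, of radius $R$. In those coordinates $e=\{|t-t_a|+|\vec x-\vec x_a|<R\}$, which exhausts the spacetime as $R\to\infty$, so for $R$ large enough $a\Subset e$ and $o\Subset e$. By construction $e$ is $S_a$-invariant, hence a fortiori $(S_a\cap S_o)$-invariant, while $a$ and $o$ are $(S_a\cap S_o)$-invariant by the very definition of the intersection. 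Consequently the $1$-simplex $b:=(e;o,a)$ lies in $\rT_1(K)$ (its support strictly dominates both faces) and is fixed by $S_a\cap S_o$; since $\partial_1 b=a$ and $\partial_0 b=o$, the one-letter word $b$ is the desired invariant path, and Proposition \ref{Cc:3} delivers the covariant path-frame system.

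For (ii), the idea is to exploit the scarcity of invariants of the modular stabilizer. Choose an interval $o$ and let $a:=S^1\setminus cl(o)$ be its open complement, which is again an admissible index; by the observation in Section \ref{Aaa} the two stabilizers coincide, so $H:=S_a\cap S_o=S_o$ is the full modular group of $o$. I would then argue that the only $H$-invariant intervals are $o$ and $a$ themselves: identifying $S_o$ with the hyperbolic one-parameter subgroup of the M\"obius group whose only fixed points on $S^1$ are the two endpoints of $o$, an open arc can be flow-invariant only if both of its endpoints are fixed, which leaves exactly $o$ and its complement. Hence any $H$-invariant $1$-simplex would have support and both faces in the two-element set $\{o,a\}$; but neither arc strictly contains the other (they are disjoint), so no $H$-invariant simplex lies in $\rT_1(K)$ at all, and \emph{a fortiori} no $H$-invariant path joins $a$ to $o$. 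By the contrapositive of Proposition \ref{Cc:3}, no covariant path-frame system exists.

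The geometry of double cones in (i) is routine. The one place demanding care is the structural claim in (ii) that the modular flow admits no invariant arc other than $o$ and its complement, which rests on recognizing $S_o$ as a \emph{hyperbolic} (rather than elliptic or parabolic) one-parameter subgroup of $PSL(2,\mathbb{R})$, so that it has exactly two fixed points and no periodic orbits on $S^1$. I expect this fixed-point analysis, together with the verification that the complement $a$ is genuinely an admissible index with $S_a=S_o$, to be the main obstacle; everything else is a direct application of Proposition \ref{Cc:3}.
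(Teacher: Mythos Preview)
Your proposal is correct and follows essentially the same route as the paper. For (i) the paper first reduces to the case $a=o_R$ centred at the origin and then takes the concentric $o_{R'}$ as the invariant support, whereas you work with a general $a$ and build the concentric $e$ directly---the content is identical, since in both cases the key point is that the large double cone has the \emph{same} stabilizer as $a$ and hence is automatically $(S_a\cap S_o)$-invariant. For (ii) both arguments hinge on the same fact, namely that $S_o$ fixes only the two endpoints of $o$ on $S^1$ and hence leaves invariant no interval other than $o$ and its complement; the paper phrases this as ``the modular group acts transitively within these intervals'', while you phrase it via the hyperbolic fixed-point structure, but the conclusion (no $H$-invariant $1$-simplex in $\rT_1(K)$, hence no admissible path) is reached in the same way.
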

\begin{proof}
$(i)$ Fix a reference frame of the Minkowski spacetime. According to the definition of a double cone given in Section \ref{Aaa}, 
it is enough to  study the following situation: $o_R$ is a double cone based on the subspace $C_0$ at time $t=0$ 
whose base is centered at the origin of the reference frame; $o$ is any other double cone.  
Recall that the stabilizer of $o_R$ is $SO(3)$, the subgroup of spatial rotations. 
Now, let $S=S_{o_R}\cap S_o$ be the intersection of the stabilizers  
of $o_R$ and $o$. Take $R'>0$ large enough that 
$o_R,o\subseteq o_{R'}$.  Then $S= S_{o_R}\cap S_o\cap S_{o_{R'}}$ because $o_R$ and $o_{R'}$ have the same stabilizer. Hence the 1-simplex $(o_{R'}; o,o_R)$  satisfies the condition of  Proposition \ref{Cc:3}.\\
\indent $(ii)$ We show that 
the condition  of   Proposition \ref{Cc:3} is not verified. As observed in Section \ref{Aaa},
the  stabilizer of an interval $a$, which is isomorphic to the modular group,  is also 
the stabilizer of the open interval  $S^1\setminus \{cl(a)\}$ of $a$. Since the modular group acts 
transitively within these intervals,  there is no other interval having the same stabilizer. 
\end{proof}
\noindent Just a comment about this result. Actually the above proofs seem to point out 
a topological obstruction to the existence of a  path-frame system. The reason why 
the poset of double cones of the  Minkowski spacetime has path-frame systems 
relies on the fact that this poset is upward directed. This implies that the poset is contractible, 
see \cite{Ruz}. Clearly this does not happen for the intervals of the circle, whose associated
poset has homotopy group $\bZ$.\smallskip

Given a  covariant path-frame system  $\cP$, we associate with any $o\in K$ the connection 1-cochain 
\begin{equation}
\label{Cc:5}
\ru_{\cP_o}(b) :=  w(p_{(o,\partial_0b)}\,b\, \overline{p_{(o,\partial_1b)}}) \ , \qquad b\in\rT_1(K) \ . 
\end{equation}
Clearly, as before, if we look at a single connection $\ru_{\cP_o}$ we have neither covariance nor
causality, however these two properties arise if we look at  the collection $\ru_{\cP}:=\{\ru_{\cP_o} \ ,  \ o\in K\}$. In fact 
since we are using a covariant path-frame system   we have 
\[
\mathrm{Ad}_{\Gamma_s}(\ru_{\cP_o}(b)) =   
\mathrm{Ad}_{\Gamma_s}(w(p_{(o,\partial_0b)}\,b\, \overline{p_{o,\partial_1b}})) = 
 w(s(p_{(o,\partial_0b)})\,s(b)\, \overline{s(p_{o,\partial_1b})}) = 
 \ru_{\cP_{s(o)}}(s(b)) \ . 
\] 
Moreover,  $w$ is  recovered from the system $\ru_\cP$  
by observing that $w(p) = \ru_{\cP_a}(p)$, for any loop $p:a\to a$, because 
$p_{(a,a)}=1$.
Causality follows from this relation, in the sense that $[\ru_{\cP_a}(p),\ru_{\cP_o}(q)]=0$
for any $p:a\to a$, $q:o\to o$ with $p\perp q$.\smallskip 

Hence we may give the following  
\begin{definition}
\label{Cc:6}
Let $K$ be a causal poset with a symmetry group $S$.  
A \textbf{causal and covariant connection system} is a 
pair $(\ru,\Gamma)$ where  $\ru$ is a family $\{\ru_o \ , \ o\in K\}$ of  connection 1-cochains of $K$ in a Hilbert space $\cH$
and $\Gamma$ is a unitary representation of $S$ on $\cH$ such that 
\begin{enumerate}
\item $[\ru_o(p),\ru_a(q)]=0$, if $p\perp q$ and $p:o\to o$, $q:a\to a$. 
\item $\mathrm{Ad}_{\Gamma_s} \circ \ru_o= \ru_{s(o)}\circ s$,  for any  $s\in S$ and $o\in K$.
\end{enumerate}
We say that $(\ru,\Gamma)$ is \textbf{equivalent} to $(\ru',\Gamma')$ if there is a family $\rt:=\{\rt_o \ , \ o\in K\}$ 
of fields  $\rt_o:\Si_0(K)\ni a\to \rt_o(a)\in\cU(\cH,\cH')$ of unitary operators  satisfying the relations  
\[
\rt_o(\partial_0b) \ru_o(b)= \ru'_o(b) \rt_o(\partial_1b)  \ \ , \ \ 
\rt_{s(o)}\circ s \, = \Gamma'_s \rt_o \Gamma_{s^{-1}} \ , \ \ \ \  b\in\rT_1(K) \ ,  \ s\in S \ . 
\]
If $\Gamma$ is strongly continuous, then we say that $(\ru,\Gamma)$ is \textbf{continuously covariant}.
\end{definition}
\noindent Note that if $(u,\Gamma)$ is a connection as in Definition \ref{Cb:1}
then setting $\ru_o:=u$ for any $o$ yields a causal and covariant connection system $(\ru,\Gamma)$.
Thus the notion of a connection system generalizes that of a connection.
\smallskip   

Now, it is easy to see that  any  causal and covariant  connection system $(\ru,\Gamma)$ defines a 
causal representation of the dynamical system $(\cA(K),S,\alpha)$. To this end, define 
\begin{equation}
\label{Cc:7}
 w(p):= \ru_o(p) \ , \qquad p:o\to o \ .  
\end{equation}
Clearly $w(p)^{*}=w(\overline{p})$ for any loop $p$; causality and covariance of $w$ follow from properties 1. and 2. of 
Definition \ref{Cc:6}.  
In particular, concerning covariance, if $p:o\to o$ then    
\[
 \mathrm{Ad}_{\Gamma_s}(w(p)) = \mathrm{Ad}_{\Gamma_s}(\ru_o(p)) = 
 \ru_{s(o)}(s(p)) =w(s(p)) \ . 
\]
The observation below Definition \ref{Cc:1} and the equations (\ref{Cc:5})  and  (\ref{Cc:7}) establish a correspondence between causal and covariant connection systems of $K$ and 
representations of the net of causal loops induced by $(\cA(K),S,\alpha)$. 
One can easily check that this correspondence is, up to equivalence, bijective.  
\begin{proposition}
\label{Cc:8} 
There is, up to equivalence, a 1-1 correspondence between causal and (continuously) covariant  connection systems 
of $K$ and (continuously) covariant representations of the net of causal loops induced by   $(\cA(K),S,\alpha)$. 
\end{proposition}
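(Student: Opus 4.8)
The plan is to assemble the bijection from the two explicit constructions already introduced, namely the map $(w,\Gamma)\mapsto(\ru_\cP,\Gamma)$ of equation (\ref{Cc:5}) and the map $(\ru,\Gamma)\mapsto(w,\Gamma)$ of equation (\ref{Cc:7}), and to compose them with the correspondence between covariant representations of $(\cA(K),S,\alpha)$ and causal and covariant representations of $\rL(K)$ obtained by the argument of Lemma \ref{Cb:2}. Throughout I fix a covariant path-frame system $\cP=\{P_o\}$ (which exists for the posets of interest, e.g. the double cones of $\bM^4$ by Corollary \ref{Cc:4}). It has already been checked that (\ref{Cc:5}) sends a causal and covariant representation of $\rL(K)$ to a causal and covariant connection system and that (\ref{Cc:7}) does the reverse, so it remains only to show that these two assignments are mutually inverse up to the equivalences of Definitions \ref{Cc:1} and \ref{Cc:6}, that they descend to equivalence classes, and that they preserve strong continuity of $\Gamma$.

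First I would check that the assignments respect equivalence. If $(w,\Gamma)$ and $(w',\Gamma')$ are intertwined by a field $t(o)$ as in Definition \ref{Cc:1}, then the constant field $\rt_o(a):=t(o)$ intertwines the associated systems, since $\ru'_{\cP_o}(b)=t(o)\,\ru_{\cP_o}(b)\,t(o)^{-1}$ (the relevant loop being based at $o$) is the cochain relation of Definition \ref{Cc:6}, while $t(s(o))\Gamma_s=\Gamma'_s t(o)$ gives its covariance relation. Conversely, a field $\rt$ intertwining two connection systems yields the field $t(o):=\rt_o(o)$ intertwining the loop representations, the interior values of $\rt_o$ cancelling by telescoping along a loop based at $o$ via the path relation (\ref{Bb:3}).

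Next comes bijectivity. The composition sending $(w,\Gamma)$ to $(\ru_\cP,\Gamma)$ and back to a representation is the identity on the nose: by (\ref{Cc:7}) the reconstructed representation evaluates a loop $p:o\to o$ as $\ru_{\cP_o}(p)$, and since $p_{(o,o)}=1$ this equals $w(p)$. The opposite composition, sending $(\ru,\Gamma)$ to $w(p):=\ru_o(p)$ and then to $\ru_{\cP_o}(b)=w(p_{(o,\partial_0b)}\,b\,\overline{p_{(o,\partial_1b)}})$, is the identity only up to gauge, and this is the heart of the matter. Expanding the loop multiplicatively and using $w=\ru_o$ on loops based at $o$ gives $\ru_{\cP_o}(b)=\ru_o(p_{(o,\partial_0b)})\,\ru_o(b)\,\ru_o(p_{(o,\partial_1b)})^{-1}$, so the field of path-frame holonomies $\rt_o(a):=\ru_o(p_{(o,a)})$ satisfies $\rt_o(\partial_0b)\,\ru_o(b)=\ru_{\cP_o}(b)\,\rt_o(\partial_1b)$, which is precisely the cochain half of the equivalence in Definition \ref{Cc:6}.

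The step I expect to be the main obstacle is verifying the covariance half of this last equivalence, $\rt_{s(o)}\circ s=\Gamma_s\,\rt_o\,\Gamma_{s^{-1}}$, as this is where the full force of covariance of the path-frame system is needed. Here I would combine the covariance of the connection system, $\ru_{s(o)}\circ s=\mathrm{Ad}_{\Gamma_s}\circ\ru_o$, with the defining property (\ref{Cc:2}), $s(p_{(o,a)})=p_{(s(o),s(a))}$, to compute $\rt_{s(o)}(s(a))=\ru_{s(o)}(p_{(s(o),s(a))})=\ru_{s(o)}(s(p_{(o,a)}))=\Gamma_s\,\ru_o(p_{(o,a)})\,\Gamma_s^{-1}=\Gamma_s\,\rt_o(a)\,\Gamma_s^{-1}$. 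Finally, since $\Gamma$ is carried unchanged through every construction, strong continuity is preserved, yielding the statement for continuously covariant objects; assembling these facts with the correspondence for $(\cA(K),S,\alpha)$ established via Lemma \ref{Cb:2} gives the proposition.
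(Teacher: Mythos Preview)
Your proposal is correct and follows exactly the approach the paper indicates: the paper itself does not give a detailed proof but merely states that ``the observation below Definition \ref{Cc:1} and the equations (\ref{Cc:5}) and (\ref{Cc:7}) establish a correspondence \ldots\ One can easily check that this correspondence is, up to equivalence, bijective.'' You have carried out that check in detail, including the telescoping arguments and the verification that the path-frame holonomies $\rt_o(a):=\ru_o(p_{(o,a)})$ furnish the required equivalence, which is precisely the content the paper leaves implicit.
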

%


\subsection{Gauge transformations}
\label{Cd}

Intuitively a gauge transformation of a causal and covariant connection system 
is a transformation preserving  covariance and  causality. More precisely,
\begin{definition}
\label{Cd:1}
A \textbf{gauge transformation} 
of a causal and covariant  connection system  $(\ru, \Gamma)$ is  a collection  
$\rg:=\{\rg_a \ , \ a\in K\}$ of fields  $\rg_a:\Si_0(K)\ni o\to \rg_a(o)\in\cU\cH$  of unitary operators  
satisfying the following properties
\begin{enumerate}
\item  $\rg_{s(a)}(s(o))\, \Gamma_s = \Gamma_s\, \rg_a(o)$, for any $a,o\in K$ and any $s\in S$; 
\item $\mathrm{Ad}_{\rg_a(a)}(\ru(\cA_o))= \ru(\cA_o)$ for any inclusion  $a\leq o$, 
\end{enumerate}
where by $\ru(\cA_o)$ we mean  the fibre  
$\cA_o$ of the net of causal loops in the representation induced  by $\ru$. We denote 
the set of gauge transformations by $\cG$.
\end{definition}
We now draw on some consequences of this definition.  \emph{First}, given a gauge transformation $\rg$ of $(\ru,\Gamma)$,
we define 
\begin{equation}
\label{Cd:1a}
 \ru^\rg_o(b):= \rg_o(\partial_0b)\, \ru_o(b) \, \rg_o^*(\partial_1b) \ , \qquad b\in\rT_1(K) \ , 
\end{equation}
for any $o\in K$, and observe that  the pair  $(\ru^\rg,\Gamma)$ is a causal and covariant connection system. 
Covariance easily follows from the  definition of gauge transformation.  Concerning causality,  
let $p:o\to o$ and $q:a\to a$ be two causally disjoint loops. This, according to (\ref{Bb:4}), amounts to saying 
that there are $\tilde o,\tilde a\in K$ such that  $|p|\leq \tilde o$ and $|q|\leq \tilde a$ and $\tilde o\perp\tilde a$. 
Property 2. of Definition \ref{Cd:1}  implies that 
\[
\ru^\rg_o(p)\in \ru(\cA_{\tilde o}) \  \ , \  \ \ru^\rg_a(q)\in \ru(\cA_{\tilde a}) \ . 
\]
Hence  $\ru^\rg_o(p)$ and $\ru^\rg_a(q)$ commute because so do the algebras $\cA_{\tilde o}$ and $\cA_{\tilde a}$.  
\emph{Secondly}, by (\ref{Cd:1}) the causal and covariant representation    $(w,\Gamma)$ of $\rL(K)$ associated with $(\ru,\Gamma)$ by (\ref{Cc:7}),  transforms, under a gauge transformation $\rg$, as 
\begin{equation}
\label{Cd:3}
 w^\rg(p)= \rg_o(o)\, w(p)\,\rg^*_o(o) \ , \qquad p:o\to o \ .  
\end{equation} 
\emph{Thirdly},  as one can easily deduce by Definition \ref{Cd:1}, gauge transformations form a group under the composition 
\[
(\rg\cdot \rh)_a(o):= \rg_a(o)\, \rh_a(o) \ , \qquad a,o\in K \ .  
\]
%
%
%
So we shall refer to $\cG$ as the group of \emph{gauge transformations}. \emph{Fourthly}, for any $o\in K$, let $G_o$ the subgroup of $\cU\cH$ generated by  
\begin{equation}
\label{Cd:4}
\{ \rg_a(a) \in\cU\cH \, | \, \rg\in\cG \ , \ a \leq o \}   \ .  
\end{equation}
The group $G_o$ acts by automorphisms upon the algebra $\ru(\cA_o)$. We refer to   $G_o$  as  \emph{the  gauge group  over $o$.}
Note that according to Definition \ref{Cd:1},  we have that $G_o\subseteq G_a$ for any inclusion  $o\leq a$ 
and that  $G_o$ is spatially isomorphic to $G_{s(o)}$ for any $s\in S$. So we have a \emph{covariant net of gauge groups}  
$(G,\mathrm{id},\mathrm{Ad}_\Gamma)_K$ whose fibres are the gauge groups $G_o$.\smallskip

As an interesting case, let us analyze how  the connection system associated with a causal and covariant representation $(w,\Gamma)$  
of $\rL(K)$  depends on the choice of the  path-frame system. Consider two path-frame systems  
$\cP$ and $\cQ$ and the corresponding connection systems $\ru_\cP$ and $\ru_\cQ$. To be precise 
 by equation (\ref{Cc:5}),   we have 
\[
\ru_{\cP_o}(b)=w(p_{o,\partial_0b}\,b\, \overline{p_{(o,\partial_1b)}}) \ \ , \ \ 
\ru_{\cQ_o}(b)=w(q_{o,\partial_0b}\,b\, \overline{q_{(o,\partial_1b)}})  \ \ , \ \ 
\]
for any $b\in\rT_1(K)$ and $o\in K$. Then observe that if we set  
\[
\rg_o(a):= w(q_{(o,a)} \overline{p_{(o,a)}})  \ , \qquad a\in K \ , 
\] 
since $g_o(o)=1$ for any $o$, then we get a gauge transformation $\rg$ of $(\ru_\cP,\Gamma)$ such that 
$\ru^\rg_\cP=\ru_{\cQ}$. So a changing of  a path-frame system leads to a gauge transformation of the connection system.  
Moreover,  $w$ is invariant 
under this gauge transformation, that is $w^\rg=w$. 

\subsection{Existence of continuously covariant representations} 
\label{Ce}

In this section we focus on the Minkowski spacetime $\mathbb{M}^4$ and prove  
the existence  of continuously Poincar\'e-covariant  (in symbols $\mathscr{P}^\uparrow_+$-covariant)  
representations of the net of causal loops over the set of double cones of $\mathbb{M}^4$. 
We also point out that this result applies to $S^1$, when
the symmetry is given by the M\"obius group, up to some modifications . \\ 
\indent The approach  we shall use  points out the  universality 
of the net of causal loops.  In fact,  we shall see that  any Hermitian scalar quantum field  over  $\mathbb{M}^4$ 
gives a causal and continuously  \smash{$\mathscr{P}^\uparrow_+$}-covariant connection, hence a   continuously  $\mathscr{P}^\uparrow_+$-covariant 
representation of  net of causal loops.  
We deserve a further investigation to analyze the explicit physical interest of this correspondence. \\
\indent Concretely, we first define  a  \emph{translator} that allows us to pass from 
quantum fields to connections. This is a 1-cochain of the poset taking values in the set of real,  compactly 
supported  smooth functions of the spacetime, which is invariant  under the action of the symmetry group.

\subsubsection{From simplices to test functions}
\label{Cea}
In this section we define  the above cited translator and prove its existence in the case of the posets associated 
with the  Minkowski spacetime and with $S^1$. \smallskip

Consider   a spacetime $X$ and  denote 
the associated  causal  disjointness relation by   $\perp$. 
Let $S$ be a symmetry group of $X$, i.e. a  subgroup of  conformal diffeomorphisms  of $X$.  
Concretely, if $X$ is the Minkowski spacetime then $S$ is the Poincar\'e group; if $X$ is the circle 
then $S$ is the M\"obius group.  Take
a base $K$ of $X$  of open and relatively compact subsets of $X$, such that $K$ is globally invariant under the action of $S$.

\indent We denote the space of compactly supported,  real valued, 
smooth functions on $X$ by $\cD(X)$.  Similarly to the definition of net of $\rC^*$-algebras  given in Section \ref{Ac}, 
we have a  net of vector spaces
$(\cD,\mathrm{id})_K$ assigning to each element $a$ of $K$ the space $\cD_a$ of those functions of $\cD(X)$ vanishing outside 
the closure $cl(a)$ in $X$ of $a$.
For any $n \in \bN$, we consider the \emph{$n$-cochain} vector space
\begin{equation}
\label{Cea:1}
\rC^n(K,\cD) := \{  f : \Si_n(K) \ni x \mapsto f_x \in \cD_{|x|}  \} \ .
\end{equation}
There is a coboundary operator $\rd:\rC^n(K,\cD)\to \rC^{n+1}(K,\cD)$ defined, as usual, as 
\begin{equation}
(\rd f)_x=\sum^{n}_{k=0} (-1)^k f_{\partial_k x} \ , \qquad x\in \Si_{n+1}(K) \ , \ f\in \rC^n(K,\cD) \ , 
\end{equation}
satisfying the equation $\rd\circ\rd=0$. This allows us to define cohomology groups, but  this is  out of the aims 
of the present paper.  There is a left action of the symmetry group $S$  on $\rC^n(K,\cD)$ defined by 
\begin{equation}
\label{Cea:2}
(sf)_x := f_{s(x)}\circ s \ , \qquad x\in\Si_n(K) \ ,  \ s\in S \ . 
\end{equation}
Now, considering the fixed-point subspace of $\rC^n(K,\cD)$ under the action of $S$, i.e. 
\begin{equation}
\label{Cea:3}
\rC^n(K,\cD)^S:= \{ f\in \rC^n(K,\cD) \, | \,  (sf)_x= f_{x} \ , \   x\in\Si_n(K) \ ,  \ s\in S \} \ ,
\end{equation}
we  note that 
\begin{equation}
\label{Cea:4}
 \rC^n(K,\cD)^S\ni f \ \ \iff  \ \ f_{s(x)}  = f_x \circ s^{-1} \ , 
\end{equation}
for any $x\in\rT_n(K)$  and  $s\in S$. We also observe that
if $f\in\rC^n(K,\cD)^S$, then   $\rd f\in \rC^{n+1}(K,\cD)^S$
because the  coboundary operator $\rd$ commutes 
with the action of the symmetry group. In this regards one should note that $\rd f$ might be equal to $0$ even if 
$f$ is not. This happens when $f$ is an $n$-cocycle.\\
\indent For our aims it is of importance to understand 
when $\rC^n(K,\cD)^S$ is not trivial, i.e.\ different from $0$.  The next result clarifies this point. 
\begin{proposition} 
\label{Cea:5}
 The space $\rC^n(K,\cD)^S$ is not trivial if, and only if,   there is $y\in \Si_n(K)$ and a nonzero 
function $f_0 \in \cD_{|y|}$  such that $f_0 \circ s^{-1}=f_0$ 
for any $s$ in the stabilizer $S_{y}$ of $y$. 
\end{proposition}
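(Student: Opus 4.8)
The plan is to prove both implications directly from the characterization (\ref{Cea:4}), namely that $f\in\rC^n(K,\cD)^S$ holds exactly when $f_{s(x)}=f_x\circ s^{-1}$ for all $x\in\Si_n(K)$ and $s\in S$, together with the elementary fact that the $S$-action on simplices respects supports, $|s(x)|=s(|x|)$.

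For the ``only if'' direction I would take a nonzero $f\in\rC^n(K,\cD)^S$ and choose any $y\in\Si_n(K)$ with $f_y\ne 0$, setting $f_0:=f_y\in\cD_{|y|}$. For $s$ in the stabilizer $S_y$ one has $s(y)=y$, so the invariance relation $f_{s(y)}=f_y\circ s^{-1}$ reads $f_0=f_0\circ s^{-1}$, which is precisely the asserted condition. This direction is immediate.

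The content lies in the ``if'' direction, where from the datum $(y,f_0)$ I would build a nonzero invariant cochain. I define $f$ on the orbit $S\cdot y$ by $f_{s(y)}:=f_0\circ s^{-1}$ and set $f_x:=0$ for every $x\notin S\cdot y$. The only delicate point — and the place where the hypothesis is used essentially — is well-definedness on the orbit: if $s(y)=s'(y)$ then $r:=(s')^{-1}s\in S_y$, so $s^{-1}=r^{-1}(s')^{-1}$, and applying the hypothesis to $r$ gives $f_0\circ r^{-1}=f_0$, whence $f_0\circ s^{-1}=(f_0\circ r^{-1})\circ(s')^{-1}=f_0\circ(s')^{-1}$. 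I expect this to be the only real obstacle; everything else is bookkeeping.

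It then remains to check two routine requirements. First, membership $f_x\in\cD_{|x|}$: for $x=s(y)$ we have $|x|=s(|y|)$, and since $f_0$ vanishes outside $cl(|y|)$ the function $f_0\circ s^{-1}$ vanishes outside $s(cl(|y|))=cl(|x|)$, while outside the orbit $f_x=0$ trivially satisfies this. Secondly, $S$-invariance $f_{t(x)}=f_x\circ t^{-1}$: for $x=s(y)$ in the orbit, $t(x)=ts(y)$ gives $f_{t(x)}=f_0\circ(ts)^{-1}=(f_0\circ s^{-1})\circ t^{-1}=f_x\circ t^{-1}$, and for $x$ outside the orbit both sides vanish since orbits are $S$-stable. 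As $f_y=f_0\ne 0$, the cochain $f$ is a nonzero element of $\rC^n(K,\cD)^S$, which settles the claim.
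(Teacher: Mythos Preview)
Your proof is correct and follows essentially the same approach as the paper: both directions are handled identically, with the ``if'' direction constructing the invariant cochain by transporting $f_0$ along the orbit of $y$ via $f_{s(y)}:=f_0\circ s^{-1}$ and setting it to zero elsewhere, the well-definedness check being exactly the place where the stabilizer hypothesis enters. You are in fact slightly more thorough than the paper, as you explicitly verify the support condition $f_x\in\cD_{|x|}$, which the paper leaves implicit.
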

\begin{proof}
($\Rightarrow$) Let $f\in\rC^n(K,\cD)^S$ be a nonzero function. So $f_y\ne 0$ for some $n$-simplex $y$. 
By (\ref{Cea:4}),  $f_{y}=f_y\circ s^{-1}$ for any  $s\in S_y$.\\  
\indent ($\Leftarrow$) 
Let $f_0$ and $y$ be as above. For any $x\in\Si_n(K)$,  define 
\[
 f_x:= \left\{
 \begin{array}{cc}
 f_0 \circ s^{-1} \ , &  if \ x=s(y) \ , \\  
 0 &  otherwise \ . 
 \end{array}\right.
\]
Note that $f_y=f_0$ and that $f$  is different from zero only on  the orbit of $y$. 
This definition is well posed because if $x=r(y)$ then $s^{-1}r \in S_{y}$,
so that $f_0\circ s^{-1} = f_0\circ (s^{-1}r)r^{-1} = f_0\circ r^{-1} $. 
Thus for any $x$ in the orbit of $y$, according to the definition of the action of $S$, 
we have
\[
 (s'f)_{x} = f_{s'(x)}\circ s' = f_{s's(y)} \circ s' = f_0\circ (s's)^{-1} s'  =f_0\circ s^{-1}= f_x \ , 
\]
for any $s' \in S$, completing the proof. 
\end{proof}
\begin{remark}
\label{Cea:6} 
Using the same idea of the above proof we can prove the following stronger property: 
there exists  $f\in\rC^n(K,\cD)^S$ which does not vanish on \emph{any} $n$-simplex $y$ for which there 
is a nonzero function $f_0\in\cD_{|y|}$ satisfying $f_0\circ s^{-1} = f_0$ for any $s\in S_y$. 
So, in particular, $f$ does not vanish on the $n$-simplices over which $S$ acts freely. 
\end{remark}
We shall use this result later to show the existence of invariant cochains for the Minkowski spacetime and for the circle. 
We now make a step toward the main result of this section  and specialize to the set of $1$-cochains.  
Note that $\rC^1(K,\cD)$ is a $\bZ_2$-graded vector space  under the action
$\rC^1(K,\cD)\ni f \to \overline{f}\in \rC^1(K,\cD)$ defined as 
\begin{equation}
\label{Cea:7}
\overline f_b := f_{\overline b}
\  , \qquad   b \in \Si_1(K)
\ .
\end{equation}
This yields the direct sum decomposition in the even/odd spectral subspaces
\begin{equation}
\label{Cea:8}
\rC^1(K,\cD) \to\rC^{1,ev}(K,\cD) \oplus \rC^{1,odd}(K,\cD)
\ \ , \ \
f \mapsto f^{ev} \oplus f^{odd}
\ ,
\end{equation}
where 
\begin{equation}
\label{Cea:9}
f^{ev} :=   \frac{1}{2} (f + \overline{f}) 
\ \ , \ \
f^{odd} :=  \frac{1}{2} (f - \overline{f})
\end{equation}
are such that 
\begin{equation}
\label{Cea:10}
\overline {f^{ev}}  =   f^{ev}
\ \ , \ \
\overline {f^{odd}} = - f^{odd}
\ .
\end{equation}
Note that 
$(\rd f)^{ev}=0$ while $(\rd f)^{odd}= \rd f$ for any $f\in\rC^0(K,\rD)$. 
Now,  since the $\bZ_2$-grading commutes with the induced $S$-action,
the direct sum
decompositions (\ref{Cea:8}) and (\ref{Cea:9}) apply also to the subspace $\rC^1(K,\cD)^S$: 
\begin{equation}
\label{Cea:11}
\rC^1(K,\cD)^S \to\rC^{1,ev}(K,\cD)^S \oplus \rC^{1,odd}(K,\cD)^S
\ \ , \ \
f \mapsto f^{ev} \oplus f^{odd}
\ .
\end{equation}
Let us consider the mapping $\delta:\rC^0(K,\cD)\to \rC^1(K,\cD)$
defined as 
\begin{equation}
\label{Cea:12}
(\delta f)_b:= f_{\partial_0b} - f_{\partial_1b} +f_{|b|}= (\rd f)_b +f_{|b|} \ \ , \ \ b\in\Si_1(K) \ ,
\end{equation}
which is twist of the coboundary operator $\rd$ at the level of $0$-cochains. Observe that $\delta$, like $\rd$,   commutes  
with the $S$-action; so  $\delta:\rC^0(K,\cD)^S\to \rC^1(K,\cD)^S$. Concerning the grading  $(\ref{Cea:8})$ we have  that 
\begin{equation}
\label{Cea:13}
(\delta f)^{ev}_b = f_{|b|} \ \ , \ \ (\delta f)^{odd}_b = (\rd f)_{b} \ , \qquad b\in\Si_1(K) \ .  
\end{equation}
We now apply these results to  posets associated with the Minkowski spacetime and to  the circle. 
\begin{proposition}
\label{Cea:14}
Let $K$ be the set of double cones of the Minkowski spacetime $\mathbb{M}^4$.  Then 
there exists 
$f\in\rC^0(K,\cD)^{\mathscr{P^\uparrow_+}}$ such that $f_a\ne 0$ and is nonnegative for any $0$-simplex $a$. 
Furthermore, the  1-cochain $\delta f \in \rC^1(K,\cD)^{\mathscr{P^\uparrow_+}}$ satisfies the following properties 
\begin{itemize}
\item[(i)] $(\delta f)^{ev}_b= f_{|b|}\ne 0$ is non-negative for any 1-simplex $b$; 
\item[(ii)]  $(\delta f)^{odd}_b= f_{\partial_0b} - f_{\partial_1b}\ne 0$ for any 1-simplex $b$ such that $\partial_0b\ne\partial_1b$. 
\end{itemize}
\end{proposition}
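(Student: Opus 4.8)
The plan is to construct $f$ orbit by orbit, using the stabilizer structure recalled in Section \ref{Aaa}, and then to read off the two components of $\delta f$ from the identities (\ref{Cea:13}). Recall that $\mathscr{P}^\uparrow_+$ acts on the set of double cones with orbit space parametrized by the radius $R>0$, the stabilizer of the standard double cone $o_R$ being the rotation group $SO(3)$. By Proposition \ref{Cea:5} and the construction used in its proof (compare Remark \ref{Cea:6}), it therefore suffices to exhibit, for each $R>0$, a single nonzero $S_{o_R}$-invariant function $f_0^{(R)}\in\cD_{o_R}$ and to put $f_a:=f_0^{(R)}\circ s^{-1}$ whenever $a=s(o_R)$. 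The characterization (\ref{Cea:4}) then yields $f\in\rC^0(K,\cD)^{\mathscr{P}^\uparrow_+}$, with well-definedness guaranteed by the $SO(3)$-invariance of $f_0^{(R)}$ exactly as in the proof of Proposition \ref{Cea:5}.

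The representatives must be chosen with some care. For each $R$ a standard exhaustion/partition-of-unity argument produces $g\in\cD(X)$ with $g\geq 0$ and $g>0$ precisely on the open double cone $o_R$, so that $\supp g=cl(o_R)$; averaging $g$ over the compact group $SO(3)$, which leaves $o_R$ globally invariant, gives a nonnegative $SO(3)$-invariant function $f_0^{(R)}$ with unchanged support $cl(o_R)$ (the support is preserved precisely because $o_R$ is rotation invariant). Transporting covariantly, each $f_a=f_0^{(R)}\circ s^{-1}$ is nonnegative, nonzero, and satisfies $\supp f_a=s(cl(o_R))=cl(a)$. This already proves the first assertion and, through (\ref{Cea:13}), item (i): indeed $(\delta f)^{ev}_b=f_{|b|}$ with $|b|\in K$ a double cone, so $f_{|b|}$ is nonnegative and nonzero.

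The substance of the statement is item (ii), namely that $f$ separates distinct double cones, i.e.\ $f_{\partial_0 b}\neq f_{\partial_1 b}$ in $\cD(X)$ as soon as $\partial_0 b\neq\partial_1 b$; this is exactly where the choice $\supp f_a=cl(a)$ is used. Since the support is an invariant of the function, and since for double cones the map $a\mapsto cl(a)$ is injective (an open double cone is the interior of its closure, being convex), distinct cones $a_0\neq a_1$ force $\supp f_{a_0}=cl(a_0)\neq cl(a_1)=\supp f_{a_1}$, hence $f_{a_0}\neq f_{a_1}$. Taking $a_0=\partial_0 b$ and $a_1=\partial_1 b$ and invoking (\ref{Cea:13}) gives $(\delta f)^{odd}_b=f_{\partial_0 b}-f_{\partial_1 b}\neq 0$. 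I expect this separation to be the only real obstacle: an arbitrary invariant bump localized in $cl(a)$ need not distinguish a cone from a larger one containing it, and the remedy is to make the support fill $cl(a)$ so that it encodes $a$ itself. The sole technical point, the existence of a smooth nonnegative function whose support is exactly the (non-smooth) closure of a double cone, is routine and harmless to the $SO(3)$-averaging.
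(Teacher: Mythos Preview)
Your proof is correct and follows the same overall skeleton as the paper's---parametrize orbits by the radius $R$, use Proposition~\ref{Cea:5} (and Remark~\ref{Cea:6}) to reduce to an $SO(3)$-invariant choice on each $o_R$, and read off the two components of $\delta f$ from (\ref{Cea:13}).

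The difference lies in the choice of representative and, consequently, in the argument for (ii). The paper writes down the explicit product
\[
f_R(t,\mathbf{x}) \;=\; h\!\left(\tfrac{2t}{R}\right)\,h\!\left(\tfrac{4|\mathbf{x}|^2}{R^2}\right)
\]
for a bump $h$ supported in $[-1,1]$, and then simply asserts that (ii) ``easily follows from the definition of $f$''; the actual verification that $f_{a_0}\neq f_{a_1}$ for $a_0\neq a_1$ is left implicit and requires checking that no Poincar\'e transformation other than a rotation can fix the profile $f_R$. Your choice---take $f_0^{(R)}$ with $\supp f_0^{(R)}=cl(o_R)$ and average over $SO(3)$---turns (ii) into a one-line support argument: $\supp f_a=cl(a)$ and $a\mapsto cl(a)$ is injective on double cones, so distinct faces force distinct functions. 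This is cleaner and more robust for the purposes of the present proposition.

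What the paper's explicit formula buys is not visible here but in the subsequent Theorem~\ref{Ceb:9}, where the product structure of $f_R$ is exploited to factor the Fourier transform and control the zero set of $E_m(f_a)$ on the mass shell. Your abstract $f_0^{(R)}$ would not admit that computation, so if one continues to the non-triviality result one would have to revert to the paper's concrete choice at that point.
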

\begin{proof}
As observed in Section \ref{Aaa},   any double cone  lays in the orbit 
of a double cone $o_R$ whose base belongs  to the subspace  of the Minkowski spacetime at $t=0$,  is centred at  the origin of 
the reference frame and has radius equal to $R>0$. The stabilizer of any such double 
cone is the subgroup of spatial rotations $SO(3)$. 
So,  consider  a double cone $o_R$ and 
let $h:\mathbb{R}\to\mathbb{R}$ be a non-negative smooth function supported in $[-1,1]$. Define 
\begin{equation}
\label{Cea:14a}
 f_R(t,x,y,z):= h\left(2\cdot \frac{t}{R}\right) \cdot h\left(4\cdot \frac{x^2+y^2+z^2}{R^2}\right) \ .
\end{equation}
Here $f_R$ is a nonnegative  smooth function supported in $o_R$ that satisfies $f_R\circ s^{-1}=f_R$ for any 
$s\in SO(3)$. So, applying Proposition  \ref{Cea:5} and the Remark \ref{Cea:6},  $f_R$ defines   
an invariant 0-cochain $f$ such that $f_a\ne 0$ for any $a\in \Si_0(K)$. \\
\indent  As observed above \smash{$\delta f\in \rC^1(K,\cD)^{\mathscr{P^\uparrow_+}}$}. Using the equation (\ref{Cea:13})
 $(i)$   is obvious, while  $(ii)$ easily follows from the  definition of $f$. 
\end{proof}
The argument used in the proof of the previous proposition
cannot be applied to the circle with  the M\"obius group as a symmetry group.  The stabilizer of an interval 
acts transitively within the interval (see Section \ref{Aaa}). 
So there does not 
exist any invariant non-zero smooth function supported within the interval. Hence, according to Proposition \ref{Cea:5},   
there does not exist
any non-trivial,  M\"obius invariant 0-cochain. However this is not the case for 1-cochains.  
\begin{proposition}
\label{Cea:15}
Let $K$ be the set of open, connected intervals of $S^1$ having a proper closure.  Then there exists 
$f\in\rC^1(K,\cD)^{\mbox{M\"ob}}$  such that $f_b\ne 0$ and non-negative for any non-degenerate 1-simplex.
\end{proposition}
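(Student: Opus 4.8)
The plan is to reduce the statement to the orbit-wise construction already used in the proof of Proposition \ref{Cea:5} and strengthened in Remark \ref{Cea:6}, the crucial new geometric input being that the M\"obius group acts \emph{freely} on the non-degenerate $1$-simplices of $K$. Once this is established, one chooses, for each orbit, a non-negative seed function on the support of a representative and transports it around the orbit; the triviality of the stabilizers removes the invariance constraint that, as explained in the discussion preceding the proposition, obstructed the $0$-cochain case (where the stabilizer of a single interval already acts transitively on it).

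First I would compute the stabilizer $S_b$ of a $1$-simplex $b=(|b|;\partial_0b,\partial_1b)$. Since any $g\in S_b$ fixes $|b|$, we have $S_b\subseteq S_{|b|}$, and by the discussion in Section \ref{Aaa} the group $S_{|b|}$ is the one-parameter modular group of the interval $|b|$; geometrically this is the flow fixing the two endpoints of $|b|$ and moving every interior point of the open arc monotonically, so a nontrivial element of $S_{|b|}$ has no fixed point in the interior of $|b|$. Now if $b$ is non-degenerate, at least one face, say $\partial_0b$, is a \emph{proper} open subinterval of $|b|$ and therefore has at least one endpoint lying in the interior of $|b|$. Any $g\in S_b$ fixes $\partial_0b$ as a set; being orientation preserving and monotone on the arc, $g$ cannot interchange the endpoints of $\partial_0b$, hence it fixes that interior endpoint, forcing $g$ to be the identity. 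Thus $S_b=\{1\}$ for every non-degenerate $b$, while the only $1$-simplices with nontrivial stabilizer are the degenerate ones $\si_0a$, whose stabilizer is the full modular group of $a$.

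With freeness in hand I would build $f$ exactly as in the proof of Proposition \ref{Cea:5}. Pick a representative $y$ in each M\"obius orbit of $\Si_1(K)$ (using the axiom of choice); for every non-degenerate representative choose a \emph{non-negative, nonzero} function $f_0\in\cD_{|y|}$ (no symmetry constraint survives because $S_y=\{1\}$) and set $f_{s(y)}:=f_0\circ s^{-1}$, while on the degenerate orbits put $f=0$. This assignment is well defined precisely because $S_y$ is trivial, it lies in $\rC^1(K,\cD)^{\mbox{M\"ob}}$ by the same computation as in Proposition \ref{Cea:5}, and $f_b=f_0\circ s^{-1}$ is supported in $cl(s(|y|))=cl(|b|)$, so $f_b\in\cD_{|b|}$. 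Since $s^{-1}$ is a diffeomorphism, $f_b\geq 0$ and $f_b\neq 0$ on every non-degenerate $b$, as required; no summation across orbits is needed because each simplex belongs to exactly one orbit.

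The heart of the argument, and the only step that genuinely distinguishes $S^1$ from the $0$-cochain situation, is the stabilizer computation: showing that the freeness of the modular flow on the interior of $|b|$ collapses $S_b$ to the identity as soon as one face is a proper subinterval. Everything else is the formal orbit construction, with non-negativity preserved automatically under precomposition with diffeomorphisms. (Should one work with a M\"obius group containing orientation-reversing elements, the same analysis leaves at most a finite reflection stabilizer, which still admits an invariant non-negative seed, so the conclusion is unaffected.)
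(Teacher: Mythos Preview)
Your proposal is correct and follows essentially the same approach as the paper: set $f=0$ on degenerate $1$-simplices, observe that the M\"obius group acts freely on non-degenerate $1$-simplices, and then invoke the orbit-by-orbit construction of Proposition \ref{Cea:5} and Remark \ref{Cea:6} with a non-negative seed. The paper merely asserts the freeness of the action and refers back to Section \ref{Ba}, whereas you supply the explicit stabilizer computation via the modular flow on $|b|$; this is a welcome addition but not a different strategy.
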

\begin{proof}
The M\"obius group acts freely on non-degenerate 1-simplices (see Section \ref{Ba} for the definition of a degenerate 1-simplex).
Therefore, we set $f_b=0$ if $b$ is a degenerate 1-simplex. For non-degenerate 1-simplices, we choose a representative 
element $y\in\Si_1(K)$ for any orbit  and take a non-zero and non-negative smooth function $f_0$ supported within $|y|$. 
Then proof follows 
from  Proposition \ref{Cea:5}  and Remark \ref{Cea:6}. 
\end{proof}

\subsubsection{From quantum fields to causal  and continuously covariant connections}
\label{Ceb}
We now prove the existence of continuously covariant representations of the net of causal loops 
in the case of the Minkowski spacetime. This will be done by showing that any scalar quantum field 
satisfying the Wightman axioms, and other two standard properties,  
induces a causal  and continuously covariant connection.  For brevity, we shall outline the features of a 
quantum field  strictly necessary for the aims of the  present paper  and refer the reader to the standard textbooks  
in axiomatic quantum field theory \cite{SW,BLT,GJ,RS,Ara,Haa}  for a complete description  of the  Wightman axioms. \smallskip

Let $(\Phi,\Gamma,\cH)$ be an Hermitian  scalar quantum field  in the Minkowski spacetime $\mathbb{M}^4$ 
satisfying the Wightman axioms.
The field $\Phi$ is an  operator valued distribution defined on a dense domain $D$ of a separable Hilbert space $\cH$: 
a linear mapping  $f\mapsto \Phi(f)$ associating an unbounded operator $\Phi(f)$  to any  
Schwartz function $f\in\cS(\mathbb{R}^4)$ such that $\Phi(f)$ and its adjoint $\Phi(f)^*$ map $D$ into $D$.
$\Gamma$  is a strongly continuous unitary representation of  the  Poincar\'e  group 
\smash{$\mathscr{P}^{\uparrow}_{+}$} on $\cH$ implementing  the covariance of the field: 
\begin{equation}
\label{Ceb:0}
\Gamma_s:D\to D \ \ , \ \ 
\mathrm{Ad}_{\Gamma_s}(\Phi(f)) D = \Phi(f\circ s^{-1}) D \ , \qquad  s\in\mathscr{P}^{\uparrow}_{+} \ , \  
f\in\cS(\mathbb{R}^4) \ . 
\end{equation}
%
%
In what follows we focus on two properties  that are a natural strengthening of the axioms and that are relevant for our aims. 
First, we assume that the field is \emph{self-adjoint},  namely 

\begin{itemize}
\item[{\bf P1.}]  $\Phi(f)$ is an essentially self-adjoint operator on the domain $D$, for any real Schwartz function $f\in\cS_\rr(\mathbb{R}^4)$.  
\end{itemize}
From now on, we shall work with the closure of  $\Phi(f)$  that, 
with a little abuse of notation, we shall denote by the same symbol. 
Using  {\bf P1}  we can define the exponential $\exp(i\Phi(f))$ for any  $f\in \cS_\rr(\mathbb{R}^4)$. It turns out that 
$\exp(i\Phi(f))$  is a unitary operator  of $\cH$  satisfying the relations    
\begin{equation}
\label{Ceb:1}
\exp(i\Phi(-f))  = \exp(i\Phi(f))^* \  \  \ \  \mbox{and}  \ \ \ \ 
\mathrm{Ad}_{\Gamma_s}( \exp(i\Phi(f)))  = \exp(i\Phi(f\circ s^{-1})) \ , 
\end{equation}
for any $f\in\cS_{\rr}(\mathbb{R}^4) $ and for any $s\in \mathscr{P}^{\uparrow}_{+}$; the latter relation, in particular, 
derives from (\ref{Ceb:0}). Secondly,  we assume the property 
of \emph{strong causality} of the field, namely
\begin{itemize} 
\item[{\bf P2.}]  For any pair $f,g\in\cS_{r}(\mathbb{R}^4) $ 
\begin{equation}
\label{Ceb:3}
supp(f)\perp supp(g) \ \ \ \Rightarrow \ \ \ \ 
             [\exp(i\Phi(f)), \exp(i\Phi(g))]= 0 \ .   
\end{equation}
\end{itemize}  
We stress that {\bf P1} and {\bf P2} are not required in the usual scheme 
of the Wightman axioms. However,  these two properties are verified by any scalar field satisfying the Osterwalder-Schrader 
axioms  \cite{GJ} and, in particular, by  any known example of a Hermitian scalar quantum field. \smallskip

We now are ready to prove the main results of this section.
\begin{theorem}
\label{Ceb:6}
Let $(\Phi,\Gamma,\cH)$ be a Hermitian scalar quantum field in the Minkowski spacetime $\mathbb{M}^4$
satisfying the Wightman axioms and the  properties ${\bf P_1}$ and ${\bf P_2}$. 
Let $K$ be the set of double cones in the Minkowski spacetime.  
Given a non-zero $f\in\rC^1(K,\cD)^{\mathscr{P}^\uparrow_+}$, define  
\begin{equation} 
\label{Ceb:7}
 [\Phi f](b):= \exp\left\{i \int_{b_\circledcirc }|f^{ev}_b(x)| \rd^4x \cdot \Phi\left(f^{odd}_b \right)\right\}  \ , \qquad b\in\Si_1(K) \ , 
\end{equation}
 where $\rd^4x$ is the Lebesgue measure of $\mathbb{R}^4$ and $b_\circledcirc:= |b| - ( \partial_1b \cup \partial_0 b)\subseteq X$ shall be called \textbf{the corona of} $b$.   
Then the pair $([\Phi f],\Gamma)$ is a causal and continuously $\mathscr{P}^\uparrow_+$-covariant  connection of $K$. In particular $[\Phi f](b)=1$ for any $b\in\rN_1(K)$. 
\end{theorem}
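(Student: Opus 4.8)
The plan is to verify, in turn, the three defining conditions of a causal and covariant connection in Definition \ref{Cb:1}, after first checking that each $[\Phi f](b)$ is a well-defined unitary and disposing of the nerve simplices. First I would note that $f^{odd}_b$ is a real-valued, compactly supported smooth function and that $\lambda_b:=\int_{b_\circledcirc}|f^{ev}_b(x)|\,\rd^4x$ is a non-negative real scalar; hence $\lambda_b f^{odd}_b\in\cS_\rr(\mathbb{R}^4)$ and, by ${\bf P1}$, $\Phi(\lambda_b f^{odd}_b)=\lambda_b\,\Phi(f^{odd}_b)$ is essentially self-adjoint, so $[\Phi f](b)=\exp(i\lambda_b\Phi(f^{odd}_b))$ is unitary. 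For $b\in\rN_1(K)$ one has $|b|=\partial_0 b\supseteq\partial_1 b$, so the corona $b_\circledcirc=|b|-(\partial_0 b\cup\partial_1 b)=\emptyset$ and $\lambda_b=0$; therefore $[\Phi f](b)=1$. This simultaneously proves the final assertion and shows that the formula agrees on $\rN_1(K)$ with the canonical extension of a connection described after Definition \ref{Cb:1}.

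For property (1) I would use that $f^{ev}_{\overline b}=f^{ev}_b$ and $(\overline b)_\circledcirc=b_\circledcirc$, giving $\lambda_{\overline b}=\lambda_b$, together with $f^{odd}_{\overline b}=-f^{odd}_b$ from (\ref{Cea:10}). Then $[\Phi f](\overline b)=\exp(i\Phi(-\lambda_b f^{odd}_b))=[\Phi f](b)^{-1}$ by the first identity in (\ref{Ceb:1}) (the exponentials being unitary). For property (2) I would invoke the $S$-invariance of the cochain in the form (\ref{Cea:4}), i.e.\ $f_{s(b)}=f_b\circ s^{-1}$, which yields $f^{ev}_{s(b)}=f^{ev}_b\circ s^{-1}$ and $f^{odd}_{s(b)}=f^{odd}_b\circ s^{-1}$. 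Since $(s(b))_\circledcirc=s(b_\circledcirc)$ and every $s\in\mathscr{P}^\uparrow_+$ preserves the Lebesgue measure, the change of variables $x=s(y)$ gives $\lambda_{s(b)}=\lambda_b$. The second identity in (\ref{Ceb:1}) then produces
\[
\mathrm{Ad}_{\Gamma_s}\big([\Phi f](b)\big)=\exp\big(i\,\Phi(\lambda_b f^{odd}_b\circ s^{-1})\big)=[\Phi f](s(b)) \ ,
\]
which is property (2); strong continuity is inherited directly from that of $\Gamma$, so the connection is continuously covariant.

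The last and most delicate point is causality, property (3). After extending $[\Phi f]$ multiplicatively from $\rT_1(K)$ to $\rF(K)$, exactly as in the proof of Lemma \ref{Cb:2}, it suffices to check commutativity on generators: taking reduced representatives of loops $w_1\perp w_2\in\rL(K)$, by (\ref{Bb:4}) there are double cones $o_1\perp o_2$ with $|w_1|\le o_1$ and $|w_2|\le o_2$, so every generator $b$ of $w_1$ satisfies $\supp(f^{odd}_b)\subseteq cl(|b|)\subseteq cl(o_1)$ and every generator $b'$ of $w_2$ satisfies $\supp(f^{odd}_{b'})\subseteq cl(o_2)$. As $o_1\perp o_2$, these supports are spacelike separated, whence ${\bf P2}$ forces $[\Phi f](b)$ and $[\Phi f](b')$ to commute; commutation of $[\Phi f](w_1)$ and $[\Phi f](w_2)$ follows.

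I expect the support bookkeeping in this last step to be the main obstacle: one must ensure that the test functions attached to the generators of the two loops genuinely have causally disjoint supports, i.e.\ that $cl(o_1)$ and $cl(o_2)$ are spacelike separated whenever $o_1\perp o_2$ in the poset of double cones, so that ${\bf P2}$ — which is phrased for spacelike-separated supports — applies generator by generator. The remaining verifications (unitarity, the involution, covariance, and the measure-invariance change of variables) are routine given (\ref{Ceb:1}) and the $S$-invariance relation (\ref{Cea:4}).
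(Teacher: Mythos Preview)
Your proposal is correct and follows essentially the same route as the paper: verify $[\Phi f](b)=1$ on the nerve via $b_\circledcirc=\emptyset$, deduce property~(1) from (\ref{Cea:10}) and (\ref{Ceb:1}), deduce property~(2) from (\ref{Cea:4}) together with Poincar\'e-invariance of the Lebesgue measure and $s(b)_\circledcirc=s(b_\circledcirc)$, and obtain causality from ${\bf P2}$. You actually spell out the causality step (reduction to generators with supports in $cl(o_1)\perp cl(o_2)$) more carefully than the paper, which disposes of it in a single clause; the closure subtlety you flag is not addressed there either.
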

\begin{proof}
 Clearly $[\Phi f](b)=1$ when $b\in\rN_1(K)$, because in this case $b_\circledcirc=\emptyset$. Since 
$f^{odd}_{\overline{b}}=- f^{odd}_{b}$, $f^{ev}_{\overline{b}}=f^{ev}_{b}$  and  
$\overline{b}_\circledcirc = b_\circledcirc$, by  (\ref{Ceb:1})
we have that $[\Phi f](\overline b) = [\Phi f](b)^*$. This proves that $[\Phi f]$ is a connection 1-cochain.    Causality follows by property $\mathbf{P_2}$. About covariance, since 
\smash{$f^{ev}, f^{odd} \in\rC^1(K,\cD)^{\mathscr{P}^\uparrow_+}$},   relation (\ref{Cea:4}) implies that 
$ f^{ev}_{s(b)}=  f^{ev}_{b}\circ s^{-1}$   and   $f^{odd}_{s(b)}=  f^{odd}_b\circ s^{-1} $ for any  1-simplex $b$ and any Poincar\'e transformation  $s$.   
These relations, the equation (\ref{Ceb:1}) and  the invariance of the Lebesgue measure 
under transformations of the Poincar\'e group, imply that the pair $([\Phi f],\Gamma)$ is a causal and continuously covariant 
connection of $K$. In fact,  since $s(b_\circledcirc)= s(b)_\circledcirc$ for any 1-simplex $b$ and any Poincar\'e transformation 
$s$,   we have that 
$\int_{b_\circledcirc }|f^{ev}_b(x)| \rd^4x = \int_{s(b)_\circledcirc}|f^{ev}_b(s^{-1}(x))|  \rd^4x  = 
\int_{s(b)_\circledcirc}|f^{ev}_{s(b)}(x)|  \rd^4x$. Then     
\begin{align*}
\mathrm{Ad}_s([\Phi f](b)) & = 
\exp\left\{i \int_{b_\circledcirc }|f^{ev}_b(x)| \rd^4x \cdot \Phi\left(f^{odd}_b\circ s^{-1} \right)\right\} \\ & =   
\exp\left\{i \int_{s(b)_\circledcirc }|f^{ev}_{s(b)}(x)| \rd^4x \cdot \Phi\left(f^{odd}_{s(b)} \cdot \right)\right\} = 
 [\Phi f](s(b))  \ , 
\end{align*}
and this completes the proof. 
\end{proof}
Hence any Hermitian scalar quantum field defines a causal and continuously $\mathscr{P}^\uparrow_+$-covariant connection  
and  this  defines, in turn,  a continuously  \smash{$\mathscr{P}^\uparrow_+$}-covariant representation of the net of causal loops. 
However the general hypothesis of the previous theorem  
cannot exclude the triviality of this construction, i.e.\  $\Phi$ might annihilate on the 1-cochain $f^{odd}$ even  
if  $f^{odd}$ is not trivial like, for instance,  that constructed in Proposition \ref{Cea:14}. \\
\indent We avoid this eventuality in the case of  
the free Hermitian scalar field $(\Phi_m,\Gamma_m,\cH_m)$, of mass $m\geq 0$, in $\mathbb{M}^4$.    
This field  satisfies all the properties outlined above, and
we briefly recall its definition following closely  the reference \cite{RS}. 
The Hilbert space $\cH_m$ is the symmetric Fock  space associated to the 1-particle Hilbert space $L^2(\rH_m,\rd\Omega_m)$ where 
$\rH_m$ is the hyperboloid of mass $m\geq 0$ of $\mathbb{R}^4$ and $\rd\Omega_m$ is the  
\smash{$\mathscr{L}^\uparrow_+$}-invariant measure on $\rH_m$.
$\Gamma_m$ is a strongly continuous unitary representation of the Poincar\'e group \smash{$\mathscr{P}^\uparrow_+$} on $\cH_m$, and the generators of the translation subgroup  
have joint spectrum on the closure of the forward lightcone 
\smash{$\overline{V}_+$}.  The field is defined as  
$\Phi_m(f):= \Phi_S(E_m(\Re(f))) +i\Phi_S(E_m(\Im(f)))$  for any  test function$f\in\cS(\mathbb{R}^4)$, 
where $\Phi_S$ is the Segal quantization of $L^2(\rH_m,\rd\Omega_m)$  and
$E_m:\cS(\mathbb{R}^4)\to L^2(\rH_m,\rd\Omega_m)$ is defined as 
\begin{equation}
\label{Ceb:8}
E_m(f):= (2\pi)^{-2}\int e^{i\, p\cdot x} f(x) \rd^4x \restriction \rH_m \ ,  
\end{equation}
where $p\cdot x= p_0t - \mathbf{p}\cdot\mathbf{x}$. So $E_m(f)$ is nothing but the restriction  to the hyperboloid of mass $m$ of the  Fourier transformation of $f$.  \\
\indent  We now are ready to prove the existence of non-trivial,  causal and continuously 
$\mathscr{P}^\uparrow_+$-covariant connections. 
\begin{theorem}
\label{Ceb:9}
Let $K$ be the set of double cones in the Minkowski 
spacetime. Then:  
\begin{itemize}
\item[(i)]  there exists a non-flat,   causal and continuously $\mathscr{P}^\uparrow_+$-covariant connection of $K$; 
\item[(ii)]  the net of causal loops $(\cA,\mathrm{id},\alpha)_K$ has a non-trivial and  continuously 
$\mathscr{P}^\uparrow_+$-covariant representation.  
\end{itemize}
\end{theorem}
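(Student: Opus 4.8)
The plan is to produce the required connection and representation by feeding the \emph{free} Hermitian scalar field into the machinery of Theorem \ref{Ceb:6}, and then to isolate a single loop whose holonomy witnesses non-flatness and non-triviality simultaneously. First I would take the free field $(\Phi_m,\Gamma_m,\cH_m)$ of mass $m\geq 0$, which satisfies the Wightman axioms together with ${\bf P1}$ and ${\bf P2}$, and the invariant $1$-cochain $f=\delta g\in\rC^1(K,\cD)^{\mathscr{P}^\uparrow_+}$ produced in Proposition \ref{Cea:14}, so that $f^{ev}_b=g_{|b|}\geq 0$ and $f^{odd}_b=g_{\partial_0b}-g_{\partial_1b}$. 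By Theorem \ref{Ceb:6} the pair $([\Phi_m f],\Gamma_m)$ is already a causal and continuously $\mathscr{P}^\uparrow_+$-covariant connection of $K$, with $[\Phi_m f](b)=W(\lambda_b\,f^{odd}_b)$, where $W(h):=\exp(i\Phi_m(h))$ and $\lambda_b:=\int_{b_\circledcirc}g_{|b|}(x)\,\rd^4x\geq 0$.

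Via Lemma \ref{Cb:2} and the embedding $\rho_K$ of diagram (\ref{Ca:1}), this connection induces a continuously $\mathscr{P}^\uparrow_+$-covariant representation of the net $(\cA,\mathrm{id},\alpha)_K$ whose value on a loop $p$ is the holonomy $[\Phi_m f](p)$. Thus both assertions reduce to producing one loop with non-trivial holonomy: since the poset of double cones is upward directed, hence simply connected (cf.\ \cite{Ruz}), a flat connection would have trivial holonomy on every loop and would induce the trivial representation. Consequently a single loop $p$ with $[\Phi_m f](p)\neq\mathbbm{1}$ yields at once the non-flatness in $(i)$ and the non-triviality in $(ii)$.

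For the witnessing loop I would take two $1$-simplices $b_1,b_2\in\rT_1(K)$, both running from a double cone $o$ to a spacelike translate $a$ of $o$, but with \emph{different} supports $|b_1|\neq|b_2|$, and set $p:=\overline{b}_2\,b_1:o\to o$. Writing $k:=g_a-g_o$ and using $f^{odd}_{b_i}=k$, $f^{odd}_{\overline b_2}=-k$, $\lambda_{\overline b_2}=\lambda_{b_2}$, the Weyl form of the canonical commutation relations $W(h_1)W(h_2)=e^{-\frac{i}{2}\Delta_m(h_1,h_2)}W(h_1+h_2)$ (with $\Delta_m$ the Pauli--Jordan commutator function of $\Phi_m$) gives
\[
[\Phi_m f](p)=W(-\lambda_{b_2}k)\,W(\lambda_{b_1}k)=W\big((\lambda_{b_1}-\lambda_{b_2})\,k\big),
\]
the commutator phase vanishing by antisymmetry of $\Delta_m$. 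The supports being different, I would arrange $\lambda_{b_1}\neq\lambda_{b_2}$ by choosing the two enclosing double cones of different size, so that $(\lambda_{b_1}-\lambda_{b_2})k\neq 0$ as a test function; it then remains only to verify $\Phi_m\big((\lambda_{b_1}-\lambda_{b_2})k\big)\neq 0$, i.e.\ $W\big((\lambda_{b_1}-\lambda_{b_2})k\big)\neq\mathbbm{1}$.

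The hard part is precisely this last non-degeneracy: for the free field $\Phi_m(k)=0$ if and only if $E_m(k)=0$, that is, the Fourier transform $\widehat{g_a-g_o}$ vanishes identically on the mass hyperboloid $\rH_m$. I would dispose of this by analyticity: taking $a$ to be the spacelike translate of $o$ by $\xi$ gives $\widehat{g_a-g_o}(p)=\widehat{g_o}(p)\,(1-e^{-i\,p\cdot\xi})$, and since $g_o$ is compactly supported its Fourier transform is entire (Paley--Wiener) and not identically zero, while $\widehat{g_o}(0)=\int g_o>0$. Exploiting the freedom in the radius $R$ and the profile $h$ entering $g_{o_R}$ (so that $\widehat{g_o}$ is non-zero on a neighbourhood meeting $\rH_m$), and noting that $\{p\in\rH_m:\,p\cdot\xi\in 2\pi\bZ\}$ has $\rd\Omega_m$-measure zero, one obtains $\widehat{g_a-g_o}\neq 0$ on a set of positive measure in $\rH_m$, whence $E_m(k)\neq 0$ and $[\Phi_m f](p)\neq\mathbbm{1}$. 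This establishes the non-trivial holonomy and therefore both $(i)$ and $(ii)$.
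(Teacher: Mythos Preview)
Your proposal is correct and follows the same overall strategy as the paper: feed the free scalar field $(\Phi_m,\Gamma_m,\cH_m)$ and the invariant cochain $\delta g$ of Proposition \ref{Cea:14} into Theorem \ref{Ceb:6}, then use Paley--Wiener analyticity and the translation trick $\widehat{g_a-g_o}(p)=\widehat{g_o}(p)(e^{ip\cdot\xi}-1)$ to show the relevant test function does not die on the mass shell. The analytic core is the same in both proofs; the paper is slightly more explicit about the zero-set analysis of $\widehat{g_{o_R}}$ (factorizing it as a product of entire functions in $p_0$ and $|\mathbf p|$), whereas you invoke Paley--Wiener more summarily.

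Where you genuinely diverge is in the verification of non-flatness and of (ii). The paper argues non-flatness rather informally, by observing that $[\Phi_m\delta f](b)$ depends on $|b|$ through the corona integral, and then invokes Lemma \ref{Cb:2} to pass from (i) to (ii). You instead compute an explicit holonomy: picking two $1$-simplices $b_1,b_2$ with the same faces but different supports, you use the Weyl relations to obtain $[\Phi_m f](\overline{b}_2\,b_1)=W\big((\lambda_{b_1}-\lambda_{b_2})k\big)\neq\mathbbm{1}$, and then appeal to the simple connectedness of the double-cone poset to conclude both non-flatness and non-triviality of the net representation at once. This is a cleaner and more self-contained argument: it makes the failure of the $1$-cocycle identity concrete (rather than heuristic), and it directly exhibits a non-trivial element in a local algebra $\cA_{o'}$, which is exactly what (ii) requires. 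The paper's route is shorter to state but leaves more to the reader; yours buys a sharper logical link between the two assertions at the cost of the extra Weyl-CCR computation.
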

\begin{proof}
$(ii)$ follows from  $(i)$  and Lemma \ref{Cb:2}. So let us prove $(i)$.  Consider the free Hermitian scalar field 
$(\Phi_m,\Gamma_m,\cH_m)$. Let $f$ be the invariant 
$0$-cochain constructed in Proposition \ref{Cea:14}. As observed in that proposition 
the twisted cobord $\delta f$ is an element of \smash{$\rC^1(K,\cD)^{\mathscr{P}^\uparrow_+}$}, 
$\delta f^{odd}_b=f_{\partial_0b}-f_{\partial_1b}$ is different from zero on 1-simplices 
having different faces, and $\delta f^{ev}_b=f_{|b|}$ is different from zero for any 1-simplex. 
Using the formula (\ref{Ceb:7}), define   
\[
 [\Phi_m \delta f](b):= 
 \exp\left\{i \int_{b_\circledcirc }|(\delta f)^{ev}_b(x)| \rd^4x \cdot \Phi_m\left((\delta f)^{odd}_b \right)\right\}  
 \ , \qquad 
 b\in\Si_1(K) \ . 
\]
By Theorem \ref{Ceb:6}, the pair $([\Phi_m \delta f],\Gamma_m)$ is a 
causal and continuously $\mathscr{P}^\uparrow_+$-covariant connection.  
We now want to show that  this connection is not trivial. This amounts to showing that 
$\Phi_m((\delta f)^{odd}_b)\ne 0$ or, equivalently, that $E_m((\delta f^{odd})_b)\ne 0$ 
for some 1-simplex $b$ which is does not belong to the nerve of $K$.  \\
\indent We start by showing that $E_m(f_a)\ne 0$ 
for any 0-simplex $a$. Afterwards we shall prove that there are 1-simplices $b$ such that 
$E_m((\delta f^{odd})_b)\ne 0$. Recall that  the 0-cochain $f$ is obtained 
by letting act  the Poincar\'e group on the set of functions $f_R$, $R>0$ which are associated to double cones 
whose base lays on the subspace at $t=0$  of the reference frame of the Minkowski spacetime. Recalling the definition of 
these  functions (\ref{Cea:14a}) we observe that  
\[
\hat f_R(p):=\int e^{i\, p\cdot x} f_R(x) \rd^4x  = f'_R(p_0) \cdot  f''_R(\mathbf{p}) \ , 
\]
where 
\[
  f'_R(p_0):=  \int e^{ip_0t} h(2t/R) \rd t\ \  , \ \   f''_R({\bf p}):=  \int e^{-i {\bf p\cdot x}} h(4 |{\bf x}|^2/R^2) \rd^3 x \ . 
\]
Let us study  the zeroes of the function $\hat f_R$. Note that  $p$ is a zero of  $\hat f_R$  if, and only if,  either $p_0$ is a zero of 
$f'_R$ or $\textbf{p}$ is a zero of $f''_R$. 
Observe that both $f'_R$ and $f''_R$  are holomorphic entire functions, since $h$ is a compactly supported smooth function. 
Therefore  the zeroes of $f'$ are at most a countable set 
of $\mathbb{R}$.  Instead,  since $f''_R$ is, up to a factor,  the Fourier transform  of a spherically symmetric function,   
if $f''_R({\bf q})=0$ then $f''_R({\bf p})=0$ for any ${\bf p}\in\mathbb{R}^3$ such that $|{\bf p}|=|{\bf q}|= r$. 
So the zeroes are distributed on spherical surfaces and the collection $Z$ of the radius $r>0$ of such surfaces,    
cannot have an accumulation point. 
In fact, if it were so,  being  the function  $z\mapsto f''_R(z,0,0)$  holomorphic entire, since 
$f''_R(r,0,0)=0$ for any $r\in Z$, we should have that $f''_R(z,0,0)=0$ for any  $z\in\mathbb{C}$. 
This implies that 
any $r\geq 0$ belongs to $Z$, so $f''_R=0$ and this leads to a contradiction because 
$f''_R$ is the Fourier transform of a non-zero smooth function. In conclusion we have that 
$E_m(f_R)\ne 0$ for any $R>0$, because $\hat f_R=0$, at most  on a countable  collection  of measure zero subsets of  $\rH_m$, associated to the  
zeroes of $f'_R$ and of $f''_R$. This also implies that $E_m(f_a)\ne 0$ in $L^2(\rH_m,\rd\Omega_m)$ since 
the 0-cochain $f$ is obtained by letting act the Poincar\'e group of the function $f_R$ for any $R>0$.\\ 
\indent Consider now the following 1-simplex $b$: $\partial_1b:=o_R$;  $\partial_0b:=o_R+y$ with $0\ne y\in\mathbb{R}^4$;
$|b|$ is a  double cone greater than $\partial_1b\cup\partial_0b$ .
Since $f_{\partial_0b}(x)= f_R(x-y)$, we have that  
\[
\hat f_{\partial_0b} (p)=\int e^{i\, p\cdot x} f_R(x-y) \rd^4x  = e^{ip\cdot y} \hat f_R(p)=  e^{ip\cdot y} \hat f_{\partial_1b}(p) \ . 
\]
Hence 
\[
E_m((\delta f)^{odd}_b) =  E_m(f_{\partial_0b}- f_{\partial_1b})= E_m(f_{\partial_1b})\cdot (e^{ip\cdot y}-1)\restriction \rH_m  
\]
and this is not zero in $L^2(\rH_m,\rd\Omega_m)$.  This shows that the connection 1-cochain 
$ [\Phi_m \delta f]$ is not trivial. In addition  
$ [\Phi_m \delta f]$  results to be not flat, i.e.\ does not satisfy the 1-cocycle identity on all $\Si_1(K)$. In fact  
$ [\Phi_m \delta f](b)$  depends in general  on the support $|b|$ of the 1-simplex $b$ through the factor 
$\int_{b_\circledcirc }|(\delta f)^{ev}_b| \rd\mu$, according to the definition of $b_\circledcirc$.
\end{proof}
\begin{remark}
\label{Ceb:10}
The results of this section cannot be fully applied to the case of the circle $S^1$ 
with respect to M\"obius group. 
In fact, even if Proposition \ref{Cea:15} shows that the space \smash{$\rC^1(K,\cD)^{\mbox{M\"ob}}$} is not zero, 
there does not exist a M\"obius-invariant measure on $S^1$, so the formula 
(\ref{Ceb:7}) cannot be applied.  
There are two ways to avoid this problem.
The first one is to consider the rotation subgroup of M\"ob  as a symmetry group,
which clearly admits an invariant measure.
The second approach consists in modifying (\ref{Ceb:7}):
as a first step, we consider \smash{$f \in \rC^1(K,\cD)^{\mbox{M\"ob}}$} and
make the cutoff
\[
\tilde{f}_b := 
\left\{
\begin{array}{ll}
0 \ \ , \ \ b \in \rN_1(K)
\\
f_b \ \ , \ \ b \in \rT_1(K)
\ .
\end{array}
\right.
\]
The above definition is well-posed since $\rN_1(K)$ and $\rT_1(K)$ are globally stable
under the ${\mbox{M\"ob}}$-action and have void intersection, so there is no risk to
define $\tilde{f}$ in different ways over some 1-simplex.
This allows us to define
\[
[\Phi f]_{odd}(b)
\ := \ 
\exp\left\{i \Phi \left( \tilde{f}^{odd}_b \right) \right\}  
\ , \qquad 
b\in\Si_1(K) \ .
\]
The above expression yields a causal and covariant connection which is expected to be 
non-trivial when $f^{odd}$ is nonzero on $\rT_1(K)$. The price to pay for the construction
of $[\Phi f]_{odd}$ is the loss of any explicit information on the support (corona) of $b$.
In conclusion, non-triviality of the invariant subspace of $\rC^1(K,\cD)$ under the symmetry group action 
seems to be the only strong hypothesis in Theorem \ref{Ceb:6}, under which
we expect that a version of the  result holds in any spacetime where a causal, selfadjoint 
Wightman field can be defined.
\end{remark}

\section{Concluding remarks}

In this paper we gave a model independent construction  of a causal  and covariant net of $\rC^*$-algebras over a spacetime,
called the net of causal loops. 
This is constructed using as input  only the underlying spacetime; more precisely, 
a base $K$ of the topology of the spacetime encoding the causal and the symmetry structure. 
Generators of the local algebras of the net are free groups of loops of the  poset  $K$ (the base $K$ ordered under  inclusion). \\
%
%
\indent We showed that representations of nets of causal loops appear associated with quantum fields, 
under limitations that seems to be imposed only by the geometry of the spacetime and the symmetry group.
Two remarkable examples have been constructed: the case of Minkowski spacetime and the case of $S^1$. 
In the Minkowski spacetime, this also has led to the prove of the existence of covariant representations 
satisfying the spectrum condition. Other insights could come from the theory of subsystems \cite{CDR,CC1,CC2} 
since the net of causal loops defined in such representations  turns out to be a subnet of the net generated by the
given quantum field. 
Furthermore, having in mind applications to gauge theories,  refining the construction associating quantum fields to representations of the net of causal loops, we hope to produce more interesting representations at the physical level. In particular, the notion of connection system may be
a good starting point for a definition of local gauge transformation in the setting of algebraic quantum field theory.\\
\indent At the mathematical level, a related question is to classify the above mentioned representations in terms
of the equivariant cohomology of the complex defined by the nets of test functions $C^*(K,\cD)^S$, $*=1,2,\ldots$  
and, in particular, to give conditions for irreducibility in terms of properties of the translator.
This could give some light on the ``charge" structure exhibited by nets of causal loops.\\ 
\indent  A further  interesting direction of research is, in the case the spacetime has a topological symmetry group, 
the definition of a topology on the free group of loops making continuous the action of the symmetry.  
This could allow a sort of \emph{Mackey's  induction}  in the sense that,  for a given representation of the symmetry  
group,   a continuously covariant representation of the group of loops (i.e.\  a continuously covariant representation of the net of causal loops) should be induced.  
Up to now, the authors obtained partial results on this topic, 
that presents difficulties arising  from its combinatorial free group aspect.

\
\\[5pt]
\noindent {\small \textbf{Acknowledgements.} The authors would like to thank F. Fidaleo, D. Guido and G. Morsella for 
 many fruitful discussions on the topics treated in this paper.}

\end{document}